\documentclass[a4paper,USenglish,cleveref, autoref, thm-restate]{lipics-v2021}


\usepackage{enumerate}
\usepackage{amsmath}
\usepackage{todonotes}
\usepackage{xspace}
\usepackage{pdfpages}

\newcommand{\defparprob}[4]{
  \vspace{1mm}
\noindent\fbox{
  \begin{minipage}{0.96\textwidth}
  \begin{tabular*}{\textwidth}{@{\extracolsep{\fill}}lr} #1  & {\bf{Parameter:}} #3 \\ \end{tabular*}
  {\bf{Input:}} #2  \\
  {\bf{Question:}} #4
  \end{minipage}
  }
  \vspace{1mm}
}

\newcommand{\defprob}[3]{
  \vspace{1mm}
\noindent\fbox{
  \begin{minipage}{0.96\textwidth}
  \begin{tabular*}{\textwidth}{@{\extracolsep{\fill}}lr} #1   \\ \end{tabular*}
  {\bf{Input:}} #2  \\
  {\bf{Question:}} #3
  \end{minipage}
  }
  \vspace{1mm}
}

\ifdefined\DEBUG{}
\newcommand{\mic}[1]{{\color{blue}{#1}}}
\def\rem#1{{\marginpar{\raggedright\scriptsize #1}}}
\newcommand{\micr}[1]{\rem{\textcolor{blue}{\(\bullet \) #1}}}
\newcommand{\bmp}[1]{{\color{red}{#1}}}
\newcommand{\bmpr}[1]{\rem{\textcolor{red}{\(\bullet \) #1}}}
\newcommand{\sk}[1]{{\color{magenta}{#1}}}
\newcommand{\skr}[1]{\rem{\textcolor{magenta}{\(\bullet \) #1}}}
\else
\newcommand{\mic}[1]{#1}
\newcommand{\micr}[1]{}
\newcommand{\bmp}[1]{#1}
\newcommand{\bmpr}[1]{}
\newcommand{\sk}[1]{#1}
\newcommand{\skr}[1]{}
\fi
\newcommand{\depr}[1]{}

\newcommand{\ssfull}{\textsc{Subset Sum}\xspace}
\newcommand{\ssshort}{\textsc{SS}\xspace}
\newcommand{\erbds}{\textsc{Exact Red-Blue Dominating Set}\xspace}
\newcommand{\erbdsshort}{\textsc{ERBDS}\xspace}
\newcommand{\rbds}{\textsc{Red-Blue Dominating Set}\xspace}
\newcommand{\rbdsshort}{\textsc{RBDS}\xspace}
\newcommand{\eewc}{\textsc{Exact-Edge-Weight Clique}\xspace}
\newcommand{\eewd}{\textsc{Exact-Edge-Weight $d$-Uniform Hyperclique}\xspace}
\newcommand{\eewdshort}{\textsc{EEW-$d$-HC}\xspace}
\newcommand{\eewcshort}{\textsc{EEWC}\xspace}
\newcommand{\ewcsp}{\textsc{Exact-Weight CSP}}
\newcommand{\mwcspg}{\textsc{Max-Weight CSP}$(\Gamma)$\xspace}
\newcommand{\ewcspg}{\textsc{Exact-Weight CSP}$(\Gamma)$\xspace}
\newcommand{\csp}{\textsc{CSP}}
\newcommand{\cspg}{\textsc{CSP}$(\Gamma)$\xspace}

\newcommand{\nphard}{\textsc{NP-}hard\xspace}

\newcommand{\ar}{\ensuremath{\text{\textsc{ar}}}\xspace}
\newcommand{\Oh}{\mathcal{O}}

\newcommand{\zz}{\mathbb{Z}}
\newcommand{\nn}{\mathbb{N}}
\newcommand{\eps}{\varepsilon}
\newcommand{\prob}[1]{\mathbb{P}\left(#1\right)}
\newcommand{\wsum}[1]{#1_{\textsc{sum}}}

\newcommand{\ncontainment}{\ensuremath{\mathsf{NP \not \subseteq coNP/poly}}\xspace}
\newcommand{\containment}{\ensuremath{\mathsf{NP \subseteq coNP/poly}}\xspace}
\newcounter{compositionCounter} 

\newtheorem{reductionrule}{Reduction Rule}

\bibliographystyle{plainurl}

\title{On the Hardness of Compressing Weights}

\author{Bart M. P. Jansen}{Eindhoven University of Technology, The Netherlands \and \url{https://www.win.tue.nl/~bjansen/}}{b.m.p.jansen@tue.nl}{https://orcid.org/0000-0001-8204-1268}{} 

\author{Shivesh K. Roy}{Eindhoven University of Technology, The Netherlands \and \url{https://sites.google.com/view/shiveshroy}}{s.k.roy@tue.nl}{https://orcid.org/0000-0003-0896-3437}{}

\author{Micha\l \ W{\l}odarczyk}{Eindhoven University of Technology, The Netherlands \and \url{https://www.win.tue.nl/~mwlodarczyk/}} {m.wlodarczyk@tue.nl}{https://orcid.org/0000-0003-0968-8414}{}

\authorrunning{B.\,M.\,P. Jansen, S.\,K. Roy, and  M. W{\l}odarczyk}

\ccsdesc[500]{Theory of computation, Parameterized complexity and exact algorithms}
\ccsdesc[500]{Theory of computation, Problems, reductions and completeness}

\keywords{kernelization, compression, edge-weighted clique, constraint satisfaction problems}

\category{}

\relatedversion{}

\supplement{}


\funding{This project has received funding from the European Research Council (ERC) under the European Union's Horizon 2020 research and innovation programme (grant agreement No 803421, ReduceSearch).}


\nolinenumbers 

\hideLIPIcs  

\begin{document}

\maketitle

\begin{abstract}
We investigate computational problems involving large weights through the lens of kernelization, which is a~framework of polynomial-time preprocessing aimed at compressing the instance size.
Our main focus is the weighted \textsc{Clique} problem,
where we are given an edge-weighted graph and the goal is to detect a~clique of total weight equal to a~prescribed value.
We show that the weighted variant, parameterized by the number of vertices~$n$, is significantly harder than the unweighted problem 
by presenting an $\Oh(n^{3 - \varepsilon})$ lower bound on the size of the kernel, under the assumption that \ncontainment.
This lower bound is essentially tight: we show
that we can reduce the problem to the case with weights bounded by  $2^{\Oh(n)}$, which yields
a~randomized kernel of $\Oh(n^{3})$ bits.

We generalize these results to the weighted $d$-\textsc{Uniform Hyperclique} problem, \textsc{Subset Sum}, and weighted variants of Boolean \textsc{Constraint Satisfaction Problems} (CSPs). We also study weighted minimization problems and show that weight compression is easier when we only want to \bmp{preserve the collection of} optimal solutions. Namely, we show that for \bmp{node-}weighted \textsc{Vertex Cover} on bipartite graphs it is possible to maintain \bmp{the set of} optimal solutions using integer weights from the range $[1, n]$, but if we want to maintain the ordering of the weights of all inclusion-minimal solutions, then weights as large as $2^{\Omega(n)}$ are necessary.

\includegraphics[scale=0.1]{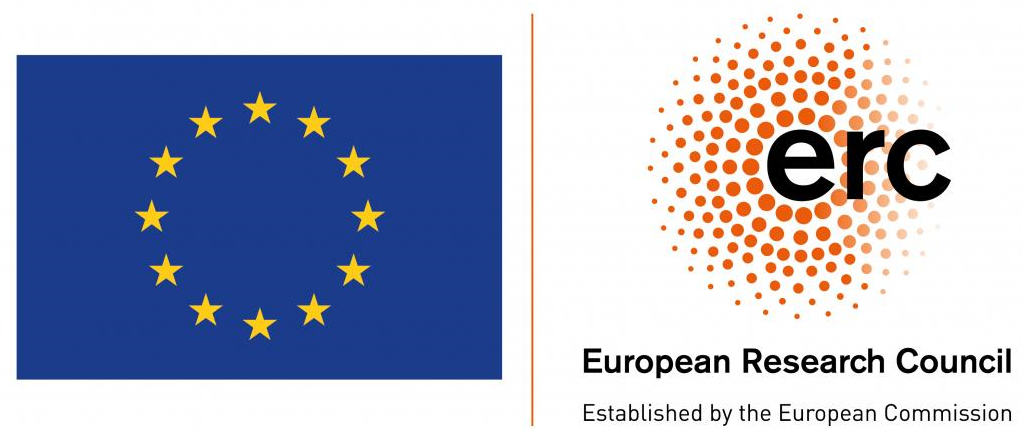}
\end{abstract}
\clearpage

\section{Introduction}
\label{sec:intro}

A prominent class of problems in algorithmic graph theory consist of finding a subgraph with certain properties in an input graph~$G$, if one exists. Some variations of this problem can be solved in polynomial time (detecting a triangle), while the general problem is NP-complete since it generalizes the \textsc{Clique} problem. In recent years, there has been an increasing interest in understanding the complexity of such subgraph detection problems in \emph{weighted} graphs, where either the vertices or the edges are assigned integral weight values, and the goal is either to find a subgraph of a given form which optimizes the total weight of its elements, or alternatively, to find a subgraph whose total weight matches a prescribed value. 

Incorporating weights in the problem definition can have a significant effect on computational complexity. For example, determining whether an unweighted $n$-vertex graph has a triangle can be done in time~$\Oh(n^{\omega})$ (where~$\omega < 2.373$ is the exponent of matrix multiplication)~\cite{ItaiR78}, while for the analogous weighted problem of finding a triangle of minimum edge-weight, no algorithm of running time~$\Oh(n^{3-\varepsilon})$ is known for any~$\varepsilon > 0$. Some popular conjectures in fine-grained complexity theory even postulate that no such algorithms exist~\cite{Williams15}. Weights also have an effect on the best-possible exponential running times of algorithms solving NP-hard problems: the current-fastest algorithm for the NP-complete \textsc{Hamiltonian Cycle} problem in undirected graphs runs in time~$\Oh(1.66^n)$~\cite{Bjorklund14}, while for its weighted analogue, \textsc{Traveling Salesperson}, no algorithm with running time~$\Oh((2-\varepsilon)^n)$ is known for general undirected graphs (cf.~\cite{Nederlof20}).

In this work we investigate how the presence of weights in a problem formulation affects the \emph{compressibility} and \emph{kernelization complexity} of NP-hard problems. Kernelization is a subfield of parameterized complexity~\cite{CyganFKLMPPS15,DowneyF13} that investigates how much a \emph{polynomial-time} preprocessing algorithm can compress an instance of an NP-hard problem, without changing its answer, in terms of a chosen complexity parameter. 

For a motivating example of kernelization, we consider the \textsc{Vertex Cover} problem. For the unweighted variant, a kernelization algorithm based on the Nemhauser-Trotter theorem~\cite{NemhauserT75} can efficiently reduce an instance~$(G,k)$ of the decision problem, asking whether~$G$ has a vertex cover of size at most~$k$, to an equivalent one~$(G',k')$ consisting of at most~$2k$ vertices, which can therefore be encoded in~$\Oh(k^2)$ bits via its adjacency matrix. In the language of parameterized complexity, the unweighted \textsc{Vertex Cover} problem parameterized by the solution size~$k$ admits a kernelization (self-reduction) to an equivalent instance on~$\Oh(k^2)$ bits. For the \emph{weighted} variant of the problem, where an input additionally specifies a weight threshold~$t \in \mathbb{N}_+$ and a weight function~$w \colon V(G) \to \mathbb{N}_{+}$ on the vertices, and the question is whether there is a vertex cover of size at most~$k$ and weight at most~$t$, the guarantee on the encoding size of the reduced instance is weaker. Etscheid et al.~\cite[Thm. 5]{EtscheidKMR17} applied a powerful theorem of Frank and Tard\"os~\cite{FrankT87} to develop a polynomial-time algorithm to reduce any instance~$(G,w,k,t)$ of \textsc{Weighted Vertex Cover} to an equivalent one with~$\Oh(k^2)$ edges, which nevertheless needs~$\Oh(k^8)$ bits to encode due to potentially large numbers occurring as vertex weights. The \textsc{Weighted Vertex Cover} problem, parameterized by solution size~$k$, therefore has a kernel of~$\Oh(k^8)$ bits. 

The overhead in the kernel size for the weighted problem is purely due to potentially large weights. This led Etscheid et al.~\cite{EtscheidKMR17} to ask in their conclusion whether this overhead in the kernelization sizes of weighted problems is necessary, or whether it can be avoided. As one of the main results of this paper, we will prove a lower bound showing that the kernelization complexity of some weighted problems is strictly larger than their unweighted counterparts.

\subparagraph*{Our results}

We consider an edge-weighted variation of the \textsc{Clique} problem, parameterized by the number of vertices~$n$:

\defprob{\eewc(\eewcshort)}{An undirected graph $G$, a weight function $w \colon E(G) \to \mathbb{N}_0$, and a target~$t \in \mathbb{N}_0$.}{Does $G$ have a clique of total edge-weight exactly $t$, i.e., a vertex set~$S \subseteq V(G)$ such that~$\{x,y\} \in E(G)$ for all distinct~$x,y \in S$ and such that~$\sum_{\{x, y\} \subseteq S}w(\{x,y\}) = t$?}

Our formulation of \eewcshort does not constrain the cardinality of the clique. This formulation will be convenient for our purposes, but we remark that by adjusting the weight function it is possible to enforce that any solution clique~$S$ has a prescribed cardinality. Through such a cardinality restriction we can obtain a simple reduction from the problem with potentially negative weights to equivalent instances with weights from~$\mathbb{N}_0$, by increasing all weights by a suitably large value and adjusting~$t$ according to the prescribed cardinality. \bmp{Note that an instance of \eewcshort can be reduced to an equivalent one where~$G$ has all possible edges, by simply inserting each non-edge with a weight of~$t+1$. Hence the difficulty of the problem stems from achieving the given target weight~$t$ as the total weight of the edges spanned by~$S$, not from the requirement that~$G[S]$ must be a clique.}

\eewcshort is a natural extension of \textsc{Zero-Weight Triangle}~\cite{AbboudFW20}, which has been studied because it inherits fine-grained hardness from both \textsc{3-Sum}~\cite{WilliamsW13} and \textsc{All Pairs Shortest Paths}~\cite[Footnote 3]{VassilevskaW09}. \eewcshort has previously been considered by Abboud et al.~\cite{AbboudLW14} as an intermediate problem in their W[1]-membership reduction from \textsc{$k$-Sum} to \textsc{$k$-Clique}. Vassilevska-Williams and Williams~\cite{WilliamsW13} considered a variation of this problem with weights drawn from a finite field. The related problem of detecting a triangle of negative edge weight is central in the field of fine-grained complexity for its subcubic equivalence~\cite{WilliamsW18} to \textsc{All Pairs Shortest Paths}. \bmp{Another example of an edge-weighted subgraph detection problem with an exact requirement on the weight of the target subgraph is \textsc{Exact-Edge-Weight Perfect Matching}, which can be solved using algebraic techniques~\cite[\S 6]{MulmuleyVV87} and has been used as a subroutine in subgraph isomorphism algorithms~\cite[Proposition 3.1]{MarxP13}.}

The unweighted version of \eewcshort, obtained by setting all edge weights to~$1$, is NP-complete because it is equivalent to the \textsc{Clique} problem. When using the number of vertices~$n$ as the complexity parameter, the problem admits a kernelization of size~$\Oh(n^2)$ obtained by simply encoding the instance via its adjacency matrix. We prove the following lower bound, showing that the kernelization complexity of the edge-weighted version is a~factor~$n$ larger. The lower bound even holds against \emph{generalized} kernelizations (see Definition~\ref{def:gen:kernel}).

\begin{theorem} \label{thm:eewc:lb}
The \eewc problem parameterized by the number of vertices~$n$ does not admit a generalized kernelization of~$\Oh(n^{3 - \varepsilon})$ bits for any~$\varepsilon > 0$, unless \containment.
\end{theorem}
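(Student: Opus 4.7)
The plan is to apply the cross-composition framework of Bodlaender, Jansen, and Kratsch combined with the polynomial kernel lower bound technique of Dell and van Melkebeek, in its refined \emph{weak $d$-cross-composition} form: demonstrating that $N$ instances of an NP-hard source problem can be OR-composed into a single instance of the target problem whose parameter is at most $N^{1/d} \cdot s^{\Oh(1)}$ rules out generalized kernels of bit-size $\Oh(k^{d - \varepsilon})$ under $\ncontainment$. For \eewcshort with parameter $n$ and target lower bound $\Omega(n^{3 - \varepsilon})$, I therefore aim to design a weak $3$-cross-composition producing an \eewcshort instance on $\Oh(N^{1/3} \cdot s^{\Oh(1)})$ vertices.

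For the source, I would use a problem whose structure is naturally three-dimensional; a promising candidate is \erbdsshort (already defined in the authors' preamble, suggesting it serves as an intermediate problem) or a $3$-partite Boolean constraint satisfaction problem. The $N$ source instances are arranged on a $T \times T \times T$ grid with $T = \lceil N^{1/3} \rceil$. The output \eewcshort instance has three \emph{selector} groups $V_1, V_2, V_3$, each of $T$ vertices, plus a certificate gadget of $s^{\Oh(1)}$ vertices. A clique of the prescribed target weight must pick one vertex from each selector (thereby specifying an instance index $(i,j,k)$) together with vertices encoding a certificate of that instance. Weights on the edges incident to selector vertices use positional encoding in a sufficiently large base, with disjoint digit ranges for each triple $(i,j,k)$, so that the constraints of every source instance are packed into the edge weights without interference; missing edges inside a selector group are inserted with prohibitive weight (using the remark in the problem definition) to forbid them in any valid clique.

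The main obstacle is the tightness of the weight budget: with $n = \Oh(N^{1/3} \cdot s^{\Oh(1)})$ and edge weights bounded by $2^{\Oh(n)}$, the total bit-capacity of the $\binom{n}{2}$ edges is only $\Oh(n^3) = \Oh(N \cdot s^{\Oh(1)})$, which leaves essentially no slack beyond what is required to store one bit-string per source instance. The composition must therefore arrange its positional encoding so that (i)~weights dedicated to instance $(i,j,k)$ occupy a bit range that no selector-plus-certificate configuration other than a satisfying one for $(i,j,k)$ can sum to; (ii)~the prescribed target, which depends on the chosen instance, still fits within the $2^{\Oh(n)}$ bound; and (iii)~``crosstalk'' between contributions of different instances neither cancels nor accidentally reaches the target. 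Verifying these three invariants rigorously is the crux; once done, the cross-composition framework immediately yields the claimed $\Omega(n^{3 - \varepsilon})$ generalized-kernel lower bound for \eewcshort.
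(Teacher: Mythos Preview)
Your high-level plan is right: a degree-$3$ OR-cross-composition into \eewcshort parameterized by~$n$ is exactly what the paper does, and arranging $z^3$ source instances on a cube with three selector gadgets is also the paper's structure (the source problem is \rbdsshort rather than \erbdsshort). The gap is in the one place you flag as ``the crux'' but do not resolve.

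Your stated encoding---``disjoint digit ranges for each triple $(i,j,k)$''---cannot work within the budget you yourself compute. There are $z^3$ triples; giving each its own block of $s^{\Oh(1)}$ digits forces weight vectors of length $\Omega(z^3 \cdot s^{\Oh(1)})$, hence edge weights of magnitude $2^{\Omega(z^3)}$, whereas the parameter bound $n = \Oh(z \cdot s^{\Oh(1)})$ only tolerates weights up to $2^{\Oh(n)} = 2^{\Oh(z \cdot s^{\Oh(1)})}$. The three invariants you list therefore cannot all be achieved by this scheme; ``verifying them'' is not a matter of care but of finding a genuinely different construction.

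The paper's resolution is to \emph{not} separate the triples. The weight vector has only $nz+1$ positions (so length $\Oh(z \cdot s)$), partitioned into $z$ blocks indexed by $k$ alone. The edge from selector $b_i$ to a certificate vertex $r_x^j$ carries, in its $k$-th block, the adjacency data of vertex $r_x$ in instance $G_{i,j,k}$ for \emph{every} $k$ simultaneously. Choosing the third selector $s_k$ contributes an edge $\{b_i,s_k\}$ whose weight writes all-ones into every block \emph{except} block $k$; this saturates the irrelevant blocks for free and leaves only block $k$ to be filled by the certificate. Padding vertices then round each position up to the uniform target value~$d$. The point is that information for all $z$ third indices shares the same edge rather than occupying disjoint ranges; this reuse is precisely what makes the $\Oh(n^3)$-bit budget suffice, and it is the idea your proposal is missing.
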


Intuitively, the lower bound exploits the fact that the weight value of each of the~$\Theta(n^2)$ edges in the instance may be a large integer requiring $\Omega(n)$ bits to encode. We also provide a randomized kernelization which matches this lower bound.

\begin{theorem} \label{thm:eewc:ub}
There is a randomized polynomial-time algorithm that, given an $n$-vertex instance~$(G,w,t)$ of \eewc, outputs an instance~$(G',w',t')$ of bitsize~$\Oh(n^3)$, in which each number is bounded by~$2^{\Oh(n)}$, that is equivalent to~$(G,w,t)$ with probability at least~$1 - 2^{-n}$. Moreover, if the input is a YES-instance, then the output is always a YES-instance.
\end{theorem}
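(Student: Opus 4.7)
The plan is to combine a random-prime modular hashing with a small padding gadget that preserves YES-instances with certainty. As a preprocessing step, I would apply the Frank--Tard\H{o}s theorem~\cite{FrankT87} to the vector $(w(e_1), \dots, w(e_m), -t) \in \zz^{m+1}$, with $m = \binom{n}{2}$, using threshold $N = m + 2$. Since every clique-indicator vector $(\chi_S, 1) \in \{0,1\}^{m+1}$ has $\ell_1$-norm at most $m+1$, the output is an equivalent instance of \eewcshort with weights of magnitude at most $2^{\Oh(n^6)}$, computable in polynomial time.

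Next, I would sample a uniformly random prime $p$ from the interval $[2^{3n}, 2^{3n+1}]$ and set $w'(e) := w(e) \bmod p$ for each $e \in E(G)$. To compensate for the fact that $\sum_{e \in E(S)} w'(e)$ can differ from the original clique-sum by a multiple of $p$ that depends on $S$, I would attach to $G$ a gadget of $\ell + 1 = \Oh(\log n)$ new vertices $u_0, u_1, \dots, u_\ell$, where $\ell := \lceil \log_2(m+1) \rceil$. These are made adjacent to every vertex of $V(G)$ and to one another, and all new edges receive weight $0$ except that $w'(u_0, u_i) := 2^{i-1} \cdot p$ for $1 \leq i \leq \ell$. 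The new target is $t' := (t \bmod p) + (2^\ell - 1) p$.

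For correctness, every clique of $G'$ decomposes as $S \cup U'$ with $S$ a clique of $G$ and $U' \subseteq \{u_0, \dots, u_\ell\}$, and the gadget contributes $\sum_{i \in U' \setminus \{u_0\}} 2^{i-1} p$ when $u_0 \in U'$ (hitting each multiple of $p$ in $\{0, p, \dots, (2^\ell - 1)p\}$ exactly once) and $0$ otherwise. Combining this with the bound $\sum_{e \in E(S)} w'(e) \leq \binom{n}{2}(p-1) < (2^\ell - 1)p$, a clique of $G'$ attains total weight $t'$ if and only if $\sum_{e \in E(S)} w'(e) \equiv t \pmod{p}$, with the gadget absorbing the appropriate multiple of $p$. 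Since $\sum_{e \in E(S)} w(e) = t$ implies this modular condition, YES-instances map deterministically to YES-instances. For the NO-direction, each clique $S$ gives a nonzero integer $D_S := \sum_{e \in E(S)} w(e) - t$, and a spurious solution appears only if $p \mid D_S$; after Frank--Tard\H{o}s, $|D_S| \leq 2^{\Oh(n^6)}$, so each $D_S$ has at most $\Oh(n^6)$ prime divisors. A union bound over the at most $2^n$ cliques of $G$ and the $\Theta(2^{3n}/n)$ primes in the sampling range gives error probability $\Oh(n^7 \cdot 2^{-2n}) \leq 2^{-n}$.

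Finally, $G'$ has $n + \Oh(\log n)$ vertices and $\Oh(n^2)$ edges, each carrying a weight of magnitude at most $2^{\Oh(n)}$, yielding a total bitsize of $\Oh(n^3)$ as required. The main delicate point is calibrating the padding gadget so that every possible modular ``carry'' is realized by exactly one gadget subset while keeping its vertex count logarithmic; the probability bound then reduces to the standard observation that a random $\Oh(n)$-bit prime divides a fixed nonzero integer of polynomially many bits only with probability $2^{-\Omega(n)}$.
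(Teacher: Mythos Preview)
Your proposal is correct and follows the same overall strategy as the paper: reduce all weights modulo a random $\Oh(n)$-bit prime $p$, then attach a padding gadget that can realise any multiple of $p$ up to the maximum possible carry, so that a clique of the new target weight exists in $G'$ if and only if some clique of $G$ has weight congruent to $t$ modulo $p$.

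The paper differs from your write-up in two implementation choices. First, instead of invoking Frank--Tard\H{o}s to bound the weights by $2^{\Oh(n^6)}$, it simply observes that if $\log N > 2^n$ the input already has size at least $2^n$ and the instance can be solved by brute force in polynomial time; this lets it sample $p$ directly via the Harnik--Naor argument with $p \le 2^n \cdot \mathrm{poly}(n,\log N,\varepsilon^{-1}) = 2^{\Oh(n)}$. Second, its padding gadget uses $\Theta(n)$ new vertices, encoding the carry $i \in [0,n^2)$ in base $n$ with unary ``digits'' (one hub vertex plus two groups of $n$ vertices carrying edge weights $p$ and $np$), whereas your binary gadget with edge weights $2^{i-1}p$ needs only $\Oh(\log n)$ new vertices. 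Both variants give $\Oh(n)$ total vertices and the same $\Oh(n^3)$ bitsize; your gadget is slightly tighter in vertex count, while the paper avoids the heavier Frank--Tard\H{o}s machinery and the attendant (harmless but unaddressed) issue that the Frank--Tard\H{o}s output may have negative weights before the modular reduction restores non-negativity.
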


\mic{The proof is based on the idea that taking the weight function modulo a random prime
preserves the answer to the instance with high probability.
We adapt the argument by Harnik and Naor~\cite{HarnikN10} that it suffices to pick a prime of magnitude $2^{\Oh(n)}$.
As a result, each weight can be encoded with just $\Oh(n)$ bits.
}

It is noteworthy that the algorithm above can produce only false positives, therefore instead of using randomization we can turn it into a co-nondeterministic algorithm which guesses the correct values of the random bits.
The framework of cross-composition excludes not only deterministic kernelization, but also co-nondeterministic~\cite{DellM14}, thus the lower bound from Theorem~\ref{thm:eewc:lb} indeed makes the presented algorithm tight.

Together, Theorems~\ref{thm:eewc:lb} and~\ref{thm:eewc:ub} pin down the kernelization complexity of \eewc, and prove it to be a factor~$n$ larger than for the unit-weight case. For~\textsc{Clique}, the~kernelization of~$\Oh(n^2)$ bits due to adjacency-matrix encoding cannot be improved to~$\Oh(n^{2-\varepsilon})$ for any~$\varepsilon > 0$, as was shown by Dell and van Melkebeek~\cite{DellM14}.

We extend our results to the hypergraph setting, which is defined as follows: given a~$d$-regular hypergraph ($d \geq 3$) with \bmp{non-negative} integer weights on the hyperedges, and a~target value~$t$, test if there is a vertex set~$S$ for which each size-$d$ subset is a hyperedge (so that~$S$ is a hyperclique) such that the sum of the weights of the hyperedges contained in~$S$ is exactly~$t$. By a bootstrapping reduction using Theorem~\ref{thm:eewc:lb}, we prove that \eewd does not admit a generalized kernel of size~$\Oh(n^{d+1-\varepsilon})$ for any~$\varepsilon > 0$ unless \containment, while the randomized hashing technique yields a~randomized kernelization of size~$\Oh(n^{d+1})$.

We can view the edge-weighted ($d$-hyper)clique problem on~$(G,k,w,t)$ as a weighted constraint satisfaction problem (CSP) with weights from~$\mathbb{Z}$, by introducing a binary variable for each vertex, and a weighted constraint for each subset~$S'$ of~$d$ vertices, which is satisfied precisely when all variables for~$S'$ are set to \textsf{true}. If~$S'$ is a (hyper)edge~$e \in E(G)$ then the weight of the constraint on~$S'$ equals the weight of~$e$; if~$S'$ is not a hyperedge of~$G$, then the weight of the constraint on~$S'$ is set to~$- \infty$ to prevent all its vertices from being simultaneously chosen. Under this definition,~$G$ has a (hyper)clique of edge-weight~$t$ if and only if there is an assignment to the variables for which the total weight of satisfied constraints is~$t$. Via this interpretation, the lower bounds for \eewcshort yield lower bounds on the kernelization complexity of weighted variants of CSP. We employ a recently introduced framework~\cite{JansenW20} of reductions among different CSPs whose constraint languages have the same maximum degree~$d$ of their characteristic polynomials, to transfer these lower bounds to other CSPs \mic{(see Section~\ref{sec:csp} for definitions)}. We obtain tight kernel bounds when parameterizing the exact-satisfaction-weight version of CSP by the number of variables\bmp{, again using \mic{random prime numbers} to obtain upper bounds}.
\mic{Our lower bounds for \eewd transfer to all CSPs with degree~$d \geq 2$. 
In degree-1 CSP each constraint depends on exactly one variable, therefore its exact-weighted variant is equivalent to the \textsc{Subset Sum} problem, for which we also provide a tight lower bound.}

\begin{restatable}{theorem}{subsetSumState}
 \label{thm:ss:lowerbound}
\textsc{Subset Sum} parameterized by the number of items~$n$ does not admit a~generalized kernelization of size~$\Oh(n^{2-\varepsilon})$ for any~$\varepsilon > 0$, unless \containment.
\end{restatable}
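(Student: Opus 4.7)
The plan is to establish the lower bound via an OR-cross-composition of degree~$2$: a~polynomial-time algorithm that takes $t$ instances of a~suitable NP-hard language~$L$, each of size~$s$, and outputs a~single \ssfull instance whose answer is the logical OR of the inputs, whose total bit-length is polynomial in~$t+s$, and whose parameter satisfies $n \leq \sqrt{t}\cdot\mathrm{poly}(s)$. By the standard cross-composition framework, such a~composition immediately implies the stated lower bound under $\ncontainment$.

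As source I would pick \erbds, which is NP-hard and reduces naturally to \ssfull via positional base-encoding of its exact-covering constraints. After padding all~$t$ input instances to a common format with $s$ red vertices $R=\{r_1,\ldots,r_s\}$, $s$ blue vertices $B=\{b_1,\ldots,b_s\}$, and a common cardinality threshold~$k$, I arrange them in a~$p\times p$ grid ($p=\lceil\sqrt{t}\rceil$), writing $E_{i,j}\subseteq R\times B$ for the edge-set of the instance in position $(i,j)$.

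The output \ssfull instance would consist of: (a)~$p$ row-selector items $\rho_1,\ldots,\rho_p$ and $p$ column-selector items $\gamma_1,\ldots,\gamma_p$, which together pinpoint a cell $(i^*,j^*)$; (b)~$s$ red-selector items $\delta_1,\ldots,\delta_s$, with the semantics that picking~$\delta_m$ encodes "include red vertex~$r_m$ in the dominating set~$D$"; and (c)~$\Oh(p)$ auxiliary slack items. The total number of items is $n=\Oh(\sqrt{t}\cdot\mathrm{poly}(s))$, as required. Each item's weight is interpreted as a multi-digit integer in a~large base~$B$, with dedicated digit-groups for enforcing exactly one chosen row-selector, exactly one chosen column-selector, and $|D|=k$; and, for every $(i,j)\in[p]^2$ and every $b\in B$, a~coverage digit-group that records how many red vertices of $D$ are joined to~$b$ by~$E_{i,j}$. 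The selectors~$\rho_i$ and~$\gamma_j$ carry carefully chosen contributions in the coverage digit-groups of cells \emph{outside} their own row/column; these contributions are engineered so that, for the uniquely active cell~$(i^*,j^*)$, the coverage digit-groups reduce exactly to the \erbds-condition for instance~$(i^*,j^*)$, whereas the coverage digit-groups of inactive cells are absorbed by the combined contributions of $\rho_{i^*}$, $\gamma_{j^*}$ and the slack items.

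The main obstacle is executing this cancellation with only~$\Oh(p)$ slack items despite there being $p^2-1$ inactive cells. I plan to overcome it by observing that, once $\rho_{i^*}$ and $\gamma_{j^*}$ are chosen, the residual discrepancy in the coverage digits of every inactive cell is determined solely by row- and column-dependent terms, so a single row-slack item per row together with a single column-slack item per column suffice to absorb all discrepancies simultaneously, provided the base~$B$ is chosen large enough to prevent cross-contamination by carries between digit-groups. Once the construction is shown to be equivalent to the OR of the inputs and to meet the size and parameter bounds, invoking the cross-composition lemma for generalized kernelization rules out a~kernel of size $\Oh(n^{2-\varepsilon})$ under $\containment$.
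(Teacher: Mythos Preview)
Your proposal has a genuine gap at exactly the point you flag as the main obstacle. With only $s$ red-selector items $\delta_1,\ldots,\delta_s$, each carrying a \emph{fixed} weight, the contribution of a chosen set $D\subseteq[s]$ to the coverage digit for blue vertex $b$ in an inactive cell $(i,j)$ equals $\sum_{m\in D}[(r_m,b)\in E_{i,j}]$. This quantity depends on the solution $D$ \emph{and} on the arbitrary input edge set $E_{i,j}$; it is certainly not determined once $\rho_{i^*},\gamma_{j^*}$ are fixed, and it does not decompose as a sum of a row-only and a column-only term. Two different sets $D,D'$ that both solve the active instance $(i^*,j^*)$ will in general leave different residuals in every inactive cell, so no family of $\Oh(p)$ fixed-weight slack items can absorb all of them simultaneously across the $p^2-1$ inactive cells. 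This is precisely the structural reason a direct degree-$2$ composition into \ssfull is delicate: in \eewcshort the edge weight between an instance-selector and a red-selector can encode instance-specific adjacency, but in \ssfull every item has a single scalar weight and there is no pairwise interaction to exploit.

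The paper avoids this obstacle entirely. Instead of composing into \ssfull directly, it gives a one-shot linear-parameter transformation from \erbds to \ssfull (one item per red vertex, digits in base $n_R+1$ encoding the blue adjacencies plus a most-significant cardinality digit, target $d\,1\cdots1$), and then inherits the $\Oh(n^{2-\varepsilon})$ lower bound from the existing bound for \erbdsshort due to Jansen and Pieterse. If you want to push through a direct composition, you will need substantially more than $s+\Oh(p)$ items: at a minimum one copy of each red vertex per column together with enough per-row, per-position padding to neutralise inactive rows, which is essentially re-deriving the \erbdsshort cross-composition inside \ssfull rather than the streamlined scheme you sketched.
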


Theorem~\ref{thm:ss:lowerbound} tightens a result of Etscheid et al.~\cite[Theorem 14]{EtscheidKMR17}, who ruled out (standard) kernelizations for \textsc{Subset Sum} of size~$\Oh(n^{2-\varepsilon})$ assuming the Exponential Time Hypothesis. Our reduction, conditioned on the incomparable assumption \ncontainment, additionally rules out generalized kernelizations that compress into an instance of a potentially different problem.
Note that the new lower bound implies that the input data in \textsc{Subset Sum} cannot be efficiently encoded in a more compact way, whereas the previous lower bound relies on the particular way the input is encoded in the natural formulation of the problem. 
\mic{On the other hand, a~randomized kernel of size $\Oh(n^2)$ is known~\cite{HarnikN10}.}

The results described so far characterize the kernelization complexity of broad classes of weighted constraint satisfaction problems in which the goal is to find a solution for which the total weight of satisfied constraints is exactly equal to a prescribed value. We also broaden our scope and investigate the maximization or minimization setting, in which the question is whether there is a solution whose cost is at least, or at most, a prescribed value. Some of our upper-bound techniques can be adapted to this setting: using a procedure by Nederlof, van Leeuwen and de Zwaan~\cite{NederlofLZ12} a maximization problem can be reduced to a polynomial number of exact queries. This leads, for example, to a \emph{Turing} kernelization (cf.~\cite{Fernau16}) for the weight-maximization version of \textsc{$d$-Uniform Hyperclique} which decides an instance in randomized polynomial time using queries of size~$\Oh(n^{d+1})$ to an oracle for an auxiliary problem. 
We do not have lower bounds in the maximization regime. 

\bmp{In an attempt to understand the relative difficulty of obtaining an exact target weight versus maximizing the target weight, we finally investigate different models of weight reduction for the \textsc{Weighted Vertex Cover} problem studied extensively in earlier works~\cite{ChlebikC08,EtscheidKMR17,NederlofLZ12}. We consider the problem on \emph{bipartite} graphs, where an optimal solution can be found in polynomial time, but \mic{we} investigate whether a weight function can be efficiently compressed while either preserving (a)~the collection of minimum-weight vertex covers, or (b)~the relative ordering of total weight for all \emph{inclusion-minimal} vertex covers. We give a polynomial-time algorithm for case (a) which reduces to a weight function with range~$\{1, \ldots, n\}$ using a relation to $b$-matchings, but show that in general it is impossible to achieve (b)~with a weight function with range~$\{1, \ldots, 2^{o(n)}\}$, by utilizing lower bounds on the number of \mic{different} threshold functions.}

\subparagraph*{Organization}

We begin with short preliminaries with the crucial definitions. 
We prove our main Theorem~\ref{thm:eewc:lb} in Section~\ref{sec:klb} by presenting a cross-composition of degree 3 into \eewc and employing it to obtain kernelization lower bounds for $d$-uniform hypergraphs for $d \ge 2$.
This section also contains
the kernelization lower bound for \ssfull as well as generalization of these results to Boolean CSPs.
Next, in Section~\ref{sec:vc:main} we focus on bipartite \textsc{Weighted Vertex Cover} and the difficulty of compressing weight functions.
The proofs of statements marked with $(\bigstar)$ are located in the appendix.
The kernel upper bounds,
including the proof of Theorem~\ref{thm:eewc:ub},
together with Turing kernelization for maximization problems, 
are collected in Appendix~\ref{sec:ub}.

\section{Preliminaries}
\label{sec:prelims}
We denote the set of natural numbers including zero by $\mathbb{N}_0$, and the set of positive natural numbers by $\mathbb{N}_+$. For positive integers~$n$ we define~$[n] := \{1, \ldots, n\}$. For a set~$U$ and integer~$d \geq 1$ we denote by~$\binom{U}{d}$ the collection of all size-$d$ subsets of~$U$. All logarithms we employ have base~$2$. 
Given a set $U$ and a weight function $w \colon U \to \mathbb{N}_0$, \bmp{for a subset $S \subseteq U$} we \bmp{denote} $w(S) := \sum _{v \in S} w(v)$.

All graphs we consider are undirected and simple. A (standard) graph~$G$ has a vertex set~$V(G)$ and edge set~$E(G) \subseteq \binom{V(G)}{2}$. For~$d \geq 2$, a $d$-uniform hypergraph~$G$ consists of a vertex set~$V(G)$ and a set of hyperedges~$E(G) \subseteq \binom{V(G)}{d}$, that is, each hyperedge is a set of exactly~$d$ vertices. Hence a $2$-uniform hypergraph is equivalent to a standard graph. A \emph{clique} in a $d$-uniform hypergraph~$G$ is a vertex set~$S \subseteq V(G)$ such that for each~$X \in \binom{S}{d}$ we have~$X \in E(G)$: each possible hyperedge among the vertices of~$S$ is present. A \emph{vertex cover} for a graph~$G$ is a vertex set~$S \subseteq V(G)$ containing at least one endpoint of each edge. A vertex cover is \emph{inclusion-minimal} if no proper subset is a vertex cover.

\subparagraph*{\bmp{Parameterized complexity}}
A \emph{parameterized problem}~$Q$ is a subset of $\Sigma^* \times \mathbb{N}_{+}$, where $\Sigma$ is a finite alphabet. 

\begin{definition} \label{def:gen:kernel}
Let $Q, Q' \subseteq \Sigma^*\times\mathbb{N}_{+}$ be parameterized problems and let $h \colon \mathbb{N}_{+}\rightarrow\mathbb{N}_{+}$  be a computable function. A \emph{generalized kernel for $Q$ into $Q'$ of size $h(k)$} is an algorithm that, on input $(x,k) \in \Sigma^*\times\mathbb{N}_{+}$, takes time polynomial in $|x|+k$ and outputs an instance $(x',k')$ such that:
\begin{enumerate}
\item $|x'|$ and $k'$ are bounded by $h(k)$, and
\item $(x',k')\in Q'$ if and only if $(x,k) \in Q$.
\end{enumerate}
The algorithm is a \emph{kernel} for $Q$ if $Q = Q'$. It is a \emph{polynomial (generalized) kernel} if $h(k)$ is a polynomial.
\end{definition}

\begin{definition}[\textbf{Linear-parameter transformations}]
Let $P$ and $Q$ be parameterized problems. We say that $P$ is \emph{linear-parameter transformable} to $Q$, if there exists a polynomial-time computable function $f \colon \Sigma^* \times \mathbb{N}_{+} \to \Sigma^* \times \mathbb{N}_{+}$, such that for all 
$(x, k) \in \Sigma^* \times \mathbb{N}_{+}$, (a) $(x, k) \in P$ if and only if $(x', k') = f(x, k) \in Q$ and (b) $k' \leq \mathcal{O}(k)$. The function $f$ is called a linear-parameter transformation.
\end{definition}

We employ a linear-parameter transformation for proving the lower bound for \ssfull.
For other lower bounds
we use the framework of cross-composition~\cite{BodlaenderJK14} directly.

\begin{definition}[\textbf{Polynomial equivalence relation}, {\cite[Def. 3.1]{BodlaenderJK14}}]
Given an alphabet $\Sigma$, an equivalence relation $\mathcal{R}$ on $\Sigma^\star$ is called a polynomial equivalence relation if the following conditions hold.
\begin{romanenumerate}
\item There is an algorithm that, given two strings $x, y \in \Sigma^ \star$, decides whether $x$
and $y$ belong to the same equivalence class in time polynomial in $|x| + |y|$.

\item For any finite set $S \subseteq \Sigma^ \star$ the equivalence relation $\mathcal{R}$ partitions the elements of $S$ into a number of classes that is polynomially bounded in the size of the largest element of $S$.
\end{romanenumerate}
\end{definition}

\begin{definition}[\textbf{Degree-$d$ cross-composition}] \label{def:deg-dcross-comp}
Let $L \subseteq \Sigma^ \star$ be a language, let $\mathcal{R}$ be a polynomial equivalence relation on $\Sigma^ \star$, and let $Q \subseteq \Sigma^ \star \times  \mathbb{N}_{+}$ be a parameterized problem. A~degree-d OR-cross-composition of $L$ into $Q$ with respect to $\mathcal{R}$ is an algorithm that, given $z$ instances $x_1, x_2, \ldots , x_z \in \Sigma^ \star$ of $L$ belonging to the same equivalence class of $\mathcal{R}$, takes time polynomial in $\sum_{i=1}^{z} |x_i|$ and outputs
an instance $(x', k') \in \Sigma^ \star \times \mathbb{N}_{+}$ such that:
\begin{romanenumerate}
\item the parameter $k'$ is bounded by $\mathcal{O}(z^{1/d} \cdot (\max_i|x_i|)^c)$, where $c$ is some constant independent of $z$, and
\item $(x', k') \in Q$ if and only if there is an $i \in [z]$ such that $x_i \in L$.
\end{romanenumerate}
\end{definition}

\begin{theorem} [{\cite[Theorem 3.8]{BodlaenderJK14}}] \label{thm:crosscomp:to:lowerbound}
Let $L \subseteq \Sigma ^\star$ be a language that is NP-hard under Karp reductions, let $Q \subseteq \Sigma^\star \times \mathbb{N}_{+}$ be a parameterized problem, and let $\varepsilon > 0$ be a real number. If~$L$ has a degree-$d$ OR-cross-composition into $Q$ and $Q$ parameterized by $k$ has a polynomial (generalized) kernelization of bitsize $\Oh(k^{d - \varepsilon})$, then \containment.
\end{theorem}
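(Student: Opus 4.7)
My plan is to obtain Theorem~\ref{thm:ss:lowerbound} via a linear-parameter transformation from \erbds (parameterized by the total number of vertices), relying on a separately established $\Omega(n^{2-\varepsilon})$ generalized kernel lower bound for \erbds via a direct degree-$2$ cross-composition. In \erbds, the input is a bipartite graph $H = (R \cup B, E)$ together with a demand function $d \colon B \to \mathbb{N}_+$, and the task is to decide whether there exists $R' \subseteq R$ with $|N(b) \cap R'| = d(b)$ for every $b \in B$. Since linear-parameter transformations transport generalized kernel lower bounds through Theorem~\ref{thm:crosscomp:to:lowerbound}, an LPT of this kind suffices.

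The LPT itself is a short positional-encoding construction. Set the base $M := |R| + 1$, order the blue vertices $B = \{b_1, \ldots, b_{|B|}\}$ arbitrarily, and for each red vertex $r \in R$ introduce a single \ssfull item of weight
\[
 w(r) \;:=\; \sum_{i \,:\, b_i \in N(r)} M^{i-1}.
\]
Set the target to $T := \sum_{i=1}^{|B|} d(b_i) \cdot M^{i-1}$. Because $M$ strictly exceeds $|R|$ and hence bounds every $d(b_i) \le |R|$, summing the weights of any subset of items never produces a carry between base-$M$ digits. Thus, a subset $S \subseteq R$ has total weight exactly $T$ if and only if the $i$-th base-$M$ digit of $w(S)$ equals $d(b_i)$ for every $i$, which is in turn equivalent to $|N(b_i) \cap S| = d(b_i)$, i.e., to $R' := S$ being a solution for the \erbds instance. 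The number of \ssfull items equals $|R|$, which is bounded by the vertex-count parameter $|R| + |B|$ of the \erbds input, so this is indeed a linear-parameter transformation, and the desired $\Omega(n^{2-\varepsilon})$ lower bound transfers.

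The main technical obstacle is proving the \erbds lower bound itself, and I expect this to be where essentially all of the combinatorial work lives. The plan is a degree-$2$ cross-composition from an NP-hard source language, following the standard grid template: batch $z$ input instances into a $\sqrt{z} \times \sqrt{z}$ grid, introduce $\sqrt{z}$ row-indicator red vertices and $\sqrt{z}$ column-indicator red vertices, and add $\mathrm{poly}(\max_i |x_i|)$ further red vertices encoding a putative solution. The blue vertices are then introduced with carefully chosen demands so that (i) exactly one row-indicator and one column-indicator must be selected, thereby fixing a cell $(i,j)$, and (ii) the remaining demand constraints on the blue vertices force the chosen subset of solution-encoding red vertices to be a valid certificate for the instance at cell $(i,j)$, while being automatically consistent for all non-selected cells $(i',j') \neq (i,j)$. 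Calibrating the demand values so that the fixed contribution of the two chosen indicators exactly absorbs the "default" contribution of the solution-encoders for every unselected cell simultaneously is the delicate step; this is where the bipartite grid construction must be tailored to the specific source language, and is the part where the proof requires the most care.
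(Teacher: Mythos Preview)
Your proposal does not address the stated theorem. Theorem~\ref{thm:crosscomp:to:lowerbound} is the general cross-composition framework result quoted from Bodlaender, Jansen and Kratsch; the paper does not prove it at all, it merely cites it. What you have written is instead a proof plan for Theorem~\ref{thm:ss:lowerbound} (the \textsc{Subset Sum} lower bound). So as an answer to the question as posed, the proposal is simply off-target.

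That said, it is worth comparing your plan to the paper's actual proof of Theorem~\ref{thm:ss:lowerbound}. Both proceed by a linear-parameter transformation from \erbds and use the same positional base-$M$ encoding idea with $M>|R|$ to avoid carries. The substantive differences are: (i) the paper uses the formulation of \erbds in which every blue vertex must have \emph{exactly one} neighbour in the chosen set and the set must have size exactly $d$; your formulation allows an arbitrary demand $d(b)$ per blue vertex and has no cardinality constraint. Your LPT is correct for your variant, and in fact the paper's variant reduces to yours by adding one blue vertex adjacent to all of $R$ with demand $d$, so the lower bound would transfer. (ii) The paper does not reprove the \erbds lower bound; it cites Jansen and Pieterse. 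You propose to redo this from scratch via a grid-style degree-$2$ cross-composition, and you correctly flag the calibration of demands across non-selected cells as the delicate part. That is genuine extra work you are taking on unnecessarily: the cleaner route is to observe your \erbds variant is at least as hard as the paper's (by the one-vertex gadget above) and then invoke the existing lower bound, exactly as the paper does. Your sketch of the cross-composition is plausible in outline but is not a proof, and this is where your argument currently has its only real gap.
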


\section{Kernel lower bounds}
\label{sec:klb}

\subsection{\eewc}
In this section we show that \eewc parameterized by the number of vertices in the given graph $n$ does not admit a generalized kernel of size $\Oh(n^{3- \varepsilon})$, unless \containment.
We use the framework of cross-composition to establish a kernelization lower bound \cite{BodlaenderJK14}. We will use the \nphard \rbds (\rbdsshort) as a starting problem for the cross-composition. Observe that \rbdsshort is NP-hard because it is equivalent to \textsc{Set Cover} and \textsc{Hitting Set}~\cite{Karp72}. 

\defprob{\rbds(\rbdsshort)}{A bipartite graph $G$ with a bipartition of~$V(G)$ into sets~$R$ (red vertices) and~$B$ (blue vertices), and \bmp{a positive} integer~$d \leq |R|$.}{Does there exist a set $D \subseteq R$ with $|D| \leq d$ such that every vertex in $B$ has at least one neighbor in $D$?}

The following lemma forms the heart of the lower bound. It shows that an instance of \eewcshort on~$z \cdot N^{\Oh(1)}$ vertices can encode the logical OR of a sequence of~$z^3$ instances of size~$N$ each. Roughly speaking, this should be interpreted as follows: when~$z \gg N$, each of the roughly~$z^2$ edge weights of the constructed graph encodes~$z$ useful bits of information, in order to allow the instance on~$\approx z^2$ edges to represent all~$z^3$ inputs.

\begin{lemma} \label{lem:construction:rbds}
There is a polynomial-time algorithm that, given integers~$z,d,n,m$ and a set of~$z^3$ instances~$\{ (G_{i,j,k}, R_{i,j,k}, B_{i,j,k}, d) \mid i,j,k \in [z]) \}$ of RBDS such that~$|R_{i,j,k}| = m$ and~$|B_{i,j,k}| = n$ for each~$i,j,k \in [z]$, constructs an undirected graph~$G'$, integer~$t > 0$, and weight function~$w \colon E(G') \to \mathbb{N}_0$ such that:
\begin{enumerate}
    \item the graph~$G'$ contains a clique of total edge-weight exactly~$t$ if and only if there exist~$i^*,j^*,k^* \in [z]$ such that~$G_{i^*,j^*,k^*}$ has a red-blue dominating set of size at most~$d$,\label{construction:iff}
    \item the number of vertices in~$G'$ is~$\Oh(z (m + nd))$, and\label{construction:vertices}
    \item the values of~$t$ \bmp{and~$|V(G')|$} depend only on~$z,d,n$, and~$m$.\label{construction:target}
\end{enumerate}
\end{lemma}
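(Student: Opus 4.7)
The plan is to design $G'$ so that any ``legal'' clique encodes a triple $(i^*,j^*,k^*)\in[z]^3$, a $d$-subset $D$ of the reds of the $(i^*,j^*,k^*)$-th instance, and an assignment of each blue $b\in[n]$ to one of $d$ witness slots; edge weights will then force the total clique weight to equal a fixed target $t$ iff $D$ dominates every blue in $G_{i^*,j^*,k^*}$. Since $|V(G')|$ and $t$ will be defined purely from $z,m,n,d$ and only the edge weights will depend on the individual instances, properties~(2) and~(3) of the lemma are immediate once the construction is in place.

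\medskip\noindent\emph{Combinatorial skeleton.} I would introduce three selector groups $A=\{a_i\}_{i\in[z]}$, $B=\{b_j\}_{j\in[z]}$, $C=\{c_k\}_{k\in[z]}$; for each $i\in[z]$ a red group $R_i=\{r^i_\alpha\}_{\alpha\in[m]}$; and for each $(i,b)\in[z]\times[n]$ a blue-witness group $T^i_b=\{t^i_{b,\ell}\}_{\ell\in[d]}$. The vertex count is $3z+zm+znd=\Oh(z(m+nd))$. Adjacencies are chosen so that (i)~distinct $R_i,R_{i'}$ are non-adjacent while each $R_i$ is internally a clique, forcing the reds of any clique to lie in a single $R_{i^*}$; (ii)~selector $a_i$ is adjacent to $r^{i'}_\alpha$ or $t^{i'}_{b,\ell}$ only when $i=i'$, tying $i^*$ to the chosen selector; (iii)~each $T^i_b$ is independent, but $T^i_b$ and $T^i_{b'}$ for $b\neq b'$ are completely joined, as are $T^i_b$ and $R_i$; and (iv)~all remaining compatible pairs are adjacent. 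These rules force every clique of the intended size $3+d+n$ to consist of some $a_{i^*},b_{j^*},c_{k^*}$, a $d$-subset $D\subseteq R_{i^*}$, and one witness $t^{i^*}_{b,\ell_b}$ per $b\in[n]$.

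\medskip\noindent\emph{Weight design.} Let $A^{i,j,k}_{\alpha,b}\in\{0,1\}$ denote the adjacency entry of the $(i,j,k)$-th RBDS instance. The core edges are the red--witness edges: I would set $w(r^i_\alpha,t^i_{b,\ell})$ to be a large integer whose digits in a base $M$ are indexed by pairs $(j,k)\in[z]^2$, with the $(j,k)$-digit encoding the bit $A^{i,j,k}_{\alpha,b}$. The selector--witness edges $(b_{j^*},t^i_{b,\ell})$ and $(c_{k^*},t^i_{b,\ell})$ carry ``projection'' weights designed to cancel every digit except the one at position $(j^*,k^*)$, so that the contribution of each chosen witness to the clique weight reflects only the bits of the~$(i^*,j^*,k^*)$-th instance. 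The witness $t^{i^*}_{b,\ell}$ is treated as a pointer that pairs blue $b$ with the $\ell$-th slot of $D$; and the weights are calibrated so that the~target $t$ (a~fixed constant depending only on $z,m,n,d$) is reached exactly when, for every $b$, the chosen witness is paired with some $r^{i^*}_\alpha\in D$ with $A^{i^*,j^*,k^*}_{\alpha,b}=1$. Choosing $M$ large enough that no digit carries occur (say $M\ge n+d+1$) guarantees that the clique weight decomposes uniquely into independent per-digit contributions, and a uniform additive shift keeps all weights in $\mathbb{N}_0$.

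\medskip\noindent\emph{Main obstacle.} The hard part is to make the weight design robust against ``cheating'' cliques --- those of wrong size, those straddling multiple $R_i$ or $T^{i}_b$ groups, and those exploiting arithmetic coincidences. The combinatorial adjacency rules rule out most group-inconsistent choices, and the no-carry property of the base-$M$ encoding rules out the rest by making the clique weight a well-defined tuple in $\{0,1,\dots,M-1\}^{z^2\cdot(\text{blues})}$, each component of which must independently equal its prescribed target digit. Once this digit-by-digit analysis goes through, the ``if'' direction is constructive (take the intended clique from any solvable instance) and the ``only if'' direction reads off the dominating set from the red slots of any clique of weight $t$.
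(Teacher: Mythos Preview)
Your high-level plan---select $(i^*,j^*,k^*)$ via gadget vertices and pack the $z^3$ adjacency matrices into edge weights through a positional number system with no carries---matches the paper. But the weight design has a genuine gap at the ``slot/pointer'' step. In any clique containing a witness $t^{i^*}_{b,\ell}$ together with a $d$-subset $D\subseteq R_{i^*}$, the witness is adjacent to \emph{all} of $D$ by your rule~(iii), so all $d$ red--witness edges contribute to the clique weight. Since you define $w(r^i_\alpha,t^i_{b,\ell})$ independently of $\ell$, the contribution at the relevant digit is $\sum_{\alpha\in D}A^{i^*,j^*,k^*}_{\alpha,b}$, which for a dominating $D$ can be anywhere in $\{1,\dots,|D|\}$; you cannot hit a fixed target digit from a variable input, and the index $\ell$ is inert because an unordered subset $D$ has no canonical ``$\ell$-th slot''. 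The same variability afflicts the non-selected digits after your unspecified ``projection'' from the $b_{j^*}$- and $c_{k^*}$-edges. Two smaller issues: you only analyse cliques of the intended size $3+d+n$, whereas the problem allows cliques of any size; and you force $|D|=d$, whereas the input asks for $|D|\le d$.

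The paper resolves this differently. It does not represent blue vertices by vertices of $G'$ at all: they become \emph{digit positions} in a length-$(nz{+}1)$ weight vector (one block of $n$ positions per value of $k$, plus a size-counter position). The instance data sits on the edges $\{b_i,r^j_x\}$ between a single $i$-selector vertex $b_i$ and the $j$-th red group; a single $k$-selector vertex $s_{k^*}$ contributes a $1$ to every position outside block $k^*$ (saturation rather than cancellation). Crucially, for each position $v$ the paper adds a \emph{padding} group $P_v$ of $d{-}1$ universal vertices, each worth a single unit at position $v$ only; this lets a clique top up every digit from ``at least $1$'' to the uniform target $d$, absorbing exactly the variability your construction has no mechanism for. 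Your $T^i_b$ groups cannot play this role since exactly one vertex per group is forced into the clique.
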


\begin{proof}
We describe the construction of~$(G',w,t)$; it will be easy to see that it can be carried out in polynomial time. Label the vertices in each set~$R_{i,j,k}$ arbitrarily as~$r_1, \ldots, r_m$, and similarly label the vertices in each set~$B_{i,j,k}$ as~$b_1, \ldots, b_n$. We construct a graph~$G'$ with edge-weight function~$w$ and integer $t$ such that~$G'$ has a clique of total edge weight exactly $t$ if and only if some~$G_{i, j, k}$ is a YES-instance of~\rbdsshort. In the following construction we interpret edge weights as vectors of length $nz+1$ written in base~$(m+d+2)$, which will be converted to integers later. Starting from an empty graph, we construct~$G'$ as follows; see Figure~\ref{fig:clique}. 

\begin{enumerate}
    
    \item For each~$i \in [z]$, create a vertex~$b_i$. The vertices~$b_i$ form an independent set, so that any clique in~$G'$ contains at most one vertex~$b_i$. \label{step:b}
    
    \item For each~$j \in [z]$, create a vertex~set~$R_j = \{r_1^j, r_2^j, \cdots, r_m^j\}$ and insert edges of weight~$\Vec{0}$ between all possible~pairs of $R_j$.\label{step:r}
    
    \item For each~$k \in [z]$, create a vertex $s_k$.
    The vertices~$s_k$ form an independent set, so that any clique in~$G'$ contains at most one vertex~$s_k$. \label{step:k}
    
    \item For each~$j,k \in [z]$, for each~$x \in [m]$, insert an edge between~$s_k$ and~$r^j_x$ of weight~$\Vec{0}$. \label{step:edge:s:r}
    
    \setcounter{compositionCounter}{\value{enumi}}
\end{enumerate}

    The next step is to ensure that the neighborhood of a vertex~$r_x$ in $G_{i, j, k}$ is captured in the weights of the edges which are incident~on~$r_x^j$ in $G'$.

\begin{enumerate}
     \setcounter{enumi}{\value{compositionCounter}}
    
    \item For each~$i,j \in [z]$, for each~$x \in [m]$, insert an edge between~$b_i$ and~$r^j_x$.\label{step:edge:b:r} 
    
    \item The weight of each edge~$\{b_i, r_x^j\}$ is a vector of length $nz+1$, out of which the least significant $nz$ positions are divided into $z$ blocks of length $n$ each, and the most significant position is 1. The numbering of blocks as well as positions within a given block start with the least significant position. 
     
    For each~$i,j \in [z]$, for each~$x \in [m]$, the weight of edge~$\{b_i, r^j_x\}$ is defined as follows. For each~$k \in [z]$, for each~$q \in [n]$, the value~$v_{k,q}(b_i, r^j_x)$ represents the value of the $q^{th}$ position of the $k^{th}$ block of the weight of $\{b_i, r^j_x\}$. The value is defined based on the neighborhood of vertex $r_x$ in $G_{i, j, k}$ as follows:\label{step:edgewt:b:r}
    \begin{equation}
    v_{k,q}(b_i, r_x^j) = \begin{cases}
    1 & \mbox{if $\{b_q, r_x\} \in E(G_{i, j, k})$} \\
    0 & \mbox{otherwise.}
    \end{cases}
    \end{equation}
    Intuitively, the vector representing the weight of edge~$\{b_i, r^j_x\}$ is formed by a 1 followed by the concatenation of $z$ blocks of length~$n$, such that the $k^{th}$ block is the $0/1$-incidence vector describing which of the~$n$ blue vertices of instance~$G_{i,j,k}$ are adjacent to~$r_x$. 
\setcounter{compositionCounter}{\value{enumi}}
\end{enumerate}
Note that the~$n$ blue vertices of an input instance~$G_{i,j,k}$ are represented by a single blue vertex~$b_i$ in~$G'$. The difference between distinct blue vertices is encoded via different positions of the weight vectors. The most significant position of the weight vectors, which is always set to~$1$ for edges of the form~$\{b_i, r^j_x\}$, will be used to keep track of the number of red vertices in a solution to \rbdsshort. 

\begin{figure}[t]
    \centering
    \includegraphics[scale=0.8]{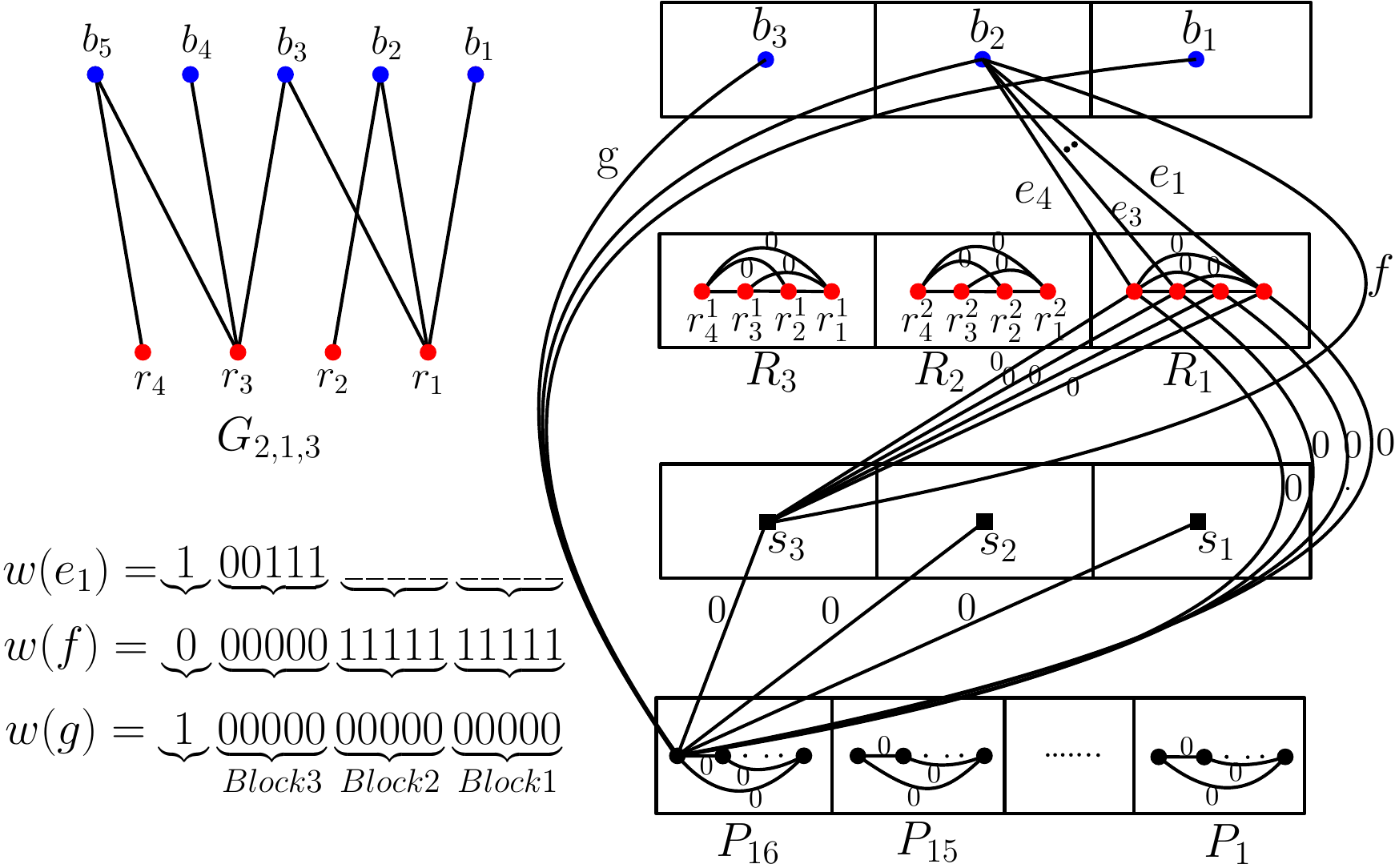}
    \caption{Top-left: An instance~$(G_{2, 1, 3}, R_{2, 1, 3}, B_{2, 1 , 3}, 2)$ of~\rbdsshort with~$m=4, n=5$, and~$d=4$. Right: Illustration of the \eewcshort instance created for a sequence of~$3^3$ inputs including the one on the left. For readability, only a subset of the edges is drawn. Bottom-left: For each type edge with non-zero weight, an example weight is shown in vector form.} \label{fig:clique}
\end{figure}

The graph constructed so far has a mechanism to select the first index~$i$ of an instance~$G_{i,j,k}$ (by choosing a vertex~$b_i$), to select the second index~$j$ (by choosing vertices~$r^j_x$), and to select the third index~$k$ (by choosing a vertex~$s_k$). The next step in the construction adds weighted edges~$\{b_i, s_k\}$, of which a solution clique in~$G'$ will contain exactly one. The weight vector for this edge is chosen so that the domination requirements from all RBDS instances whose third index differs from~$k$ (and which are therefore not selected) can be satisfied ``for free''. 

  
\begin{enumerate}
    \setcounter{enumi}{\value{compositionCounter}}
    \item For each~$i,k \in [z]$, insert an edge between~$b_i$ and~$s_{k}$. 
    
    \item As in Step \ref{step:edgewt:b:r}, the weight of the edge~$\{b_i, s_{k}\}$ is a ($1 + nz$)-tuple consisting of the most significant position followed by~$z$ blocks of length~$n$. There is a~$0$ at the most significant position, block $k$ consists of~$n$ zeros, and the other blocks are filled with ones. Hence the weight of the edge~$\{b_i, s_{k}\}$ is independent of~$i$.  \label{step:edge:bs}
    
    %
    
    
    \setcounter{compositionCounter}{\value{enumi}}
\end{enumerate}

    To be able to ensure that~$G'$ has a clique of exactly weight~$t$ if some input instance~$G_{i,j,k}$ has a solution, we need to introduce padding numbers which may be used as part of the solution to~\eewcshort. 
    
\begin{enumerate}
    \setcounter{enumi}{\value{compositionCounter}}
    \item For each position~$v \in [nz+1]$ of a weight vector, add a vertex set~$P_v= \{p_1^v, p_2^v, \cdots, p_{d-1}^v\}$ to~$G'$. Recall that~$d$ is the upper bound on the solution size for \rbdsshort. \label{step:padding}

    \item For each~$i \in [z]$, for each $v \in [nz+1]$, for each $y \in [d-1]$, add an edge~$\{b_i, p_y^v\}$. The weight of edge~$\{b_i, p_y^v\}$ has value~1 at the~$v^{th}$ position and \bmp{zeros elsewhere.} \label{step:edge:bp}
    
    \item For each~$v \in [nz+1]$, for each~$y \in [d-1]$, add an edge~$\{p_y^v, u\}$ of weight~$\vec{0}$ for all $u \in V(G') \setminus (\{b_i \mid i \in [z]\} \cup \{p_y^v\})$, i.e., for all vertices~$u \neq p_y^v$ which were not already adjacent to~$p_y^v$. \label{step:edge:b,rest}
\end{enumerate}

We define the target weight~$t$ to be the~$(nz+1)$-length vector with value~$d$ at each position, which satisfies Condition~\ref{construction:target}. Observe that $G'$ has $\Oh(z(m+nd))$ vertices: Steps~\ref{step:b} and~\ref{step:k} contribute~$\Oh(z)$ vertices, Step~\ref{step:r} contributes~$\Oh(z m)$, and Step~\ref{step:padding} contributes~$\Oh(d (nz))$. Hence Condition~\ref{construction:vertices} is satisfied. It remains to verify that $G'$ has a~clique of total edge weight exactly~$t$ if and only if some input instance~$G_{i, j, k}$ has a solution of \rbds of size at most~$d$. Before proving this property, we show the following claim which implies that no carries occur when summing up the weights of the edges of a clique in~$G'$.

\begin{claim}
For any clique~$S \subseteq V(G')$, for any position~$v \in [nz + 1]$ of a weight vector, there are at most~$d+m+1$ edges of the clique~$G'[S]$ whose weight vector has a~$1$ at position~$v$, and all other weight vectors are~$0$ at position~$v$.
\end{claim}
\begin{claimproof}
By construction, the entries of the vector encoding an edge weight are either~$0$ or~$1$.

By Steps~\ref{step:b} and~\ref{step:k}, a clique~$S$ in~$G'$ contains at most one vertex~$b_i$ and one vertex~$s_k$. Since~$G'$ does not have edges between vertices in distinct sets~$R_{j}$ and~$R_{j'}$ by Step~\ref{step:r}, any clique in~$G'$ consists of at most one vertex~$b_i$, one vertex~$s_k$, a subset of one set~$R_j$, and a subset of~$\bigcup _{v \in [nz + 1]} P_v$. For any fixed position~$v \in [nz + 1]$, the only edge-weight vectors which can have a~$1$ at position~$v$ are the~$d-1$ edges from~$P_v$ to~$b_i$, the edge~$\{b_i, s_k\}$, and the~$m$ edges between~$R_j$ and~$b_i$. As this yields~$(d-1) + 1 + m$ edges that possibly have a~$1$ at position~$v$, the claim follows.
\end{claimproof}

The preceding claim shows that when we convert each edge-weight vector to an integer by interpreting the vector as its base-$(m+d+2)$-representation, then no carries occur when computing the sum of the edge-weights of a clique. Hence the integer edge-weights of a clique~$S \subseteq V(G')$ sum to the integer represented by vector~$t$, if and only if the edge-weight vectors of the edges in~$S$ sum to the vector~$t$. In the remainder, it therefore suffices to prove that there is a YES-instance~$G_{i^*, j^*, k^*}$ of \rbdsshort among the inputs if and only if~$G'$ has a clique whose edge-weight vectors sum to the vector~$t$. We prove these two implications.

\begin{claim}
If some input graph~$G_{i^*, j^*, k^*}$ has a red-blue dominating set of size at most~$d$, then~$G'$ has a clique of edge-weight exactly~$t$.
\end{claim}
\begin{claimproof}
Let $S \subseteq R_{i^*, j^*, k^*}$ of size at most $d$ be a dominating set of $B_{i^*, j^*, k^*}$. We define a vertex set~$S' \subseteq V(G')$ as follows. Initialize~$S' := \{b_{i^*}, s_{k^*}\}$, and for each vertex~$r_x \in S$, add the corresponding vertex~$r^{j^*}_x \in R_{j^*}$ to~$S'$.

We claim that~$S'$ is a clique in~$G'$. To see this, note that~$R_{j^*}$ is a clique by Step~\ref{step:r}. Vertex~$s_{k^*}$ is adjacent to all vertices of~$R_{j^*}$ by Step~\ref{step:edge:s:r}. Vertex~$b_{i^*}$ is adjacent to all vertices of~$R_{j^*}$ by Step~\ref{step:edge:b:r}. By Step~\ref{step:edge:bs} there is an edge between~$b_{i^*}$ and~$s_{k^*}$. 

Let us consider the weight of clique~$S'$. Since $S$ is a dominating set of $B_{i^*, j^*, k^*}$, if we sum up the weight vectors of the edges~$\{b_{i^*}, r_x^{j^*}\}$ for $r_x \in S$, then by Step~\ref{step:edgewt:b:r} we get a value of at least one at each position of block $k^*$. The most significant position of the resulting sum vector has value $|S| \leq d$. By Step~\ref{step:edge:bs} the weight vector of the edge $\{b_{i^*}, s_{k^*}\}$ consists of all ones, except for block $k^{*}$ and the most significant position, where the value is zero. Thus adding the edge weight of $\{b_{i^*}, s_{k^*}\}$ to the previous sum ensures that each block has value at least~$1$ everywhere, whereas the most significant position has value $|S|$. All other edges spanned by~$S$ have weight~$\vec{0}$. Letting~$t'$ denote the vector obtained by summing the weights of the edges of clique~$S'$, we therefore find that~$t'$ has value~$|S|$ as its most significant position and value at least~$1$ everywhere else.

\bmp{Next we add some additional vertices to the set $S'$ to get a clique of weight exactly $t$. By Step~\ref{step:edge:b,rest}, vertices from the sets~$P_v$ for~$v \in [nz+1]$ are adjacent to all other vertices in the graph and can be added to any clique. All edges incident on a vertex~$p^v_y \in P_v$ have weight~$\vec{0}$, except the edges to vertices of the form~$b_i$ whose weight vector has a~$1$ at the~$v^{th}$ position and~$0$ elsewhere. Since~$S'$ contains exactly one such vertex~$b_{i^*}$, for any~$v \in [nz+1]$ we can add up to~$d-1$ vertices from~$P_v$ to increase the weight sum at position~$v$ from its value of at least~$1$ in~$t'$, to a value of exactly~$d$. Hence~$G'$ has a clique of edge-weight exactly~$t$.}
\end{claimproof}

\begin{claim}
If~$G'$ has a clique of edge-weight exactly~$t$, then some input graph~$G_{i^*, j^*, k^*}$ has a red-blue dominating set of size at most~$d$.
\end{claim}
\begin{claimproof}
Suppose $G'[S']$ is a clique whose total edge weight is exactly $t$. Note that only edges for which one of the endpoints is of the form $b_i$ for $i \in [z]$ have positive edge weights. The remaining edges all have weight~$\vec{0}$. Also, by Step \ref{step:b} there is at most one $b$-vertex in~$S'$. Hence since~$t \neq \vec{0}$ there is exactly one vertex $b_{i^*}$ in~$S'$. By Step~\ref{step:padding} and \ref{step:edge:bp}, the edges of type $\{b_{i^*}, p_y^v\}$ for $p_y^v \in P_v$ contribute at most $d-1$ to the value of each position $v \in [nz+1]$ of the sum. Hence for each position $v \in [nz + 1]$ there is an edge in clique~$S'$ of the form $\{b_{i^*}, r_x^j\}$ or $\{b_{i^*}, s_k\}$ which has a~$1$ at position~$v$. We use this to show there is an input instance with a red-blue dominating set of size at most~$d$.

By Step \ref{step:k}, there is at most one $s$-vertex in~$S'$. Let~$k^* := 1$ if~$S \cap \{s_1, \ldots, s_z\} = \emptyset$, and otherwise let~$s_{k^*}$ be the unique $s$-vertex in~$S'$. Since the weight of the edge~$\{b_{i^*}, s_{k^*}\}$ has zeros in block~$k^*$ by Step~\ref{step:edge:bs}, our previous argument implies that for each of the $n$ positions of block~$k^*$, there is an edge in clique~$S'$ of the form~$\{b_{i^*}, r_x^j\}$ whose weight has a~$1$ at that position. Hence~$S'$ contains at least one $r$-vertex, and by Step~\ref{step:r} all $r$-vertices in the clique~$S'$ are contained in a single set~$R_{j^*}$. We show that~$G_{i^*,j^*,k^*}$ has a red-blue dominating set of size at most~$d$. Let~$S := \{ r_x \mid r_x^{j^*} \in S'\}$. Since for each of the~$n$ positions of block~$k^*$ there is an edge~$\{b_{i^*}, r_x^j\}$ in~$S'$ with a~$1$ at that position, by Step~\ref{step:edge:b:r} each blue vertex of~$B_{i^*,j^*,k^*}$ has a neighbor in~$S$. Hence~$S$ is a red-blue dominating set. By Step~\ref{step:edge:b:r}, the most significant position of each edge incident on $R_{j^*}$ has value~$1$. As the most significant position of the target $t$ is set to~$d$, it follows that~$|S| \leq d$, which proves that~$G_{i^*,j^*,k^*}$ has a red-blue dominating set of size at most~$d$.
\end{claimproof}
This completes the proof of Lemma~\ref{lem:construction:rbds}.
\end{proof}

Lemma~\ref{lem:construction:rbds} forms the main ingredient in a cross-composition that proves kernelization lower bounds for \eewc and its generalization to hypergraphs. For completeness, we formally define the hypergraph version as follows.

\defprob{\eewd(\eewdshort)}{A $d$-uniform hypergraph $G = (V, E)$, weight function $w \colon E(G) \to \mathbb{N}_0$, and a positive integer $t$.}{Does $G$ have a hyperclique of total edge-weight exactly $t$?}

The following theorem generalizes Theorem~\ref{thm:eewc:lb}. \bmp{The case~$d=2$ of the theorem follows almost directly from Lemma~\ref{lem:construction:rbds} and  Theorem~\ref{thm:crosscomp:to:lowerbound}, as the construction in the lemma gives the \mic{crucial} ingredient for a degree-$3$ cross-composition. For larger~$d$, we essentially exploit the fact that increasing the size of hyperedges by one allows one additional dimension of freedom, as has previously been exploited for other kernelization lower bounds for \textsc{$d$-Hitting Set} and~\textsc{$d$-Set Cover}~\cite{DellM12,DellM14}.}
The proof is given in Appendix~\ref{sec:appendix:clique}.

\begin{restatable}{theorem}{lowerboundState} \label{thm:eewd:lb}
$(\bigstar)$ For each fixed $d \geq 2$, \eewd parameterized by the number of vertices $n$ does not admit a generalized kernel of size~$\mathcal{O}(n^{d+1-\varepsilon})$ for any~$\varepsilon > 0$, unless \containment.
\end{restatable}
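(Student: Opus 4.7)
The plan is to establish the bound via a degree-$(d+1)$ OR-cross-composition from \rbds into \eewd, after which Theorem~\ref{thm:crosscomp:to:lowerbound} immediately rules out generalized kernelizations of bitsize $\Oh(n^{d+1-\varepsilon})$ under $\containment$. The case $d=2$ needs no new work: Lemma~\ref{lem:construction:rbds} takes $z^{3}$ equivalent instances of \rbdsshort and outputs an equivalent \eewcshort instance on $\Oh(z(m+nd))$ vertices, which is exactly the parameter bound $\Oh((z^{3})^{1/3}\cdot N^{c})$ that a degree-$3$ cross-composition demands.

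For $d \geq 3$, I would lift the construction of Lemma~\ref{lem:construction:rbds} by exploiting the extra ``addresses'' available inside a size-$d$ hyperedge. The $d=2$ construction identifies the input instance $G_{i,j,k}$ through three coordinates: the selector vertex $b_i$, the group $R_j$ supplying red vertices, and the pairing of $s_k$ with the $z$ blocks of each weight vector. To encode $z^{d+1}$ inputs indexed by $(i_1,\ldots,i_{d-1},j,k)$, I would introduce $d-1$ independent selector sets $B^{(1)},\ldots,B^{(d-1)}$, each of size $z$, in place of the single set $\{b_i\}$. The groups $R_j$, the vertices $s_k$, the padding sets $P_v$, the length $nz+1$ of the weight vectors, and the base-$(m+d+2)$ interpretation remain as in Lemma~\ref{lem:construction:rbds}. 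The resulting $d$-uniform hypergraph has $\Oh(z(m+nd))$ vertices, matching the degree-$(d+1)$ cross-composition parameter budget of $\Oh((z^{d+1})^{1/(d+1)} \cdot N^{c})$.

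The hyperedges fall into three informative types. An \emph{informative} hyperedge $\{b^{(1)}_{i_1},\ldots,b^{(d-1)}_{i_{d-1}},r^{j}_x\}$ carries a $1$ in the most significant position and encodes in block $k$ the incidences of $r_x$ with the $n$ blue vertices of $G_{i_1,\ldots,i_{d-1},j,k}$. A \emph{cancellation} hyperedge $\{b^{(1)}_{i_1},\ldots,b^{(d-1)}_{i_{d-1}},s_k\}$ has weight $0$ in the most significant position and in block $k$, and $1$ elsewhere. A \emph{padding} hyperedge $\{b^{(1)}_{i_1},\ldots,b^{(d-1)}_{i_{d-1}},p^{v}_y\}$ contributes a single $1$ at position $v$. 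Every other $d$-subset is either declared to have weight $\vec{0}$ (when it should be freely includable, e.g.\ a $d$-subset lying entirely inside $\{b^{(\ell)}_{i^*_\ell}\}\cup R_{j^*}\cup\bigcup_v P_v$) or is assigned weight $t+1$ to forbid it from any solution; the latter is legitimate because the introduction already noted that \eewdshort admits a trivial reduction to the complete hypergraph by inserting missing hyperedges with weight $t+1$.

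With these ingredients the analysis of Lemma~\ref{lem:construction:rbds} transfers verbatim. A hyperclique of total weight exactly $t$ (the vector of $d$'s) must use exactly one $b^{(\ell)}_{i^*_\ell}$ from each $B^{(\ell)}$, at most one selector $s_{k^*}$, and $r$-vertices lying inside a single $R_{j^*}$; the most significant coordinate forces the number of chosen $r$-vertices to lie between $1$ and $d$. The cancellation hyperedge with $s_{k^*}$ fills all blocks except block $k^*$ with ones, so the chosen $r$-vertices must jointly cover every one of the $n$ positions of block $k^*$, yielding a red-blue dominating set of size at most $d$ in $G_{i^*_1,\ldots,i^*_{d-1},j^*,k^*}$; padding vertices top up each position to exactly $d$. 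The no-carry claim in base $m+d+2$ is unaffected, since the number of weight-contributing hyperedges incident on any fixed position remains $\Oh(d+m)$. I expect the main technical obstacle to be a careful specification of which $d$-subsets are declared forbidden versus harmless so that exactly the intended hypercliques realize the target weight; this is a bookkeeping task rather than a conceptual one, and is made clean by the ``fill in non-edges with weight $t+1$'' reduction mentioned above.
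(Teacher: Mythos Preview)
Your proposal is correct and yields a valid degree-$(d+1)$ cross-composition, but it takes a different route from the paper for $d \geq 3$. The paper treats Lemma~\ref{lem:construction:rbds} as a black box: it partitions the $z^{d+1}$ inputs into $z^{d-2}$ groups of size $z^{3}$, applies the lemma to each group to obtain $z^{d-2}$ instances of \eewcshort on a common vertex set $\mathcal{V}$ (with a common target $t$), and then overlays these into a single $d$-uniform hypergraph by adjoining $d-2$ independent selector sets $Y_1,\ldots,Y_{d-2}$ of size $z$. A hyperedge consisting of two vertices of $\mathcal{V}$ together with one vertex $y_{\ell,i_\ell}$ from each $Y_\ell$ simply inherits the weight of the corresponding $2$-edge in the \eewcshort instance indexed by $(i_1,\ldots,i_{d-2})$; hyperedges missing some $Y_\ell$ get weight $0$, and hyperedges whose $\mathcal{V}$-part is a non-edge in the selected instance get weight $t+1$. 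Your approach instead opens up Lemma~\ref{lem:construction:rbds} and re-engineers it: you replace the single $b$-selector by $d-1$ parallel selector families $B^{(1)},\ldots,B^{(d-1)}$, so that the informative, cancellation, and padding hyperedges directly carry the length-$(nz+1)$ weight vectors of the $d=2$ construction. The paper's route is more modular---the delicate no-carry and correctness arguments are inherited wholesale from the lemma and need not be re-examined---whereas your route avoids the intermediate layer of $z^{d-2}$ auxiliary \eewcshort instances at the price of re-verifying those arguments in the $d$-uniform setting. Both are sound; just be careful that the symbol $d$ you use for the hyperedge arity is not the same object as the \rbdsshort solution-size bound (also written $d$ in Lemma~\ref{lem:construction:rbds}), since the base $m+d+2$, the size of each padding set $P_v$, and the target vector all refer to the latter.
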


\subsection{Subset Sum}
\label{sec:ss}
We show that \ssfull parameterized by the number of items $n$ does not have generalized kernel of bitsize $\Oh(n^{2-\varepsilon})$ for any $\varepsilon > 0$, unless \containment. We prove the lower bound by giving a linear-parameter transformation from \erbds. \bmp{We use \erbds rather than \rbds as our starting problem for this lower bound because it will simplify the construction: it will avoid the need for `padding' to cope with the fact that vertices are dominated multiple times.}

The \ssfull problem is formally defined as follows.

\defparprob{\textsc{Subset Sum (SS)}}{A multiset~$X$ of $n$ positive integers and a positive integer $t$.}{$n$}{Does there exist a subset $S \subseteq X$ with $\sum_{x \in S} x = t$?}

We use the following problem as the starting point of the reduction.

\defparprob{\textsc{Exact Red-Blue Dominating Set (ERBDS)}}{A bipartite graph $G$ with a bipartition of~$V(G)$ into sets~$R$ (red vertices) and~$B$ (blue vertices), and a positive integer~$d \leq |R|$.}{$n := |V(G)|$}{Does there exist a set $D \subseteq R$ of size \emph{exactly} $d$ such that every vertex in $B$ has exactly one neighbor in $D$?}

Jansen and Pieterse proved the following lower bound for \erbdsshort.
\begin{theorem}[{\cite[Thm. 4.9]{JansenP19}}] \label{thm:erbds:lb}
\textsc{Exact Red-Blue Dominating Set} parameterized by the number of vertices~$n$ does not admit a generalized kernel of size~$\mathcal{O}(n^{2-\varepsilon})$ unless \containment.
\end{theorem}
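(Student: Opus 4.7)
The natural strategy is to establish the lower bound through an OR-cross-composition of degree~$2$ from an NP-hard source problem into \erbdsshort, and then invoke Theorem~\ref{thm:crosscomp:to:lowerbound}. For the source I would take the decision version of \erbdsshort itself, which is NP-hard (for instance via equivalence to \textsc{Exact Cover by $3$-Sets}). The polynomial equivalence relation would group inputs by the triple $(|R|,|B|,d)$, so that within a single class all $z$ instances share the same parameters $m$, $n$, $d$; malformed instances are sent to a trivial NO-instance.

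Given $z$ equivalent instances, pad $z$ up to a square and arrange them on an $a \times a$ grid with $a = \lceil\sqrt{z}\rceil$, writing $I_{i,j}$ for the instance at position $(i,j)$. The composed bipartite graph $G'$ would contain $\Oh(a)$ \emph{row-selector} red vertices $\rho_1,\dots,\rho_a$ and \emph{column-selector} red vertices $\gamma_1,\dots,\gamma_a$, a single \emph{shared} block of $m$ red vertices $r^j_1,\dots,r^j_m$ per column~$j$, and $\Oh(n)$ blue \emph{requirement} vertices per column, for a total of $n' = \Oh(\sqrt{z}\cdot(m+n))$ vertices. Small blue gadgets (say of size $\Oh(a)$) would enforce that any exact red-blue dominating set of $G'$ picks exactly one row-selector $\rho_{i^*}$ and exactly one column-selector $\gamma_{j^*}$. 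The adjacencies of $\rho_{i^*}$, $\gamma_{j^*}$, and the column-$j$ red vertices to the blue requirement vertices would be calibrated so that a blue requirement vertex attached to the encoding of instance $I_{i,j}$ is dominated exactly once if and only if either $(i,j)\neq(i^*,j^*)$ (in which case $\rho_{i^*}$ or $\gamma_{j^*}$ dominates it precisely once as ``free'' cancellation) or $(i,j)=(i^*,j^*)$ and the subset of $\{r^{j^*}_1,\dots,r^{j^*}_m\}$ chosen corresponds to an exact red-blue dominating set of $I_{i^*,j^*}$.

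The principal technical obstacle will be engineering these adjacencies so that the exact-one domination constraint faithfully enforces both \emph{completeness} (any source YES-instance yields a valid ERBDS in $G'$) and \emph{soundness} (every ERBDS of $G'$ isolates a unique cell $(i^*,j^*)$ together with a valid solution of $I_{i^*,j^*}$), without spurious solutions that mix red vertices corresponding to different rows or different columns. The exactness of the problem cuts both ways here: it forbids approximate solutions and hence rules out the simpler ``at-least-one'' gadgets, but in return it lets one balance domination counts via parity-style cancellation, typically arranged by introducing auxiliary ``ballast'' red vertices and carefully tuning the global target $d'$. Once such a construction is given, properties (i) and (ii) of Definition~\ref{def:deg-dcross-comp} are routinely verified, with the key bound being $n' = \Oh(\sqrt{z}\cdot(m+n)^{\Oh(1)})$, so that Theorem~\ref{thm:crosscomp:to:lowerbound} applied with $d=2$ yields the desired $\Oh(n^{2-\varepsilon})$ kernelization lower bound under \ncontainment.
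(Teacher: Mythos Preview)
The paper does not give a self-contained proof of this statement at all: Theorem~\ref{thm:erbds:lb} is quoted verbatim from Jansen and Pieterse~\cite[Thm.~4.9]{JansenP19}, and the only argument the present paper adds is a two-line observation that the ``exactly~$d$'' variant used here is at least as hard as the ``at most~$d$'' variant proved in~\cite{JansenP19} (pad with~$d$ isolated red vertices). So the intended ``proof'' here is: cite, then pad. Your proposal instead attempts to reprove the cited result from scratch via a degree-$2$ cross-composition, which is a genuinely different and far more ambitious route.

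That said, your sketch has a real gap precisely where you flag it. With only~$\Oh(n)$ blue requirement vertices and~$m$ shared red vertices \emph{per column}, the red--blue adjacencies inside a single column can encode at most one instance's worth of information, not the~$a$ instances~$I_{1,j},\ldots,I_{a,j}$ that share that column. You push the multiplexing onto the row selectors~$\rho_i$, but under the \emph{exact-one} domination constraint this is delicate: a blue vertex attached to instance~$I_{i,j}$ must end up dominated exactly once whether or not~$(i,j)=(i^*,j^*)$, and the ``free cancellation'' you describe has to simultaneously (a) neutralise all non-selected cells without overshooting to a count of~$2$, and (b) leave the selected cell's count governed solely by the chosen subset of~$r^{j^*}_x$. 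You do not give the adjacency pattern that achieves this, and parity-style tricks alone do not obviously suffice here since \erbdsshort counts in~$\mathbb{N}$, not~$\mathbb{F}_2$. In short, the plan is plausible but the load-bearing gadget is missing; by contrast, the paper sidesteps all of this by citation.
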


Actually, the lower bound they proved is for a slightly different variant of \erbdsshort where the solution $D$ is required to have size \emph{at most $d$}, instead of exactly~$d$. Observe that the variant where we demand a solution of size \emph{exactly} $d$ is at least as hard as the \emph{at most} $d$ version: the latter reduces to the former by inserting~$d$ isolated red vertices.
Therefore the lower bound by Jansen and Pieterse also works for the version we use here, which will simplify the presentation. 

\subsetSumState*
\begin{proof}
Given a graph~$G$ with a bipartition of $V(G)$ into $R$ and $B$ with $R = \{r_1, r_2, \ldots , r_{n_R}\}$, $B = \{b_1, b_2, \ldots , b_{n_B}\}$, and target value~$d$ for \erbdsshort, we transform it to an equivalent instance $(X, t)$ of \ssshort such that $|X| = n_R$. We start by defining $n_R$ numbers $N_1, N_2, \ldots, N_{n_R}$ in base $(n_R + 1)$. For each~$i \in [n_R]$, the number $N_i$ consists of $(n_B + 1)$ digits. We denote the digits of the number~$N_i$ by $N_i[1], \ldots, N_i[n_B + 1]$, where $N_i[1]$ is the least significant and $N_i[n_B + 1]$ is the most significant digit. Intuitively, the number~$N_i$ corresponds to the red vertex~$r_i$. See Figure~\ref{fig:subsetsum} for an illustration.

For each~$i \in [n_R]$, for each $j \in [n_B + 1]$, digit $N_i[j]$ of number $N_i$ is defined as follows:
\begin{equation}
N_i[j] = \begin{cases}
1 & \mbox{if $j = n_B +1$} \\
1 & \mbox{if $j \in [n_B]$ and $\{r_i, b_j\} \in E(G)$} \\
0 & \mbox{otherwise.}
\end{cases}
\end{equation}
\bmp{Hence the most significant digit of each number is~$1$, and the remaining digits of number~$N_i$ form the $0/1$-vector indicating to which of the~$n_B$ blue vertices~$r_i$ is adjacent in~$G$.}

To complete the construction we set $X = \{N_1, N_2, \ldots, N_{n_R}\}$ and we define $t$ as follows:
\begin{equation}
t = d \underbrace{11\ldots 1}_{n_B \text{ times}}
\end{equation}
\bmp{Observe that under these definitions, there are no carries when adding up a subset of the numbers in~$X$, as each digit of each of the~$n_R$ numbers is either~$0$ or~$1$ and we work in base~$n_R+1$.}

\bmp{The number of items~$|X|$ in the constructed instance of \ssshort is~$n_B$, linear in the parameter~$|V(G)|$ of \erbdsshort. It is easy to see that the construction can be carried out in polynomial time. To complete the linear-parameter transformation from \erbdsshort to \ssshort, it remains to prove that}~$G$ has a set~$D \subseteq R$ of size exactly~$d$ such that every vertex in~$B$ has exactly one neighbor in~$D$, if and only if there exist a set~$S \subseteq X$ with $\sum_{x \in S} x = t$.

\begin{figure}
    \centering
    \includegraphics[scale=0.8]{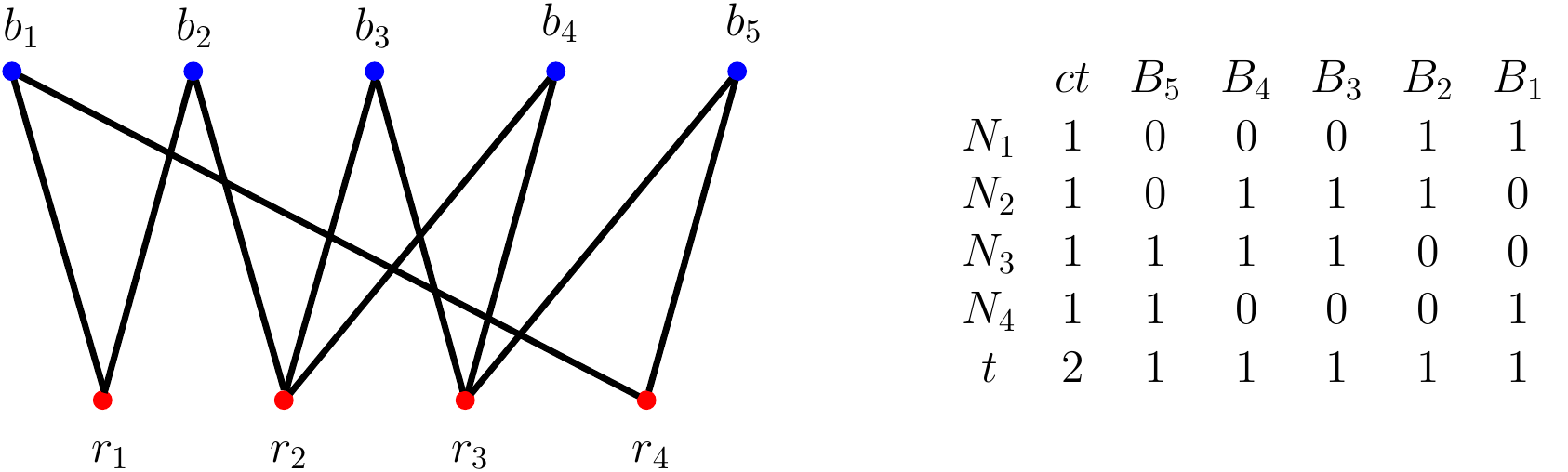}
    \caption{Left: An instance of \erbdsshort with $n_R = 4$, $n_B= 5$, and $d=2$. Right: Illustration of the \ssshort instance created for the given input. Note that $\{r_2, r_4\}$ and the numbers $\{N_2, N_4\}$ form a solution for \erbdsshort and \ssshort, respectively. The leftmost column corresponds to the total count ($ct$) of the number of elements; the remaining columns correspond to blue vertices.}
    \label{fig:subsetsum}
\end{figure}

In the forward direction, suppose that there exists a set $D \subseteq R$ of size exactly $d$ such that every vertex in $B$ has exactly one neighbor in $D$. \bmp{We claim that~$\{ N_i \mid r_i \in D\}$ is a solution to \ssshort. The resulting sum has value~$d$ at the most significant digit since~$|D| = d$. All other digits correspond to vertices in~$B$. Since each blue vertex is adjacent to exactly one vertex from~$D$ it is easy to verify that all remaining digits of the sum are exactly one, implying that the numbers sum to exactly~$t$.}

For the reverse direction, suppose there is a set~$S \subseteq X$ with $\sum_{x \in S} x = t$. 
\bmp{Since the most significant digit of~$t$ is set to~$d$ and each number in~$X$ has a~$1$ as most significant digit, we have~$|S| = d$ since there are no carries during addition.} 
Define~$D := \{ r_i \mid N_i \in S \}$ as the set of the red vertices corresponding to the numbers in~$S$. As~$\sum _{x \in S} x = t$ and no carries occur in the summation, we have $\sum _{x \in S} x[j] = t[j] = 1$ for each~$j \in [n_B]$. As the $j$-th digit of all numbers is either $0$ or $1$ by definition, there is a unique $N_i \in S$ with $N_i[j] = 1$, so that $r_i \in D$ is the unique neighbor of $b_j$ in $D$. This shows that $D$ is an exact red-blue dominating set of size $d$, concluding the linear-parameter transformation.

\bmp{
If there was a generalized kernelization for \ssshort of size~$\Oh(n^{2-\varepsilon})$, then we would obtain a generalized kernelization for \erbdsshort of size~$\Oh(n^{2-\varepsilon})$ by first transforming it to \ssshort, incurring only a constant-factor increase in the parameter, and then applying the generalized kernelization for the latter. Hence by contraposition and Theorem~\ref{thm:erbds:lb}, the claim follows.}
\end{proof}

\subsection{Constraint Satisfaction Problems}\label{sec:csp}

In this section we extend our lower bounds to cover Boolean Constraint Satisfaction Problems (CSPs).
We employ the recently introduced framework~\cite{JansenW20} of reductions among different CSPs to make a connection with \eewdshort.
We start with introducing terminology necessary to identify crucial properties of CSPs.

\subparagraph*{Preliminaries on CSPs}
A~$k$-ary constraint is a~function $f 
\colon \{0,1\}^k \rightarrow \{0,1\}$.
We refer to $k$ as the~arity of~$f$, denoted $\ar(f)$.
We always assume that the~domain is Boolean.
A~constraint $f$ is satisfied by an~input $s \in \{0,1\}^k$ if $f(s) = 1$.
A~constraint language
$\Gamma$ is a~finite collection of~constraints $\{f_1, f_2, 
\dots, f_\ell\}$, potentially with different arities.
{A~\emph{constraint application}, of~a~$k$-ary constraint $f$ to a~set of~$n$ Boolean~variables, is a~triple $\langle f, (i_1, i_2, \dots i_k), w \rangle$, where
the~indices $i_j \in [n]$ select $k$ of~the~$n$ Boolean~variables to whom the~constraint is applied, and $w$ is an integer~weight.}
The variables can~repeat in a~single application.

A~formula $\Phi$ of~\cspg is a~set of~constraint applications from $\Gamma$ over a~{common} set of~variables. 
For an~assignment $x$, that is, a mapping
from the set of variables to $\{0, 1\}$, the~integer $\Phi(x)$ is the~sum of~weights of~the~constraint applications satisfied by $x$.
The considered decision problems are defined as follows.

\defparprob{\ewcspg}
{A formula $\Phi$ of \cspg over $n$ variables, an integer $t\in \zz$.}
{$n$}
{Is there an assignment  $x$ for which $\Phi(x) = t$?}

\defparprob{\mwcspg}
{A formula $\Phi$ of \cspg over $n$ variables, an integer $t\in \zz$.}
{$n$}
{Is there an assignment $x$ for which $\Phi(x) \geq t$?}

\depr{
\defparprob{Weighted Set-Exact CSP$(\Gamma, \mathbb{W})$}
{A formula $\Phi$ of $CS(\Gamma, \mathbb{W})$ over $n$ variables, set $S \subseteq \mathbb{W}$}
{$n$}
{Is there an assignment vector $x$ for which $\Phi(x) \in S$?}
}

The compressibility of \mwcspg has been studied by Jansen and Włodarczyk~\cite{JansenW20}, who obtained essentially optimal kernel sizes for every $\Gamma$ in the case where the weights are polynomial with respect to $n$.
Even though the upper and lower bounds in~\cite{JansenW20} are formulated for \mwcspg, they could be adapted to work with \ewcspg.
The crucial idea which allows to determine compressibility of $\Gamma$ is the representation of constraints via multilinear polynomials.

\begin{definition}\label{def:characteristic-poly}
For a~$k$-ary constraint $f \colon \{0,1\}^k \to \{0,1\}$ its characteristic polynomial $P_f$ is the~unique $k$-ary multilinear polynomial over $\mathbb{R}$ satisfying $f(x) = P_f(x)$ for any $x \in \{0,1\}^k$.
\end{definition}

It is known that such a polynomial always exists and it is unique~\cite{NisanS94}.

\begin{definition}\label{def:deg-gamma}
The degree of constraint language $\Gamma$, denoted $\deg(\Gamma)$, is the maximal degree of a~characteristic polynomial $P_f$ \bmp{over all} $f \in \Gamma$.
\end{definition}

The main result \bmp{of Jansen and Włodarczyk}~\cite{JansenW20} states that \mwcspg with polynomial weights admits a kernel of $\Oh(n^{\deg(\Gamma)} \log n)$ \bmp{bits} and, as long as the problem is NP-hard, it does not admit a kernel of size $\Oh(n^{\deg(\Gamma) - \eps})$, for any $\eps > 0$, unless \containment.
It turns out that in the variant when we allow both positive and negative weights the problem is NP-hard whenever $\deg(\Gamma) \ge 2$~\cite{csp-nphard}.
The lower bounds are obtained via linear-parameter transformations, where the parameter is the number of variables $n$.
We shall take advantage of the fact that these transformations still work for an~unbounded range of weights.

\begin{lemma}[\cite{JansenW20}, Lemma 5.4]\label{thm:deg1}
For constraint languages $\Gamma_1, \Gamma_2$ such that $2 \le \deg(\Gamma_1) \le \deg(\Gamma_2)$, there is a polynomial-time algorithm that, given a formula $\Phi_1 \in \csp(\Gamma_1)$ on $n_1$ variables and integer $t_1$, returns a formula
$\Phi_2 \in \csp(\Gamma_2)$ on $n_2$ variables and integer $t_2$, such that
\begin{enumerate}
    \item $n_2 = \Oh(n_1)$,
    \item $\exists_x \Phi_1(x) = t_1 \Longleftrightarrow \exists_y \Phi_2(y) = t_2$,
    \item $\exists_x \Phi_1(x) \ge t_1 \Longleftrightarrow \exists_y \Phi_2(y) \ge t_2$.
\end{enumerate}
\end{lemma}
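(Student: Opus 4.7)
The plan is to express every weighted constraint application appearing in $\Phi_1$ as an integer-weighted combination of constraint applications from $\Gamma_2$, working at the level of characteristic polynomials. Write $d_i = \deg(\Gamma_i)$, so $2 \le d_1 \le d_2$. For any \cspg-formula $\Psi$ and any Boolean assignment $x$, the value $\Psi(x)=\sum_c w_c\cdot P_c(x)$ is a multilinear polynomial in the variables of total degree at most $\deg(\Gamma)$, where the sum ranges over constraint applications $c$ of weight $w_c$ and $P_c$ is the composition of the constraint's characteristic polynomial with its variable-selection map. Hence it suffices to reconstruct the polynomial $P_{\Phi_1}$, up to a uniform positive integer scaling factor, as a $\mathbb{Z}$-weighted sum of $\Gamma_2$-constraint applications.

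The core ingredient is a \emph{representation lemma}: there exist a positive integer $M=M(\Gamma_1,\Gamma_2)$ and a number $s=s(\Gamma_1,\Gamma_2)=\Oh(1)$ of auxiliary variables, both depending only on the two languages, such that for every application $c$ of a $\Gamma_1$-constraint to variables of $\Phi_1$ one can compute in polynomial time an integer-weighted formula $\Psi_c \in \csp(\Gamma_2)$ satisfying $\Psi_c(y)=M\cdot P_c(y)$ on every Boolean $y$ that sets the $s$ auxiliary variables to designated constants. The proof starts by fixing a constraint $f_0\in\Gamma_2$ whose characteristic polynomial contains a monomial of degree exactly $d_2$; by applying $f_0$ to tuples with repeated arguments (exploiting $y^2=y$ on Booleans to reduce individual degrees) and with selected arguments chosen to be the auxiliary variables playing the roles of the constants $0$ and $1$, one peels off lower-order contributions and isolates, by downward induction on monomial degree, every multilinear monomial of degree at most $d_2$ over the $n_1$ original variables as a $\mathbb{Q}$-linear combination of $\Gamma_2$-applications. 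Clearing the common denominator defines $M$.

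Given the representation lemma, the reduction builds $\Phi_2$ by replacing each constraint application of $\Phi_1$ with the corresponding $\Psi_c$ and appending a \emph{forcing gadget}: a fixed collection of $\Gamma_2$-applications on the auxiliary variables, weighted so heavily that any assignment violating the designated constants differs in value from the correct one by more than $M\cdot|t_1|+1$. Letting $W_0$ denote the contribution of the gadget under the correct assignment, every $y$ respecting the gadget satisfies $\Phi_2(y)=M\cdot\Phi_1(y|_{[n_1]})+W_0$; setting $t_2 := M\cdot t_1+W_0$ yields equivalences (2) and (3) immediately (using $M>0$ for (3)). Since only $\Oh(1)$ auxiliary variables are added, $n_2=n_1+\Oh(1)=\Oh(n_1)$, and the algorithm is clearly polynomial-time. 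The main obstacle is proving the representation lemma for an arbitrary $\Gamma_2$ of degree $d_2$: the inductive monomial extraction is where the assumption $d_2\ge d_1\ge 2$ is really used, and one must additionally argue that a $\Gamma_2$ of degree at least $2$ is never so degenerate that the auxiliary variables cannot be singled out by a suitable forcing gadget.
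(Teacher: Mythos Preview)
The paper does not prove this statement; it is imported from \cite{JansenW20} (their Lemma~5.4) and invoked as a black box, so there is no in-paper argument against which to compare your proposal.

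On the substance of your sketch: the polynomial-representation strategy is the natural one and is essentially how such transfers between constraint languages are carried out. Two points need sharpening before it becomes a proof. First, the penalty ``more than $M\cdot|t_1|+1$'' for the forcing gadget is miscalibrated: the main-formula contribution $M\cdot\Phi_1(x)$ ranges over an interval whose width is governed by $||\Phi_1||$, not by $|t_1|$, so a bad auxiliary assignment combined with an extreme $x$ could still hit $t_2$ (or exceed it, breaking condition~(3)) unless the penalty dominates $M\cdot||\Phi_1||$. Second, and more seriously, you correctly flag but do not discharge the existence of a forcing gadget for an arbitrary $\Gamma_2$ of degree at least~$2$. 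This is genuinely delicate: a language such as $\{\mathrm{XOR}_2\}$ (characteristic polynomial $x+y-2xy$, degree~$2$) is invariant under complementing all inputs, so no weighted combination of its applications can prefer the assignment $(z_0,z_1)=(0,1)$ over $(1,0)$. One must therefore either arrange the reduction so that both settings of the auxiliaries yield equivalent instances, or dispense with explicit constants altogether via a different monomial-extraction scheme. Until that obstacle is addressed, the outline is plausible but incomplete.
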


\subparagraph*{Kernel lower bounds for CSP}

The lower bound of \bmp{$\Omega(n^{\deg(\Gamma) - \eps})$} has been obtained via a reduction from $d$-SAT (with $d = \deg(\Gamma)$) to \mwcspg, combined with the fact that \textsc{Max $d$-SAT} does not admit a kernel of size $\Oh(n^{d - \eps})$ for $d \ge 2$~\cite{DellM14, JansenW20}.
We are going to show that when the weights are arbitrarily large, then 
the optimal compression size for \ewcspg becomes essentially $\Oh(n^{\deg(\Gamma) + 1})$, so the exponent is always larger by one compared to the case with polynomial weights. 
To this end, we are going to combine the aforementioned reduction framework with our lower bound for \eewd.

Consider a constraint language $\Gamma_{\textsc{and}}^d$ consisting of a single $d$-ary constraint $\textsc{AND}_d$, which is satisfied only if all the arguments equal 1.
The characteristic polynomial of $\textsc{AND}_d$ is simply $P(x_1, \dots, x_d) = x_1x_2\cdots x_d$, hence the degree of $\Gamma_{\textsc{and}}^d$ equals $d$.
We first translate our lower bounds for the hyperclique problems into a lower bound for \ewcsp$(\Gamma_{\textsc{and}}^d)$ for all $d \ge 2$, and then extend it to other CSPs.

\begin{restatable}{lemma}{cspGammaLB}
\label{lem:gamma_and_lb}$(\bigstar)$
For all $d \ge 2$, \ewcsp$(\Gamma_{\textsc{and}}^d)$
does not admit a generalized kernel of size $\Oh(n^{d + 1 - \eps})$, for any $\eps > 0$, unless
\containment.
\end{restatable}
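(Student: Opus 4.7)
My plan is to reduce \eewd to \ewcsp$(\Gamma_{\textsc{and}}^d)$ through a linear-parameter transformation and then invoke Theorem~\ref{thm:eewd:lb}. First, I would normalize an input instance $(G, w, t)$ of \eewd on $n$ vertices so that $G$ becomes the complete $d$-uniform hypergraph on $V(G)$: for every $d$-subset $e \in \binom{V(G)}{d} \setminus E(G)$, insert $e$ as a hyperedge with weight $t+1$. Because the original weights are non-negative and the target is exactly $t$, any vertex set that uses a newly inserted hyperedge contributes strictly more than $t$ and hence cannot be a solution, so this step preserves the answer while rendering the hyperclique requirement trivial.

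Second, I would construct a CSP instance $(\Phi, t)$ on variables $x_{v_1}, \dots, x_{v_n}$. For each $d$-subset $e = \{v_{i_1}, \dots, v_{i_d}\}$ of the now-complete hypergraph, with weight $w'(e)$, add the constraint application $\langle \textsc{AND}_d, (i_1, \dots, i_d), w'(e)\rangle$ to $\Phi$. For any assignment $x$, letting $S := \{v_i \mid x_{v_i} = 1\}$, the satisfied constraint applications are precisely those whose $d$-subsets lie inside $S$, so $\Phi(x) = \sum_{e \in \binom{S}{d}} w'(e)$. Therefore $\Phi(x) = t$ if and only if $S$ is a solution of the normalized instance, which in turn is equivalent to the original \eewd instance being a YES-instance.

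The construction is polynomial-time and keeps the parameter unchanged ($n_2 = n_1$), so it is a linear-parameter transformation. If \ewcsp$(\Gamma_{\textsc{and}}^d)$ admitted a generalized kernel of size $\Oh(n^{d + 1 - \eps})$, composing it with this transformation would yield a generalized kernel of the same size for \eewd, contradicting Theorem~\ref{thm:eewd:lb} unless \containment. The only point where care is needed is enforcing the hyperclique condition, since \ewcsp$(\Gamma_{\textsc{and}}^d)$ has no native mechanism to forbid configurations; the normalization using $(t+1)$-weight hyperedges handles this cleanly, after which the translation between hyperclique edge-weights and $\textsc{AND}_d$-satisfaction weights is immediate.
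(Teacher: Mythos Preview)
Your proposal is correct and follows essentially the same approach as the paper: a linear-parameter transformation from \eewd to \ewcsp$(\Gamma_{\textsc{and}}^d)$ that introduces one Boolean variable per vertex, one $\textsc{AND}_d$ constraint per size-$d$ vertex subset, and assigns a prohibitively large weight to non-hyperedges so that any satisfying assignment of weight exactly $t$ corresponds to a hyperclique of weight $t$. The only cosmetic difference is that the paper uses $W+1$ (with $W$ the total edge weight) as the blocking weight whereas you use $t+1$; both choices work since the original weights are non-negative.
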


\bmp{The lower bound for~\ewcsp$(\Gamma_{\textsc{and}}^d)$ given by Lemma~\ref{lem:gamma_and_lb} yields a lower bound for general~\ewcsp$(\Gamma)$ using the reduction framework described above.}

\begin{theorem}\label{lem:gamma_general_lb}
For any~$\Gamma$ with~$\deg(\Gamma) \ge 2$, \ewcsp$(\Gamma)$
does not admit a generalized kernel of size $\Oh(n^{\deg(\Gamma) + 1 - \eps})$, for any $\eps > 0$, unless
\containment.
\end{theorem}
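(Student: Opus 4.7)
The plan is to derive the general lower bound by combining Lemma~\ref{lem:gamma_and_lb}, which handles the specific constraint language $\Gamma_{\textsc{and}}^d$ of degree $d$, with the degree-preserving reduction framework given by Lemma~\ref{thm:deg1}. The overall strategy is a linear-parameter transformation followed by a contrapositive argument: if \ewcsp$(\Gamma)$ admitted a kernel that is too small, we could pull it back to a kernel for \ewcsp$(\Gamma_{\textsc{and}}^d)$ of essentially the same size, contradicting Lemma~\ref{lem:gamma_and_lb}.

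More concretely, set $d := \deg(\Gamma) \ge 2$. Observe that the constraint language $\Gamma_{\textsc{and}}^d$ has degree exactly $d$, since the characteristic polynomial of $\textsc{AND}_d$ is the monomial $x_1 x_2 \cdots x_d$ of degree $d$. Thus $\Gamma_1 := \Gamma_{\textsc{and}}^d$ and $\Gamma_2 := \Gamma$ satisfy the hypothesis $2 \le \deg(\Gamma_1) \le \deg(\Gamma_2)$ of Lemma~\ref{thm:deg1}. Applying that lemma yields a polynomial-time procedure that converts any instance $(\Phi_1, t_1)$ of \ewcsp$(\Gamma_{\textsc{and}}^d)$ on $n_1$ variables into an equivalent instance $(\Phi_2, t_2)$ of \ewcsp$(\Gamma)$ on $n_2 = \Oh(n_1)$ variables, and the equivalence of the exact-weight questions is precisely item 2 of that lemma. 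This is a linear-parameter transformation from \ewcsp$(\Gamma_{\textsc{and}}^d)$ to \ewcsp$(\Gamma)$, both parameterized by the number of variables.

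Now suppose for contradiction that \ewcsp$(\Gamma)$ does admit a generalized kernelization of size $\Oh(n^{d + 1 - \eps})$ for some $\eps > 0$. Composing the linear-parameter transformation above with this hypothetical kernelization produces a generalized kernelization for \ewcsp$(\Gamma_{\textsc{and}}^d)$: given an instance on $n_1$ variables, first transform it to an instance of \ewcsp$(\Gamma)$ on $n_2 = \Oh(n_1)$ variables, and then apply the kernelization to obtain an output of size $\Oh(n_2^{d + 1 - \eps}) = \Oh(n_1^{d + 1 - \eps})$. This contradicts Lemma~\ref{lem:gamma_and_lb} unless \containment, completing the proof.

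The argument is essentially a composition and contains no substantial obstacle beyond verifying that the cited lemmas apply: the only point to check is that $\deg(\Gamma_{\textsc{and}}^d) = d$, so that Lemma~\ref{thm:deg1} can be invoked with matching degrees, and that the reduction preserves the exact-weight question (not merely the maximum-weight question). Both are immediate from the definitions and from item 2 of Lemma~\ref{thm:deg1}.
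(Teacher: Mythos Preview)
Your proposal is correct and follows essentially the same approach as the paper: set $d=\deg(\Gamma)$, note that $\deg(\Gamma_{\textsc{and}}^d)=d$ so Lemma~\ref{thm:deg1} applies, obtain a linear-parameter transformation from \ewcsp$(\Gamma_{\textsc{and}}^d)$ to \ewcspg, and conclude by contraposition using Lemma~\ref{lem:gamma_and_lb}. The paper's proof is slightly terser but the structure and ingredients are identical.
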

\begin{proof}
Consider an $n$-variable instance $(\Phi_1, t_1)$ of \textsc{Weighted Exact} \csp$(\Gamma_{\textsc{and}}^d)$, where $d = \deg(\Gamma)$.
It holds that $\deg(\Gamma_{\textsc{and}}^d) = d$.
By Lemma~\ref{thm:deg1}, there is a linear-parameter transformation that \bmp{translates}  
$(\Phi_1, t_1)$ into an equivalent instance $(\Phi_2, t_2)$ of \textsc{Weighted Exact} \cspg.
If we could compress $(\Phi_2, t_2)$ into $\Oh(n^{d + 1 - \eps})$ bits, this would entail the same compression for $(\Phi_1, t_1)$.
The claim follows from Lemma~\ref{lem:gamma_and_lb}.
\end{proof}

\bmp{This concludes the discussion of kernelization lower bounds. The kernelization upper bounds discussed in the introduction can be found in Appendix~\ref{sec:ub}.} 

\section{Node-weighted Vertex Cover in bipartite graphs}
\label{sec:vc:main}

\subparagraph*{Preserving all minimum solutions}

For a graph~$G$ with node-weight function~$w \colon V(G) \to \mathbb{N}_+$, we denote by~$\mathcal{C}(G,w)$ the collection of subsets of~$V(G)$ which are minimum-weight vertex covers of~$G$. For $n$-vertex \emph{bipartite} graphs there exists a weight function with range~$[n]$ that preserves the set of minimum-weight vertex covers, which can be computed efficiently.

\begin{restatable}{theorem}{bipartiteWeightStatement}
\label{theorem:bipartiteweightreduction}$(\bigstar)$
There is an algorithm that, given an $n$-vertex bipartite graph~$G$ and node-weight function~$w \colon V(G) \to \mathbb{N}_+$, outputs a weight function~$w^*\colon V(G) \to [n]$ such that $\mathcal{C}(G, w) = \mathcal{C}(G, w^*)$. The running time of the algorithm is polynomial in~$|V(G)|$ and the binary encoding size of~$w$.
\end{restatable}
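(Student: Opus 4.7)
The plan is to exploit the LP-dual relationship between weighted vertex cover on bipartite graphs and maximum $b$-matching, combined with the Picard--Queyranne characterisation of the lattice of minimum $s$-$t$ cuts via strongly connected components of the residual graph. Since bipartite vertex cover has a totally unimodular constraint matrix, its LP relaxation is integral and a minimum $w$-weight vertex cover can be computed in polynomial time from a maximum $b$-matching with $b = w$.

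Concretely, I would first model minimum-weight vertex cover on $G = (A \cup B, E)$ as a minimum $s$-$t$ cut in the standard flow network $N$: a source $s$ with arcs $s \to a$ of capacity $w(a)$ for $a \in A$, a sink $t$ with arcs $b \to t$ of capacity $w(b)$ for $b \in B$, and middle arcs $a \to b$ of infinite capacity for each $\{a,b\} \in E$. This yields a bijection between minimum $s$-$t$ cuts of $N$ and the elements of $\mathcal{C}(G,w)$. Compute a maximum flow $f^*$ in polynomial time and form the residual graph $N_{f^*}$. By the Picard--Queyranne theorem, the minimum $s$-$t$ cuts of $N$ correspond exactly to the down-sets of a poset $P$ on the strongly connected components of $N_{f^*}$ containing the SCC of $s$ but not that of $t$. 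The poset $P$ has at most $n+2$ elements, every vertex of $G$ lies in a unique SCC, and all vertices inside a single SCC behave identically across every minimum vertex cover (flipping an SCC means flipping the entire block across the cut).

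Using this structure I would define $w^*$ by rank-based labelling. Fix a topological ordering of the SCCs of $N_{f^*}$ and assign to each vertex $v \in V(G)$ an integer weight in $[n]$ that depends only on the rank of its SCC, with $A$- and $B$-sides treated symmetrically: vertices forced into every minimum vertex cover (those in $A$ inside the SCC of $t$, or those in $B$ inside the SCC of $s$) receive small weights, while vertices forbidden from every minimum vertex cover receive large weights, and flexible SCCs receive intermediate values according to their position in $P$. The values lie in $[n]$ because there are at most $n$ SCCs among vertices of $G$. Correctness is established by constructing a maximum $b$-matching for $w^*$ that is supported on the same residual structure as $f^*$, and verifying via complementary slackness between the VC LP and the $b$-matching LP that a vertex cover is optimal under $w^*$ if and only if it corresponds to a down-set of $P$, giving $\mathcal{C}(G,w^*) = \mathcal{C}(G,w)$.

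The main obstacle is ensuring that the rank-based integer weights neither lose any member of $\mathcal{C}(G,w)$ nor produce spurious new minimum vertex covers that were strictly sub-optimal under $w$. This is where the bipartite (totally unimodular) structure is decisive: the full flexibility within $\mathcal{C}(G,w)$ comes from independently flipping individual SCCs along with the constraints of the poset $P$, and the delicate point is to choose the rank-to-weight mapping so that flipping any SCC exactly preserves the total $w^*$-weight of the cover while any other modification of a vertex cover (which must cross the SCC poset structure in a non-trivial way) strictly increases it. Establishing this balance, by producing the accompanying dual $b$-matching certificate for $w^*$, constitutes the technical core of the proof.
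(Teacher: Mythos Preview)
Your structural setup via Picard--Queyranne is sound: minimum $w$-weight vertex covers of a bipartite graph are in bijection with the down-sets of the SCC poset of the residual graph of a maximum flow, and this poset has at most $n+2$ nodes. The gap is that you never actually define $w^*$. ``Rank-based labelling'' is not a construction, and you concede as much when you say that producing the balance together with a dual certificate ``constitutes the technical core of the proof.'' In fact a literal rank-based scheme cannot work: for a flexible SCC $C$, moving $C$ across the cut swaps the contribution of $A\cap C$ for that of $B\cap C$, so preserving the cover weight forces $w^*(A\cap C)=w^*(B\cap C)$; assigning every vertex of $C$ the same rank $r$ gives $r\,|A\cap C|=r\,|B\cap C|$, which fails whenever the two sides of $C$ have different sizes. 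You would therefore need side-dependent weights per SCC \emph{and} a proof that no vertex cover outside the down-set family becomes optimal under these new weights; neither ingredient is supplied.

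For comparison, the paper bypasses the lattice description entirely and works directly with a maximum $w$-matching $z$. Two elementary reduction rules suffice: (i) whenever $z(\{x,y\})>1$, subtract $z(\{x,y\})-1$ from both $w(x)$ and $w(y)$ (safe because every minimum cover contains exactly one of $x,y$ by complementary slackness, so both sides of the min-VC value drop by the same constant); (ii) once all matching values are at most $1$, any vertex with $w(v)>n$ necessarily satisfies $w(v)>z(E_v)$, hence lies in no minimum cover, and its weight may be truncated to $n$. After these rules every weight lies in $[n]$. This is more elementary than your plan, requires no reconstruction of a dual certificate for a newly invented weight function, and gives the bound immediately.
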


The proof of the theorem is given in Appendix~\ref{sec:vc}. It relies on the fact that a maximum $b$-matching (the linear-programming dual to \textsc{Vertex Cover}) can be computed in strongly polynomial time in bipartite graphs by a reduction to \textsc{Max Flow}. The structure of a maximum $b$-matching allows two weight-reduction rules to be formulated whose exhaustive application yields the desired weight function. The bound of~$n$ on the largest weight in Theorem~\ref{theorem:bipartiteweightreduction} is best-possible, which we prove in Lemma~\ref{lemma:weight_reduction_is_tight} in Appendix~\ref{sec:vc}. 

\subparagraph*{Preserving the relative weight of solutions}
For a graph~$G$, we say that two node-weight functions~$w,w'$ are \emph{vertex-cover equivalent} if the ordering of inclusion-minimal vertex covers by total weight is identical under the two weight functions, i.e., for all pairs of inclusion-minimal vertex covers~$S_1, S_2 \subseteq V(G)$ we have~$w(S_1) \leq w(S_2) \Leftrightarrow w'(S_1) \leq w'(S_2)$. While a minimum-weight vertex cover of a bipartite graph can be found efficiently, the following theorem shows that nevertheless weight functions with exponentially large coefficients may be needed to preserve the ordering of minimal vertex covers by weight.

\begin{restatable}{theorem}{vcThresholdCountState}
\label{theorem:vclowerbound2}$(\bigstar)$
For each~$n \geq 1$, there exists a node-weighted bipartite graph~$G_n$ on~$2(n+1)$ vertices with weight function $w \colon V(G_n) \to \mathbb{N}_+$ such that for \sk{all weight functions~$w' \colon V(G) \to \mathbb{N}_+$ which are vertex-cover equivalent to~$w$, we have:} $\max\limits_{v \in V(G_n)} w'(v) \geq 2^{\Omega(n)}$.
\end{restatable}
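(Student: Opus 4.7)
The plan is to take $G_n$ to be a perfect matching on $n+1$ edges $\{a_i, b_i\}_{i=1}^{n+1}$ (giving the required $2(n+1)$ vertices) and use a two-sided counting argument to exhibit a weight function $w$ whose vertex-cover equivalence class is not realized by any small-weight function. The inclusion-minimal vertex covers of $G_n$ are precisely the $2^{n+1}$ sets $S_x = \{a_i : x_i = 1\} \cup \{b_i : x_i = 0\}$ for $x \in \{0,1\}^{n+1}$, since a minimal vertex cover of a matching picks exactly one endpoint of each edge. For any weight function $w$ the weight of $S_x$ equals $\sum_i w(b_i) + \sum_i x_i(w(a_i) - w(b_i))$, so setting $c_i := w(a_i) - w(b_i) \in \mathbb{Z}$ reduces comparing the weights of inclusion-minimal vertex covers to comparing values of the linear form $L_c(x) = \sum_i c_i x_i$ on $\{0,1\}^{n+1}$. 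Hence $w$ and $w'$ are vertex-cover equivalent iff the associated integer vectors $c, c' \in \mathbb{Z}^{n+1}$ induce the same total preorder on $\{0,1\}^{n+1}$ via $L_\cdot$.

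The core of the argument is to count these preorders. For the upper bound, any $w'$ with $\max_v w'(v) \le W$ yields $c' \in \{-W, \ldots, W\}^{n+1}$ (using $|w'(a_i) - w'(b_i)| \le \max_v w'(v)$), so at most $(2W+1)^{n+1}$ distinct preorders are realizable by weights bounded by $W$. For the lower bound, I would invoke the classical result (going back to Muroga and sharpened by Zuev and Irmatov) that the number of linear threshold functions on $k$ Boolean variables is at least $2^{\Omega(k^2)}$. Every linear-form preorder on $\{0,1\}^{n+1}$ gives rise to at most $2^{n+1}+1$ threshold functions, one per threshold placed between consecutive preorder-levels, so the number of distinct realizable preorders is at least $2^{\Omega((n+1)^2)}/(2^{n+1}+1) = 2^{\Omega(n^2)}$.

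Comparing the two bounds forces $(2W+1)^{n+1} \ge 2^{\Omega(n^2)}$, and hence $W \ge 2^{\Omega(n)}$. Consequently there is a preorder $\pi$ on $\{0,1\}^{n+1}$ realized by some $c^{*} \in \mathbb{Z}^{n+1}$ but by no integer vector of max-absolute-value below $2^{\Omega(n)}$. I define the witness weight function by $w(a_i) := \max(c^{*}_i, 0) + 1$ and $w(b_i) := \max(-c^{*}_i, 0) + 1$; this takes values in $\mathbb{N}_+$ and realizes $\pi$. For any vertex-cover equivalent $w'$, the induced vector $c'_i = w'(a_i) - w'(b_i)$ also realizes $\pi$, so $\max_v w'(v) \ge \max_i |c'_i| \ge 2^{\Omega(n)}$, as required.

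The step I expect to be most delicate is invoking the threshold-function bound correctly while keeping track that vertex-cover equivalence is a preorder condition (ties allowed), not just strict orders, and verifying that every real-coefficient linear-form preorder is realizable by an integer coefficient vector (which follows by perturbation inside the corresponding open chamber of the arrangement $\{c : \langle c, x-y\rangle = 0\}_{x \ne y}$ and scaling to clear denominators). As a self-contained alternative that avoids the threshold-function literature, one can directly lower-bound the number of preorders by counting chambers of that arrangement via Schl\"afli/Zaslavsky-type estimates, which again yields $2^{\Omega(n^2)}$.
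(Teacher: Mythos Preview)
Your proposal is correct and follows essentially the same approach as the paper: the graph $G_n$ is a perfect matching on $n+1$ edges, inclusion-minimal vertex covers are identified with $\{0,1\}^{n+1}$, their relative weights are governed by the linear form $L_c$ with $c_i = w(a_i)-w(b_i)$, and the lower bound on the number of $n$-bit threshold functions (Irmatov/Zuev) drives a counting contradiction against the $(2W+1)^{n+1}$ bound for small $W$. The only organizational difference is that the paper reserves the $(n{+}1)$-st edge to encode the threshold and builds an explicit injection from $n$-bit threshold functions into vertex-cover equivalence classes, whereas you treat all $n{+}1$ edges symmetrically and pass through preorders, using the observation that each preorder yields at most $2^{n+1}+1$ threshold functions; both routes give the same $2^{\Omega(n^2)}$-vs-$2^{o(n^2)}$ comparison.
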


\section{Conclusions}
We have established kernelization lower bounds for \ssfull, \eewd, and a family of \ewcsp\bmp{ problems}, which
make it unlikely \bmp{that there exists an efficient algorithm to} compress a single weight into~$o(n)$ bits. 
This gives a clear separation between the setting involving arbitrarily large weights and the case with polynomially-bounded weights, which can be encoded with $\Oh(\log n)$ bits each.
The matching kernel upper bounds are randomized and we leave it as an open question to derandomize them.
For \ssfull parameterized\bmp{ by the number of \mic{items}~$n$,}  a~deterministic kernel of size $\Oh(n^4)$ is known~\cite{EtscheidKMR17}.

Kernelization of minimization/maximization problems is so far less understood.
We are able to match the same kernel size as for the exact-weight problems, but only through Turing kernels.
Using techniques from~\cite{EtscheidKMR17} one can obtain, e.g., a kernel of size $\Oh(n^8)$ for \textsc{Max-Edge-Weight Clique}.
Improving upon this bound possibly requires a~better understanding of the threshold functions.
Our study of weighted \textsc{Vertex Cover} on bipartite graphs indicates that preserving the order between all the solutions might be overly demanding and it could be easier to keep track only of the structure of the optimal solutions.
Can we extend the theory of threshold functions so that better bounds are feasible when we just want to maintain a~separation between optimal and non-optimal solutions?

\bibliography{bibliography.bib}

\appendix

\clearpage

\section{Kernel lower bounds}
\label{sec:klb:app}

\subsection{Omitted proofs for \eewc} \label{sec:appendix:clique}

\lowerboundState*
\begin{proof}
We give a degree-$(d+1)$ cross-composition (Definition~\ref{def:deg-dcross-comp}) from \rbdsshort to the weighted hyperclique problem using Lemma~\ref{lem:construction:rbds}. We start by giving a polynomial equivalence relation $\mathcal{R}$ on inputs of \rbdsshort. Let two instances of \rbdsshort be equivalent under $\mathcal{R}$ if they have the same number of red vertices, the same number of blue vertices, and the same target value $d$. It is easy to check that $\mathcal{R}$ is a polynomial equivalence relation.

Consider $Z$ inputs of \rbdsshort from the same equivalence class of $\mathcal{R}$. If~$Z$ is not a~$(d+1)^{th}$ power of an integer, then we duplicate one of the input instances until we reach the first number of the form~$2^{(d+1)i}$, which is trivially such a power. This increases the number of instances by at most the constant factor~$2^{d+1}$ and does not change whether there is a YES-instance among the instances. As all requirements on a cross-composition are oblivious to constant factors, from now on we may assume without loss of generality that~$Z = z^{d+1}$ for some integer~$z$. By definition of~$\mathcal{R}$, all instances have the same number~$m$ of red vertices, the same number~$n$ of blue vertices, and have the same maximum size~$d$ of a solution.

\bmp{For~$d=2$, we can simply invoke Lemma~\ref{lem:construction:rbds} for the~$z^{d+1} = z^3$ instances of \rbdsshort and output the resulting instance~$(G', w, t)$ of \eewcshort, which acts as the logical OR. Since the encoding size~$N$ of an instance of \rbdsshort with~$m$ red vertices,~$n$ blue vertices, and target value~$d$ satisfies~$N \in \Omega(n + m + d)$, Lemma~\ref{lem:construction:rbds} guarantees that~$G'$ has~$\Oh(z(m + nd)) \in \Oh(\sqrt{Z} \cdot N^2)$ vertices, which is suitably bounded for a degree-2 cross-composition for the parameterization by the number of vertices. Hence the claimed lower bound for generalized kernelization then follows from Theorem~\ref{thm:crosscomp:to:lowerbound}.} 

\bmp{In the remainder of the proof, we assume~$d \geq 3$.} Partition the~$z^{d+1}$ inputs in $z^{d-2}$ groups $\{ X_{i_1, \ldots, i_{d-2}} \mid i_1, \ldots, i_{d-2}\}$ of size $z^3$ each. Apply Lemma~\ref{lem:construction:rbds} to each group~$X_{i_1, \ldots, i_{d-2}}$.  This results in $z^{d-2}$ instances $(G_{i_1, \ldots, i_{d-2}}, w_{i_1, \ldots, w_{d-2}}, t)$ of \eewcshort on a simple graph. Note that all instances share the same value of~$t > 0$, as Lemma~\ref{lem:construction:rbds} ensures that~$t$ only depends on~$(z,d,n,m)$ which are identical for all groups. Similarly, all resulting instances have the same number of vertices. Hence we can re-label the vertices in each graph so that all graphs~$G_{i_1, \ldots, i_{d-2}}$ have the same vertex set~$\mathcal{V}$ of size~$\Oh(z (m + nd))$. The YES/NO-answer to each composed instance is the disjunction of the answers to the \rbdsshort instances in its corresponding group. 

Build a $d$-uniform hypergraph $G^*$ with weight function~$w^* \colon E(G^*) \to \mathbb{N}_0$ and target value~$t^*$ as follows:
\begin{enumerate}
    \item $V(G^*)= \mathcal{V} \cup Y_1 \cup \cdots \cup Y_{d-2}$, where $Y_\ell=\{y_{\ell,j} \mid j \in [r]\}$ for $\ell \in [d-2]$. \label{hypergraph:vertices}
    
    \item A set~$S \subseteq V(G^*)$ of exactly $d$ vertices is a hyperedge of $G^*$ if there is no~$\ell \in [d-2]$ for which~$|S \cap Y_\ell| > 1$.\label{hypergraph:edge:augmentation}
    
    \item The weight of a hyperedge~$S$ is equal to~0 if there exists~$\ell \in [d-2]$ with $S\cap Y_\ell = \emptyset$. Otherwise, for each~$\ell \in [d-2]$ let~$i_\ell$ be the unique index~$j$ such that~$y_{\ell,j} \in S$. 
    \begin{itemize}
        \item If~$e_S := S \cap \mathcal{V}$ is an edge in graph~$G_{i_1, \ldots, i_{d-2}}$ then define~$w^*(S) := w_{i_1, \ldots, i_{d-2}}(e_S)$. 
        \item Otherwise, let~$w^*(S) := t+1$.
    \end{itemize}
    \label{hypergraph:edge:weights}
    
    \item Set~$t^*=t$. \label{hypergraph:target}
\end{enumerate}

Since~$d \in \Oh(1)$, hypergraph $G^*$ has $\Oh(z \cdot (m+nd)) + \mathcal{O}(z \cdot d) \in \Oh(Z^{1/(d+1)}\cdot (m+n)^{\mathcal{O}(1)})$ vertices. (We use here that~$d \leq m$.) Hence the parameter value of the constructed \eewd instance is indeed bounded by the $(d+1)$-th root of the number of input instances times a polynomial in the maximum size of an input instance, satisfying the parameter bound of a degree-$(d+1)$ cross-composition.

It remains to verify that $G^*$ has a hyperclique of weight $t^*$ if and only if one of the input instances has a \rbdsshort of size at most $d$. By the guarantee of Lemma~\ref{lem:construction:rbds}, it suffices to show that $G^*$ has a hyperclique of weight $t^*$ if and only if one of the weighted standard graphs~$(G_{i_1, \ldots, i_{d-2}}, w_{i_1, \ldots, i_{d-2}})$ obtained by applying that lemma to some group of~$z^2$ inputs, has a clique of weight~$t$.

First suppose there exists a weighted graph~$(G_{i^*_1, \cdots, i^*_{d-2}}, w_{i^*_1, \ldots, i^*_{d-2}})$ that contains a clique~$S$ of total edge weight $t$. Let~$I := \{ y_{\ell, i^*_\ell} \mid \ell \in [d-2]\}$. Let~$S' := S \cup I$. By Step~\ref{hypergraph:edge:augmentation}, the set~$S'$ is a hyperclique in~$G^*$. It remains to verify that its weight is~$t^* = t$. By Step~\ref{hypergraph:edge:weights}, for each edge~$e$ of the clique~$S$ the set~$e \cup I$ is a hyperedge in~$G^*$ of the same weight. Additionally, each subset of~$S'$ that does not contain~$I$ has weight~$0$. Hence the weight of hyperclique~$S'$ is equal to the weight of clique~$S$ and is therefore~$t^* = t$.

For the other direction, suppose~$G^*$ has a clique~$G^*[S^*]$ of weight~$t^* = t$. Since~$t > 0$ and all hyperedges in~$G^*$ of nonzero weight contain exactly one vertex of each set~$Y_\ell$ for~$\ell \in [d-2]$, there exist~$i^*_1, \ldots, i^*_{d-2}$ such that~$S^* \cap Y_\ell = \{i^*_\ell\}$ for each~$\ell \in [d-2]$. Let~$I := \{y_{\ell, i^*_\ell} \mid \ell \in [d-2]\}$. We will show that~$S^* \cap \mathcal{V}$ is a clique of weight~$t$ in~$G_{i^*_1, \ldots, i^*_{d-2}}$. Since~$t^* = t > 0$ and edge-weights are non-negative, it follows that no hyperedge in~$S^*$ has weight~$t+1$. By Step~\ref{hypergraph:edge:weights}, this implies each subset of~$S^* \cap \mathcal{V}$ of size two is an edge of~$G_{i^*_1, \ldots, i^*_{d-2}}$, and hence~$S^* \cap \mathcal{V}$ is a clique. For each set~$e \subseteq \mathcal{V}$ of size two, the weight of the hyperedge~$e \cup I$ is equal to~$w_{i^*_1, \ldots, i^*_{d-2}}(e)$. As all other hyperedges in~$S^*$ have weight~$0$, it follows that the weight of the clique~$S^* \cap \mathcal{V}$ equals that of hyperclique~$S^*$, and is therefore equal to~$t^* = t$. This implies $G_{i^*_1, \cdots, i^*_{d-2}}$ has a clique of total edge weight~$t=t^*$, which concludes the proof.
\end{proof}

\subsection{Omitted proofs for CSPs}

\cspGammaLB*
\begin{proof}
Consider an instance $(G,w,t)$ of \eewd.
Let $W$ be the sum of all weights, which are by the definition non-negative.
We can assume $t \in [0, W]$, as otherwise there is clearly no solution.
We create an instance $\Phi$ of \ewcsp$(\Gamma_{\textsc{and}}^d)$ with the variable set $V(G)$ as follows.
For each potential hyperedge $e = \{v_1, \dots, v_d\}$, if $e \in E(G)$ we create a constraint application $\langle \textsc{AND}_d, (v_1, \dots, v_d), \bmp{w(e)} \rangle$ and if $e \not\in E(G)$, we create a constraint application $\langle \textsc{AND}_d, (v_1, \dots, v_d), W+1 \rangle$.

If $X \subseteq V(G)$ is a hyperclique with total weight $t$, then for the assignment $x(v) = [v \in X]$ it holds that $\Phi(x) = t$.
In the other direction, if $\Phi(x) = t$ then $x$ cannot satisfy any constraint application with weight $W+1$.
Hence, each size-$d$ subset of 1-valued variables corresponds to \bmp{a} hyperedge in $G$ and $X = \{v \in V(G) \mid x(v) = 1\}$ forms a hyperclique of total weight $t$.

We have constructed a linear-parameter transformation from \eewdshort to \ewcsp$(\Gamma_{\textsc{and}}^d)$.
Therefore, any generalized kernel of size $\Oh(n^{d + 1 - \eps})$ for the latter would entail the same bound for \eewdshort. 
The claim follows from Theorem~\ref{thm:eewd:lb}.
\end{proof}

\section{Kernel upper bounds}
\label{sec:ub}

\depr{
\begin{theorem}[\cite{FrankT87}]
There is an algorithm that,
given a vector $w \in \mathbb{Q}^r$ and an integer $N$,
in polynomial time finds a vector $w'\in \mathbb{Z}^r$
with $||w'||_\infty \le 2^{4r^3}N^{r(r+1)}$
such that $\text{sign}(w \cdot b) = \text{sign}(w' \cdot b)$
for all vectors $b \in \mathbb{Z}^r$
with $||b||_1 \le N-1$.
\end{theorem}

\begin{corollary}[\cite{EtscheidKMR17}]\label{cor:threshold-reduction}
There is an algorithm that, given a vector $w \in \mathbb{Q}^r$ and a rational $W \in \mathbb{Q}$, in polynomial time finds a vector $w'\in \mathbb{Z}^r$ with $||w'||_\infty \le 2^{\Oh(r^3)}$ an integer $W'$
such that $\sum_{x \in X} w(x) \le W \Longleftrightarrow \sum_{x \in X} w'(x) \le W'$ for all subsets $X \subseteq [r]$.
\end{corollary}
}

In this section, we present randomized kernel upper bounds for \eewdshort and the considered family of \csp, which match the obtained lower bounds.
For the maximization variant of \eewdshort,
we present a Turing kernel with the same bounds.
The results in this section follow from combining known arguments from Harnik and Naor~\cite{HarnikN10} and Nederlof et al.~\cite{Nederlof20} with gadgets that allow us to produce an instance of the same problem that is being compressed (so we obtain a true kernelization, not a generalized one).

Consider a family $\mathcal{F}$ of subsets of \bmp{a} universe $U$ and a weight function $w \colon \mathcal{F} \rightarrow [-N, N]$.
For a subset $X \subseteq U$, we denote $\wsum{w}(X) = \sum_{Y \in \mathcal{F},\, Y \subseteq X} w(Y)$.
The following fact has been observed by Harnik and Naor~\cite[Claim 2.7]{HarnikN10} and for the sake of completeness we provide a~proof for the formulation which is the most convenient for us.

\begin{lemma}\label{lem:prime-hashing}
Let $U$ be a set of size $n$, $\mathcal{F} \subseteq 2^U$ be a family of subsets, $w \colon \mathcal{F} \rightarrow [-N, N]$ be a weight function, and $t \in  [-N, N]$.
There exists a randomized polynomial-time algorithm that, given a real $\eps > 0$, returns 
a~prime number $p \le 2^n \cdot \mathrm{poly}(n, \log N, \eps^{-1})$,
such that if there is no $X \subseteq U$ satisfying $\wsum{w}(X) = t$, then
\[\mathbb{P}\bigg(\text{there is }X \subseteq U \text{ satisfying } \wsum{w}(X) \equiv t \pmod {p} \bigg) \le \eps.
\]
\end{lemma}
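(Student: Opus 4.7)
The plan is to sample $p$ uniformly at random from the primes in $[2,P]$ for a suitable $P = 2^n \cdot \mathrm{poly}(n,\log N,\eps^{-1})$ and establish the bound by a union bound over potential ``bad'' primes, along the lines of Harnik and Naor~\cite{HarnikN10}.

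Observe that $\wsum{w}(X) \equiv t \pmod{p}$ iff $p$ divides the integer $d_X := \wsum{w}(X) - t$. Under the hypothesis that no $X \subseteq U$ realizes $\wsum{w}(X) = t$, each of the at most $2^n$ integers $d_X$ is nonzero. Since $|\mathcal{F}| \le 2^n$ and $|w(Y)| \le N$, we have $|\wsum{w}(X)| \le 2^n N$, whence $|d_X| \le 2^n N + N \le 2^{n+1} N$. A nonzero integer of magnitude at most $M$ has at most $\log_2 M$ distinct prime divisors, so each $d_X$ contributes at most $n + \log_2 N + 1$ primes to the set of \emph{bad} primes (those dividing some $d_X$). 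Summing over the at most $2^n$ choices of $X$, the total count of bad primes is bounded by
\[
B \;:=\; 2^n \cdot (n + \log_2 N + 1).
\]

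Next, choose $P$ so that $\pi(P) \ge B/\eps$. By the prime number theorem $\pi(P) \ge P/(2\ln P)$ for sufficiently large $P$, so $P = \Theta(B \ln(B/\eps)/\eps)$ suffices, and this quantity is $2^n \cdot \mathrm{poly}(n,\log N, \eps^{-1})$, matching the required bound. For $p$ drawn uniformly from the primes in $[2,P]$, a union bound yields
\[
\prob{\exists\, X \subseteq U \,:\, p \mid d_X} \;\le\; \frac{B}{\pi(P)} \;\le\; \eps,
\]
which is exactly the claim.

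It remains to sample such a $p$ in polynomial time. Since $\log P = \Oh(n + \log\log N + \log \eps^{-1})$, we can repeatedly pick a uniformly random integer in $[2,P]$ and test primality using Miller--Rabin. By PNT a random integer in that range is prime with probability $\Theta(1/\ln P)$, so $\mathrm{poly}(\log P, \log \eps^{-1})$ trials suffice to find a prime except with probability at most $\eps$, and the conditional distribution of the first prime found is exactly uniform over primes in $[2,P]$; any residual failure probability can be absorbed into $\eps$ by redefining constants. The main conceptual step -- and the place where the $2^n$ factor in $P$ is unavoidable -- is the counting of bad primes: although there may be as many as $2^n$ constraints of the form $d_X \ne 0$, each contributes only $n + \Oh(\log N)$ prime divisors, so a range of primes just barely above $2^n$ suffices to dodge all of them simultaneously.
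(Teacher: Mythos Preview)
Your proposal is correct and follows essentially the same approach as the paper: bound the number of bad primes by $2^n(n+1+\log N)$ via counting prime divisors of the nonzero differences $d_X$, then sample uniformly from a set of primes large enough to make the bad fraction at most~$\eps$. The only cosmetic difference is that the paper samples from the first $M = B/\eps$ primes (citing a known uniform-sampling procedure), whereas you sample from primes in $[2,P]$ with $\pi(P)\ge B/\eps$ via rejection sampling and Miller--Rabin; these are equivalent.
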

\begin{proof}
For \bmp{a} fixed function $w$, we say that $p$ is \emph{bad} if for some $X \subseteq U$ it holds that
$\wsum{w}(X) \equiv t \pmod {p}$ but $\wsum{w}(X) \neq t$.
This implies that $p$ divides $|\wsum{w}(X) - t|$.
We argue that the number of bad primes is bounded by $2^n\cdot (n + 1 + \log(N))$.
Since $|\wsum{w}(X) - t| \le 2^{n+1}\cdot N$, this number can have at most $\log(2^{n+1}\cdot N) = n + 1 + \log{N}$ different prime divisors. 
There are at most $2^n$ choices of $X$, which proves the bound. 

We sample a random prime $p$ among the set of the first $M = 2^n\cdot (n + 1 + \log(N)) \cdot \eps^{-1}$ primes.
It is known that \bmp{the first~$M$ primes lie} in the interval $[2, \Oh(M \log M)]$
and we can uniformly sample a prime number from this interval in time $(\log M)^{\Oh(1)} = (n+\log\log(N)+\log(\eps^{-1}))^{\Oh(1)}$~\cite{karp1987efficient}.
By the argument above, the probability of choosing a bad prime is bounded by $\eps$.
\end{proof}

\depr{
\begin{lemma}[\cite{HarnikN10, NederlofLZ12}]
Let $U$ be a set of size $n$.
There exists a randomized polynomial-time algorithm that, given $w \colon U \rightarrow [N]$, integer $t$, and a real $\eps > 0$, returns  $w' \colon U \rightarrow [M]$ and $S \subseteq [M]$, where $M \le 2^n \cdot \text{poly}(n, \log N, |S|, \eps^{-1})$,
such that for every set family $\mathcal{F} \subseteq 2^U$:
\begin{enumerate}
    \item for any $X \in \mathcal{F}$ such that $w(X) = t$, it holds $w'(X) \in S$.
    \item if there is no $X \in \mathcal{F}$ such that $w(X) \in S$, then \newline $\prob{\text{there is }X \in \mathcal{F} \text{ such that } w'(X) \in S} \le \eps$.
\end{enumerate}
\end{lemma}
}

\begin{theorem}\label{thm:eewd-upper-bound}
There is a randomized polynomial-time algorithm that, given an $n$-vertex instance $(G,w,t)$ of \eewd,
outputs an instance $(G',w',t')$ of bitsize $\Oh(n^{d+1})$, such that:
\begin{enumerate}
    \item if $(G,w,t)$ is a \textsc{yes}-instance, then $(G',w',t')$ is always a \textsc{yes}-instance,
    \item if $(G,w,t)$ is a \textsc{no}-instance, then $(G',w',t')$ is a \textsc{no}-instance with probability at least $1 - 2^{-n}$.
\end{enumerate}
Furthermore, each number in $(G',w',t')$ is bounded by $2^{\Oh(n)}$.
\end{theorem}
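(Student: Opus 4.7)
The plan is to combine the prime-hashing bound of Lemma~\ref{lem:prime-hashing} with a small binary ``padding register'' so that the reduced instance is itself a valid \eewd instance with every number bounded by $2^{\Oh(n)}$. Using the observation from the introduction, I may assume $G$ is the complete $d$-uniform hypergraph on $V(G)$ (by inserting every missing hyperedge with weight $t+1$, which a valid solution cannot absorb), reducing the problem to finding $S \subseteq V(G)$ with $\wsum{w}(S) = t$ in the notation of Lemma~\ref{lem:prime-hashing} applied to $U = V(G)$ and $\mathcal{F} = \binom{V(G)}{d}$.

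Next, sample a prime $p$ of magnitude $2^{\Oh(n)}$ via Lemma~\ref{lem:prime-hashing} with $\eps = 2^{-n}$, so that with probability at least $1 - 2^{-n}$ the prime is ``good'' in the sense that no subset $X$ with $\wsum{w}(X) \not= t$ satisfies $\wsum{w}(X) \equiv t \pmod{p}$. Replace each hyperedge weight by $w'(e) := w(e) \bmod p$. The main difficulty is that this substitution only preserves the answer modulo $p$: a solution $S$ of the original instance satisfies $\wsum{w'}(S) = (t \bmod p) + j p$ for some a~priori unknown $j \in \{0,1,\ldots,M-1\}$ where $M := \binom{n}{d}$, and the value of $j$ depends on $S$. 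I therefore need to turn ``$\equiv t \pmod p$'' back into an exact equality inside a single \eewd instance.

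To do so, I introduce $m := \lceil \log_2 (M+1) \rceil = \Oh(\log n)$ pairwise disjoint padding gadgets, where gadget $i$ consists of a distinguished vertex $u_i$ together with $d-1$ companions $u_i^{(1)}, \ldots, u_i^{(d-1)}$. Let $G'$ be the complete $d$-uniform hypergraph on $V(G) \cup \bigcup_{i=1}^m \{u_i, u_i^{(1)}, \ldots, u_i^{(d-1)}\}$. Assign weight $w(e) \bmod p$ to each hyperedge $e \subseteq V(G)$, weight $2^{i-1} p$ to each ``core'' hyperedge $\{u_i, u_i^{(1)}, \ldots, u_i^{(d-1)}\}$, and weight $0$ to every other hyperedge touching a gadget vertex. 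Set $t' := (t \bmod p) + M p$. Because only whole cores carry nonzero weight among gadget-touching hyperedges, and their values form a binary register, a hyperclique in $G'$ can realise any additive contribution in $\{0, p, 2p, \ldots, (2^m{-}1)\,p\}$, which covers $\{0, p, \ldots, Mp\}$. A counting check then gives $|V(G')| = n + \Oh(\log n)$, at most $\Oh(n^d)$ hyperedges, every weight bounded by $2^m p \in 2^{\Oh(n)}$, and hence total encoding size $\Oh(n^{d+1})$ bits, as required.

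For correctness in the YES direction, given an original solution $S$ I take the unique $j$ with $\wsum{w'}(S) = (t \bmod p) + jp$ and include the cores whose indices form the binary representation of $M - j$, which yields a hyperclique in $G'$ of weight exactly $t'$. For the NO direction, conditioned on $p$ being good, any hyperclique $S'$ of $G'$ of weight $t'$ decomposes into $S := S' \cap V(G)$ plus a subset of fully included cores; reducing the weight equation modulo $p$ gives $\wsum{w}(S) \equiv t \pmod{p}$, and the goodness of $p$ forces $\wsum{w}(S) = t$, contradicting the NO assumption. The main conceptual step is the gadget design: keeping the cores pairwise disjoint and zero-weighting every cross-hyperedge cleanly decouples the binary register from the reduced weights on $V(G)$, which is precisely what lets ``$\equiv \pmod p$'' be lifted to ``$=$'' inside a single \eewd instance without blowing up the size.
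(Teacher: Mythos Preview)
Your proposal is correct and follows essentially the same strategy as the paper: hash the weights modulo a random prime $p$ from Lemma~\ref{lem:prime-hashing}, then attach a ``register'' gadget whose total contribution is always a multiple of $p$ and can realise every value in $\{0,p,\ldots,Mp\}$, so that congruence modulo $p$ is lifted to exact equality in a single \eewd instance. The only difference is in the register's implementation: the paper uses a base-$n$ counter built from a shared anchor set $U_Z$ of $d-1$ vertices together with groups $U_0,\ldots,U_{d-1}$ of $n$ vertices each (hyperedge $U_Z\cup\{v\}$ carrying weight $n^j p$ for $v\in U_j$), while you use $\Oh(\log n)$ pairwise-disjoint binary cores; your variant adds only $\Oh(\log n)$ rather than $\Oh(n)$ new vertices, but both give $\Oh(n)$ vertices overall and the same $\Oh(n^{d+1})$ bound.
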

\begin{proof}
Let us
define $N = \max(t,\, \max_{e \in E(G)} w_e)$. 
We can assume $\log N \le 2^n$, because otherwise the input length is lower bounded by $2^n$ and the brute-force algorithm for \eewdshort becomes polynomial.

We apply Lemma~\ref{lem:prime-hashing} to the weight function $w$, target $t$, and $\eps = 2^{-n}$, to compute the desired prime~$p \le 2^n \cdot \text{poly}(n, \log N, \eps^{-1}) = 2^{\Oh(n)}$.
If there exists a hyperclique $X \subseteq V$ satisfying $\wsum{w}(X) = t$ \bmp{with respect to the weighted set family~$E(G) \subseteq \binom{V(G)}{d}$}, then clearly $\wsum{w}(X) \equiv t \pmod{p}$.
Furthermore, with probability $1-2^{-n}$, the implication in the other direction holds as well.
In particular, in this case $p$ does not divide $t$. 

Let us construct a new instance \bmp{$(G', w', t')$} of \eewd with weights bounded by $p\cdot n^d$, which is bounded by $2^{\Oh(n)}$ for constant $d$.
We set $w'(v) = w(v) \pmod p$ and $t_p = t \pmod p$.
The condition $\wsum{w}(X) \equiv t \pmod{p}$ is equivalent to \bmp{the existence of $i \in [0, n^d)$ for which $\wsum{w'}(X) = t_p + ip$}, because the sum $\wsum{w'}(X)$ comprises of at most $n^d$ summands from the range $[0,p)$.

We introduce a set $U_Z$ of $d-1$ new vertices and for each $j \in [0,d-1]$ we introduce a set $U_j$ of $n$ new vertices. \bmp{Intuitively, the sets~$U_j$ can be used to represent any number~$i \in [0, n^d)$ in base~$n$.}
For every \bmp{$j$ and every} $v \in U_j$ we create a hyperedge $e = U_Z \cup \{v\}$ with weight $w'_e = n^j\cdot p$.
For every other size-$d$ subset containing at least one new vertex, we create a hyperedge with weight 0.
Observe that for every integer $i \in [0, n^d]$, we can find a set $Y \subseteq U_Z \cup U_0 \cup \dots \cup U_{d-1}$ such that $\wsum{w'}(Y) = ip$.
Let  $G'$ be the graph with the set of vertices $V(G) \cup U_Z \cup U_0 \cup \dots \cup U_{d-1}$ and hyperedges inherited from $G$ plus these defined above.
We set $t' = t_p + n^d \cdot p$.

Suppose now that $X \subseteq V$ forms a~hyperclique of total weight $t$ in $G$.
Then $\wsum{w'}(X) = t_p + ip$ for some $i \in [0, n^d)$.
By the argument above, we can find a set $Y \subseteq U_Z \cup U_0 \cup \dots \cup U_{d-1}$ such that $\wsum{w'}(X\cup Y) = t'$ and $X \cup Y$ is \bmp{a hyperclique} in $G'$.

In the other direction, suppose we have successfully applied Lemma~\ref{lem:prime-hashing} and there is a~hyperclique $X' \subseteq V(G')$ with total weight  $t'$.
Then $p$ divides $\wsum{w'}(X' \setminus V)$, so \bmp{since all hyperedges intersecting both~$V$ and~$X' \setminus V$ have weight~$0$, we have }
$\wsum{w'}(X' \cap V) \equiv t \pmod{p}$ and
$\wsum{w}(X' \cap V) = t$, which gives a~desired hyperclique in $G$. 

The new instance has $\Oh(n)$ vertices and $\Oh(n^d)$ edges.
The weight range is $[0,n^d\cdot p]$ and, since $p = 2^{\Oh(n)}$, each weight can be encoded with $\Oh(n)$ bits.
The claim follows.
\end{proof}

We obtain \cref{thm:eewc:ub} as a corollary by taking $d=2$.

\subsection{Turing kernel for Max Weighted Hyperclique}

We turn our attention to the maximization variant of the weighted hyperclique problem.
We consider the problem \textsc{Max-Edge-Weight $d$-Uniform Hyperclique}, which takes the same input as \eewd, but the goal is to detect a hyperclique of total weight greater or equal to the target value $t$.
Even though we are not able to compress the weight function as in Theorem~\ref{thm:eewd-upper-bound}, we present a Turing kernelization with the same size.
We rely on a generic technique of reducing interval queries to exact queries. 

\begin{theorem}[\cite{NederlofLZ12}, Theorem 1]\label{thm:interval-reduction}
Let $U$ be a set of cardinality $n$, let 
$w \colon U \rightarrow \nn_0$ be a weight function, and let $l < u$ be non-negative integers
with $u - l > 1$.
There is a polynomial-time algorithm that returns a set
of pairs $\Omega = {(w_1, t_1), \dots ,(w_K, t_K)}$ with $w_i :
U \rightarrow \nn_0$ and
integers $t_1, t_2, \dots, t_K$, such that:
\begin{enumerate}
    \item $K$ is at most $(5n + 2)\cdot \log(u - l)$,
    \item for every set $X \subseteq U$ it holds that $w(X) \in [l, u]$ if and only if there exist $i \in [1, K]$ such that $w_i(X) = t_i$.
\end{enumerate}
\end{theorem}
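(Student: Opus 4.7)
The plan is to combine the standard dyadic decomposition of $[l, u]$ with a separator-based gadget that converts interval conditions into exact-sum conditions while controlling carries. First, I would write $[l, u]$ as the disjoint union of $\Oh(\log(u-l))$ canonical dyadic blocks of the form $[c\cdot 2^k, (c+1)\cdot 2^k - 1]$; this is the classical segment-tree decomposition and accounts for the $\log(u-l)$ factor in $K$. The remaining task is to realise membership in each block by $\Oh(n)$ exact-sum queries on suitably redefined weight functions over $U$.

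Fix such a block. Splitting each original weight as $w(x) = 2^k \lfloor w(x)/2^k\rfloor + (w(x)\bmod 2^k)$, set $S_{hi} := \sum_{x\in X}\lfloor w(x)/2^k\rfloor$ and $S_{lo} := \sum_{x\in X}(w(x) \bmod 2^k)$. Since $S_{lo} < n \cdot 2^k$, the carry $q := \lfloor S_{lo}/2^k\rfloor$ is an integer in $\{0, \dots, n-1\}$, and membership of $w(X)$ in the block is equivalent to
\[
\bigvee_{q=0}^{n-1}\bigl(S_{hi} = c - q \ \wedge \ \lfloor S_{lo}/2^k\rfloor = q\bigr).
\]
For each $q$ I would realise the conjunction by a constant number of exact-sum queries on a composite weight function $w_{k,c,q}(x) := M\cdot\lfloor w(x)/2^k\rfloor + (w(x)\bmod 2^k)$, where $M > n\cdot 2^k$ acts as a separator so that $w_{k,c,q}(X) = M\cdot S_{hi} + S_{lo}$ cleanly decomposes into its two components. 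Pairing $w_{k,c,q}$ with a few auxiliary weightings (shifts of it, and the constant-one weighting that reads off $|X|$) allows one to pin both $S_{hi} = c-q$ and $\lfloor S_{lo}/2^k\rfloor = q$ simultaneously; a careful counting yields at most a constant number of pairs per value of $q$, summing to at most $5n + \Oh(1)$ pairs per dyadic block and hence $K \le (5n+2) \log(u-l)$ overall.

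The main obstacle is that a single exact-sum equation on $w_{k,c,q}$ does not uniquely pin down $(S_{hi}, S_{lo})$: different splits of the total can match the same target when $S_{lo}$ is allowed to cross a multiple of $2^k$, which would cause false positives that escape the intended dyadic block. Suppressing these spurious witnesses without blowing the query budget past $\Oh(n)$ per block is the delicate step; it requires an explicit, numerically careful choice of the separator $M$ and of the auxiliary pairs, and one must verify that the resulting disjunction over $q$ and over blocks captures exactly $[l, u]$ with no false positives. Once this is in place, polynomial running time is immediate, as every $w_i$ and $t_i$ is an explicit arithmetic function of $w$ and of the combinatorial data of the dyadic decomposition.
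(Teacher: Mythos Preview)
This theorem is not proved in the paper at all; it is quoted as a black box from Nederlof, van~Leeuwen, and de~Zwaan and then applied in the Turing-kernel construction. There is therefore no in-paper argument to compare your sketch against.

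On the sketch itself: the skeleton---dyadic decomposition of $[l,u]$ into $\Oh(\log(u-l))$ canonical blocks, then per block an enumeration over the carry $q\in\{0,\dots,n-1\}$ coming from the low-order part---is indeed the shape of the original proof. But the gap you flag is real and is not merely bookkeeping. The output format of the theorem is a \emph{disjunction} of exact equalities: $w(X)\in[l,u]$ holds iff $w_i(X)=t_i$ for \emph{some} $i$. Consequently ``pairing $w_{k,c,q}$ with a few auxiliary weightings'' cannot be used to enforce the \emph{conjunction} $S_{hi}=c-q \wedge \lfloor S_{lo}/2^k\rfloor=q$; adding more pairs $(w_i,t_i)$ gives you an OR, not an AND. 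You must therefore pack the whole conjunction into a \emph{single} weight/target pair, which is what your separator $M$ is for. With $M>n\cdot 2^k$ the value $M\cdot S_{hi}+S_{lo}$ does determine $S_{hi}$ and $S_{lo}$ uniquely, so fixing $S_{hi}=c-q$ is fine; but $\lfloor S_{lo}/2^k\rfloor=q$ is the range condition $S_{lo}\in[q\cdot 2^k,(q{+}1)\cdot 2^k)$, an interval of width $2^k$, which a single exact target cannot capture. Your write-up never closes this loop, and the sentence ``a careful counting yields at most a constant number of pairs per value of $q$'' is exactly the step that does not go through as stated. To finish, you need an explicit mechanism that collapses the residual interval on $S_{lo}$ to an exact value within a single query (this is where the argument in the cited source does real work); as written, your proposal identifies the difficulty but does not resolve it.
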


A Turing kernel for a parameterized problem $\mathcal{P}$ is a polynomial-time algorithm that \bmp{decides any instance of} $\mathcal{P}$ with an access to an oracle that can solve instances (of possibly different \bmp{NP-problem}) of size polynomial with respect to the parameter.
The size of a Turing kernel is the maximal size of the instances \bmp{queried} to the oracle.
Observe that the number of calls to the oracle can be arbitrarily large.
The following theorem gives a one-sided error randomized Turing kernel of size $\Oh(n^{d+1})$ for \textsc{Max-Edge-Weight $d$-Uniform Hyperclique} parameterized by the number of vertices $n$.

\begin{theorem}\label{thm:eewd-ub-proof}
There is a randomized polynomial-time algorithm that, given an $n$-vertex instance $(G,w,t)$ of \textsc{Max-Edge-Weight $d$-Uniform Hyperclique},
returns a family of $K$ instances $(G_i,w_i,t_i)$, $i \in [K]$, of \eewd, each of bitsize $\Oh(n^{d+1})$, such that:
\begin{enumerate}
    \item $K$ is polynomial with respect to the input size,
    \item if $(G,w,t)$ is a \textsc{yes}-instance, then at least one instance $(G_i,w_i,t_i)$ is a \textsc{yes}-instance,
    \item if $(G,w,t)$ is a \textsc{no}-instance, then
    with probability $1 - 2^{-\Omega(n)}$ all the instances
    $(G_i,w_i,t_i)$ are \textsc{no}-instances.
\end{enumerate}
\end{theorem}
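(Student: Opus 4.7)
The plan is to combine the interval-to-exact reduction of Nederlof, van Leeuwen, and de Zwaan (Theorem~\ref{thm:interval-reduction}) with the randomized exact-weight kernelization from Theorem~\ref{thm:eewd-upper-bound}. Let $W := \sum_{e \in E(G)} w(e)$ and, for any vertex set $S \subseteq V(G)$, let $X(S) := \{e \in E(G) : e \subseteq S\}$ be the set of hyperedges that $G$ induces on $S$. Then $S$ is a hyperclique of edge-weight at least $t$ iff $w(X(S)) \in [t, W]$, which reformulates the maximization problem as an ``interval weight'' question over the universe $U := E(G)$ of cardinality at most $\binom{n}{d}$.

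First, I would invoke Theorem~\ref{thm:interval-reduction} with universe $U$, weight function $w$, and interval $[t, W]$, obtaining $K \le (5|E(G)|+2)\log(W-t) = \Oh(n^d \log W)$ pairs $(w_i, t_i)$. Since $\log W$ is bounded by the input size, $K$ is polynomial in the input length. By the guarantee of the theorem applied to each subset $X = X(S)$, there is a hyperclique $S$ with $w(X(S)) \in [t, W]$ iff for some $i \in [K]$ there is a hyperclique $S$ satisfying $w_i(X(S)) = t_i$; that is, the family of \eewd instances $(G, w_i, t_i)$ is logically equivalent, taken as a disjunction, to the original max-weight instance.

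Second, I would apply Theorem~\ref{thm:eewd-upper-bound} to each of the $K$ exact instances $(G, w_i, t_i)$, tuning the failure probability of every call down to at most $2^{-2n}$ (which is easily achieved by decreasing $\eps$ in the underlying hashing of Lemma~\ref{lem:prime-hashing}). This produces $K$ compressed \eewd instances $(G_i', w_i', t_i')$, each of bitsize $\Oh(n^{d+1})$. For correctness, if the input is a YES-instance, then some $(G, w_i, t_i)$ is a YES-instance of \eewd and Theorem~\ref{thm:eewd-upper-bound} never turns YES into NO, so at least one $(G_i', w_i', t_i')$ is YES. If the input is a NO-instance, every $(G, w_i, t_i)$ is NO, and a union bound over the $K$ compressions gives total failure probability at most $K \cdot 2^{-2n} = 2^{-\Omega(n)}$.

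The main subtlety is that Theorem~\ref{thm:interval-reduction} is phrased over an abstract universe with subset-sum weights, so one has to set $U := E(G)$ rather than $V(G)$, so that the iff of the interval reduction can be applied to the edge-set $X(S)$ of each candidate hyperclique. Beyond this identification, a routine calibration of per-call error and a union bound, no new ideas are required; the exponential-in-$n$ slack in the randomized exact kernel comfortably absorbs the polynomial blow-up $K$ in the number of queries.
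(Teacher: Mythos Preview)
Your proposal is correct and follows essentially the same approach as the paper: apply Theorem~\ref{thm:interval-reduction} with universe $U = E(G)$ to reduce the interval question to polynomially many exact-weight instances, then compress each via Theorem~\ref{thm:eewd-upper-bound} and union-bound the errors. Your explicit boosting of the per-instance failure probability to $2^{-2n}$ is a slight refinement over the paper's write-up (which states the union bound as $n^{\Oh(1)} \cdot 2^{-n}$), but the underlying argument is identical.
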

\begin{proof}
We apply Theorem~\ref{thm:interval-reduction} with $U$ being the set of hyperedges in $G$, $l = t$, and $u = n^d\cdot \max_{e\in E(G)} w_e$.
We can assume that $l \le u$, as otherwise the can be no solution.
We obtain $K = \log(u - l) \cdot \Oh(n^d)$ many weight functions $w^i$ and integers $t_i$, so that for each $X \subseteq U$ it holds
$\wsum{w}(X) \ge t$ if and only if
$\wsum{w^i}(X) = t_i$ for some $i \in [K]$.
Observe that $\log(u - l)$ is upper bounded by the input size, so the condition (1) is satisfied.

The original problem thus reduces to a disjunction of polynomially many instances of \eewd.
We use Theorem~\ref{thm:eewd-upper-bound} to compress each of them to $\Oh(n^{d+1})$ bits.
The probability that a single instance would be incorrectly compressed is bounded by $2^{-n}$.
By the union bound, the probability that any instance would be incorrectly compressed is $n^{\Oh(1)}\cdot 2^{-n} = 2^{-\Omega(n)}$.
\end{proof}

\subsection{Kernel upper bounds for CSP}

Similarly as for \eewdshort, we are able to prove a~tight randomized kernel upper bound for each considered \csp.
At first, we need to reduce the problem to the case where there are only $\Oh(n^{\deg(\Gamma)})$ weights, so we could encode each of them in $\Oh(n)$ bits.
To this end, we are again going to take advantage of the reduction framework from~\cite{JansenW20}.
Let $||\Phi||$ denote the sum of absolute values of weights in the formula $\Phi$.

\begin{lemma}[\cite{JansenW20}, Theorem 5.1]\label{lem:deg2}
For any constraint language $\Gamma$, there is a polynomial-time algorithm that, given a formula $\Phi_1 \in \csp(\Gamma)$ on $n_1$ variables and an integer $t_1$, returns a formula
$\Phi_2 \in \csp(\Gamma)$ on $n_2$ variables and an integer $t_2$, such that
\begin{enumerate}
    \item $n_2 = \Oh(n_1)$,
    \item the number of constraint applications in $\Phi_2$  is $\Oh(n^{\deg(\Gamma)})$, 
    \item $||\Phi_2|| = ||\Phi_1|| \cdot n^{\Oh(1)}$,
    \item $\exists_x \Phi_1(x) = t_1 \Longleftrightarrow \exists_y \Phi_2(y) = t_2$,
    \item $\exists_x \Phi_1(x) \ge t_1 \Longleftrightarrow \exists_y \Phi_2(y) \ge t_2$.
\end{enumerate}
\end{lemma}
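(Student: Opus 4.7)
The plan is to algebraically re-express $\Phi_1$ as a single multilinear polynomial and then re-encode that polynomial as a $\csp(\Gamma)$-formula with few constraint applications. By Definition~\ref{def:characteristic-poly}, for every assignment $x \in \{0,1\}^{n_1}$,
\[
\Phi_1(x) \;=\; \sum_{\langle f, (i_1,\ldots,i_k), w\rangle \in \Phi_1} w \cdot P_f(x_{i_1}, \ldots, x_{i_k}).
\]
After expansion and collection of like terms, this equals a multilinear polynomial
$P(x) = c_0 + \sum_{\emptyset \neq S,\ |S|\leq d} c_S \prod_{i\in S} x_i$
with $d = \deg(\Gamma)$ and at most $1 + \binom{n_1}{\leq d} = \Oh(n_1^{d})$ nonzero monomials, by Definition~\ref{def:deg-gamma}. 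The coefficients $c_S$ are integers with $\sum_S |c_S| = \Oh(\|\Phi_1\|)$, the hidden constant depending only on $\Gamma$.

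The core technical step is a \emph{monomial template lemma}: for every $S \subseteq [n_1]$ with $1 \leq |S| \leq d$, there is a collection $T_S$ of $\Oh(1)$ constraint applications of $\Gamma$ over the variables $\{x_i : i \in S\}$ (with variable repetitions allowed) whose weighted sum of characteristic polynomials equals the monomial $\prod_{i\in S} x_i$. Assuming this lemma, $\Phi_2$ is assembled by concatenating, for each nonzero coefficient $c_S$ in the expansion of $P$, a scaled copy of the template $T_S$ with all its weights multiplied by $c_S$. The resulting formula has $\Oh(n_1^{d})$ applications and its total weight $\|\Phi_2\|$ is bounded by $\sum_S |c_S| \cdot \Oh(1) = \Oh(\|\Phi_1\|)$, establishing properties~(2) and~(3). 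Setting $t_2 := t_1 + c_0$ (or introducing a single auxiliary variable to realize the constant $c_0$ if $\Gamma$ cannot express constants on its own) gives $n_2 \leq n_1 + 1 = \Oh(n_1)$, so property~(1) holds, and since $\Phi_2(x) = \Phi_1(x) + c_0$ as functions on $\{0,1\}^{n_1}$, properties~(4) and~(5) follow.

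The main obstacle is proving the monomial template lemma. Fix $f^* \in \Gamma$ whose characteristic polynomial $P_{f^*}$ has degree exactly $d$; such an $f^*$ exists by Definition~\ref{def:deg-gamma}. Applying $f^*$ with the same variable inserted into several argument slots collapses the resulting multilinear expansion via the Boolean identity $x_i^2 = x_i$, producing a variety of multilinear polynomials on any chosen subset $S$ of at most $d$ variables. The plan is to show, by linear-algebraic considerations inside a fixed finite-dimensional space depending only on $\Gamma$, that $\mathbb{Z}$-combinations of such applications can isolate the top-degree monomial $\prod_{i \in S} x_i$ up to a combination of strictly lower-degree monomials on $S$. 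Induction on $|S|$ then completes the template construction: the leftover lower-degree error is cancelled using templates $T_{S'}$ for $S' \subsetneq S$ that were obtained at earlier stages. Because each underlying linear system has constant dimension (depending only on $\Gamma$), every $T_S$ uses $\Oh(1)$ applications with integer weights of magnitude $\Oh(1)$, which is exactly what is needed for the global bounds $\Oh(n_1^{d})$ on the number of applications and $n_1^{\Oh(1)}$ on the weight blow-up.
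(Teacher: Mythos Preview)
The paper does not actually prove this lemma; it is quoted from \cite{JansenW20} (their Theorem~5.1) without proof. Your overall strategy---expand $\Phi_1$ into a multilinear polynomial of degree at most $d=\deg(\Gamma)$ and then re-encode it using $\Oh(n^d)$ constraint applications---is indeed the approach used there.

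However, your \emph{monomial template lemma} is false as stated. You claim that for every nonempty $S$ with $|S|\le d$ there is an $\Oh(1)$-size $\Gamma$-formula \emph{on the variables of $S$ alone} expressing $\prod_{i\in S}x_i$ up to lower-degree terms. Take $\Gamma=\{\textsc{xor}\}$, whose characteristic polynomial is $y_1+y_2-2y_1y_2$ (so $\deg(\Gamma)=2$). For $|S|=1$ the only application on $S=\{i\}$ is $\textsc{xor}(x_i,x_i)$, which evaluates identically to $0$; hence the monomial $x_i$ cannot be isolated, even up to a constant. More generally, even allowing auxiliary variables you cannot realise $x_i$ as a weighted sum of \textsc{xor}-applications: any such sum lies in the subspace of multilinear polynomials satisfying the linear relations $c_i = -\tfrac{1}{2}\sum_{j\ne i} c_{ij}$ between the linear and quadratic coefficients, and $x_i$ does not.

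The lemma is nonetheless true for \textsc{xor}, and the reason points to the correct argument: the polynomial $P=\Phi_1$ you expand is \emph{already} in the image of the ``$\Gamma$-formula $\to$ polynomial'' map, so you never need to express an arbitrary monomial---only polynomials that actually arise from some $\Gamma$-formula. The right statement is that the image space (inside degree-$\le d$ multilinear polynomials on $n$ variables) is spanned by a fixed family of $\Oh(n^d)$ constraint applications with bounded-denominator rational coefficients in the change of basis, and one rewrites $P$ in that basis. In \cite{JansenW20} this is handled by introducing $\Oh(1)$ auxiliary variables and exploiting the equivalences (4)--(5), rather than by a per-monomial template on the original variable set. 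Your induction on $|S|$ would go through once the template lemma is reformulated to allow these auxiliary variables and to target only coefficients that can occur in an expansion of a $\Gamma$-formula.
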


Next, we proceed analogously to the proof of Theorem~\ref{thm:eewd-ub-proof}.

\begin{theorem}\label{thm:csp-upper-bound}
For any constraint language $\Gamma$, there is a randomized polynomial-time algorithm that, given a formula $\Phi_1 \in \csp(\Gamma)$ on $n$ variables and an integer $t_1$,
outputs a formula $\Phi_2 \in \csp(\Gamma)$ of bitsize $\Oh(n^{\deg(\Gamma)+1})$ and an integer $t_2$, such that:
\begin{enumerate}
    \item if $(\Phi_1, t_1)$ is a \textsc{yes}-instance of \ewcspg, then $(\Phi_2, t_2)$ is always a \textsc{yes}-instance of \ewcspg,
    \item if $(\Phi_1, t_1)$ is a \textsc{no}-instance, then $(\Phi_2, t_2)$ is a \textsc{no}-instance with probability at least $1 - 2^{-n}$.
\end{enumerate}
\end{theorem}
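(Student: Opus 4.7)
The plan is to follow the proof of Theorem~\ref{thm:eewd-upper-bound} while replacing its hyperedge gadgets by gadgets built from the constraint language $\Gamma$ itself. First, apply Lemma~\ref{lem:deg2} to transform $(\Phi_1, t_1)$ into an equivalent instance $(\Phi', t')$ of \ewcspg with $n' = \Oh(n)$ variables, $m = \Oh(n^{\deg(\Gamma)})$ constraint applications, and weights bounded by $||\Phi_1|| \cdot n^{\Oh(1)}$.

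Next, apply Lemma~\ref{lem:prime-hashing} with $\eps = 2^{-n}$ to obtain a prime $p \leq 2^{\Oh(n)}$ such that, with probability at least $1 - 2^{-n}$, the existence of an assignment with $\Phi'(x) = t'$ is preserved after all weights are reduced modulo $p$. Let $\Phi''$ be the formula obtained from $\Phi'$ by replacing each weight by its residue modulo $p$. Then $0 \leq \Phi''(x) < mp$ for every $x$, and $\Phi'(x) = t'$ holds whenever $\Phi''(x) = (t' \bmod p) + ip$ for some $i \in \{0, 1, \ldots, m\}$.

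The key step is to produce a true \emph{kernel}, i.e.\ an instance of \ewcspg itself rather than a generalized kernel, so we must augment $\Phi''$ with gadgetry that can contribute any integer multiple of $p$ in the range $[0, mp]$. If every $f \in \Gamma$ is a constant function then $\Phi_1(x)$ does not depend on $x$ and the instance is decided by a single arithmetic check; otherwise, fix a non-constant $f^* \in \Gamma$ together with an assignment $a^+$ satisfying $f^*$ and an assignment $a^-$ not satisfying $f^*$. For each $j \in \{0, \ldots, \lceil \log_2(m+1)\rceil\}$, introduce $\ar(f^*)$ fresh variables and add a constraint application of $f^*$ on them with weight $2^j p$; independently setting each bundle to $a^+$ or $a^-$ realises any value in $\{0, p, 2p, \ldots, mp\}$ as the total gadget contribution. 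Setting $t_2 := (t' \bmod p) + mp$ and letting $\Phi_2$ consist of $\Phi''$ together with these gadget applications, a \textsc{yes}-instance provides some $x$ with $\Phi''(x) = (t' \bmod p) + ip$ that is completed by gadgets contributing $(m-i)p$, whereas, when the hashing succeeds, any assignment reaching $t_2$ forces $\Phi'(x) = t'$ on the original variables. The final size budget is $n' + \Oh(\log n) = \Oh(n)$ variables and $m + \Oh(\log n) = \Oh(n^{\deg(\Gamma)})$ applications with weights bounded by $2^{\Oh(n)}$, giving a total bitsize of $\Oh(n^{\deg(\Gamma)+1})$. The main obstacle is constructing the selectable ``multiple of $p$'' gadget inside an \emph{arbitrary} $\Gamma$; this is what forces the case distinction on whether $\Gamma$ contains any non-constant constraint and motivates the binary-weighted gadget based on a single non-constant $f^* \in \Gamma$.
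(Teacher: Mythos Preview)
Your proposal is correct and follows the same three-step approach as the paper: apply Lemma~\ref{lem:deg2} to bound the number of constraint applications by $\Oh(n^{\deg(\Gamma)})$, hash modulo a random prime via Lemma~\ref{lem:prime-hashing}, then attach a gadget that can contribute any required multiple of $p$. The only difference is the gadget itself: the paper uses $\Oh(n)$ applications of a satisfiable $f\in\Gamma$ with weights $n^j p$ sharing $\ar(f)-1$ auxiliary variables, whereas you use $\Oh(\log n)$ applications of a non-constant $f^*\in\Gamma$ with binary weights $2^j p$ on disjoint fresh variables---a cosmetic variation that, if anything, sidesteps the small subtlety of arguing that the paper's shared-variable gadget can be toggled independently per application.
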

\begin{proof}
By Lemma~\ref{lem:deg2} we can assume that $\Phi$ has $\Oh(n^{\deg(\Gamma)})$ many constraint applications.
Let $V$ denote the set of variables in $\Phi_1$, $\mathcal{C}$ the set of constraint applications, and \bmp{let} $w \colon \mathcal{C} \to \zz$ be the weight function.
Let us
define $N = \max_{C \in \mathcal{C}} |w(C)|$. 
We can assume $\log N \le 2^n$, because otherwise the input length is lower bounded by $2^n$ and the brute-force algorithm becomes polynomial.

We apply Lemma~\ref{lem:prime-hashing} to the weight function $w$, target $t_1$, and $\eps = 2^{-n}$, to compute the desired prime~$p \le 2^n \cdot \text{poly}(n, \log N, \eps^{-1}) = 2^{\Oh(n)}$.
If there exists an assignment $x$ satisfying $\Phi_1(x) = t_1$, then $\Phi_1(x) \equiv t_1 \pmod{p}$.
Furthermore, with probability $1-2^{-n}$, the implication in the other direction holds as well.
In particular, in this case $p$ does not divide~$t_1$. 

Let $d = \deg(\Gamma)$ and $|\Gamma|$ be the number of relations in the language $\Gamma$.
Let us construct a new instance of \ewcspg with weights bounded by $p\cdot n^d\cdot |\Gamma|$.
For $C \in \mathcal{C}$,
we set $w'(C) = w(C) \pmod p$ and $t_p = t_1 \pmod p$.
Let $\Phi'_1$ be obtained from $\Phi$ by replacing the weight function by $w'$.
The condition $\Phi_1(x) \equiv t_1 \pmod{p}$ is equivalent to \bmp{the existence of $i \in [0, n^d\cdot |\Gamma|)$ for which} $\Phi'_1(x) = t_p + ip$, because for each $d$-tuple of variables there are at most $|\Gamma|$ constraint applications and thus the sum comprises of at most $n^d\cdot |\Gamma|$ summands from the range $[0,p)$.

We can assume that $\Gamma$ contains some satisfiable constraint $f$
as otherwise the only feasible target value is $t=0$. 
We introduce a set $U_Z$ of $\ar(f)-1$ new variables and for each $j \in [0,d-1]$ we introduce a set $U_j$ of $n\cdot |\Gamma|$ new variables.
For every $v \in U_j$ we create a constraint application $\langle f, (U_Z, v), n^j\cdot p \rangle$.
Observe that for every integer $i \in [0, n^d\cdot |\Gamma|]$, we can find an assignment to the variables $U_Z \cup U_0 \cup \dots \cup U_{d-1}$ with total weight $i\cdot p$.
Let $V_2 = V \cup U_Z \cup U_0 \cup \dots \cup U_{d-1}$ and $\Phi_2$ be the formula with the set of variables $V_2$ and constraints applications $\mathcal{C}$ plus these defined above.
We set $t_2 = t_p + p \cdot n^d \cdot |\Gamma|$.

Consider an assignment $x$ to $V$ such that $\Phi_1(x) = t_1$.
It holds that $\Phi'_1(x) = t_p + ip$ for some $i \in [0, n^d\cdot |\Gamma|)$.
By the argument above, we can find an assignment $y$ to $V_2$, which coincides with $x$ on $V$, so
that $\Phi_2(y) = t_2$.

In the other direction, suppose we have successfully applied Lemma~\ref{lem:prime-hashing} and there is an~assignment $x$ to $V_2$ such that $\Phi_2(x) = t_2$.
Since $p$ divides all the weights for constraint applications outside $\mathcal{C}$, the assignment $x'$ given by $x$ applied only to $V$ satisfies $\Phi'_1(x) \equiv t_1 \pmod{p}$ and so
$\Phi_1(x) = t_1$.

The new instance has $\Oh(n)$ variables and $\Oh(n^d)$ constraint applications.
The weight range is $[0,n^d\cdot p]$ and, since $p = 2^{\Oh(n)}$, each weight can be encoded with $\Oh(n)$ bits.
The claim follows.
\end{proof}

\section{Node-weighted Vertex Cover in bipartite graphs}
\label{sec:vc}
\subsection{Preserving all minimum solutions}

In this section we show that given a bipartite graph $G$ with $n$ vertices and weight function $w \colon V(G) \to \mathbb{N}_+$, we can compute (in polynomial time) a new weight function $w^*$ which assigns a positive integer weight of at most $n$ to all vertices of $G$ such that for all $S \subseteq V(G)$, $S$ is a minimum $w$-weighted vertex cover of $G$ if and only if $S$ is a minimum $w^*$-weighted vertex cover of $G$.

Next we define a well-studied combinatorial optimization problem for node-weighted graphs:  $b$-matching (cf.~\cite[Chapter 21]{Schrijver03}). We discuss some of its properties which will be useful later while developing reduction rules. For a vertex~$v$ in a graph~$G$, we denote by~$E_v$ the set of edges incident on~$v$. 

\begin{definition}[$b$-matching]
Let $(G, b)$ be a node-weighted graph with $b \colon V(G) \to  \mathbb{N}_+$. A $b$-matching is a function $z \colon E(G) \to \mathbb{N}_0$ such that $z(E_v) \leq b(v)$ for each vertex~$v \in V(G)$.
\end{definition}

A~$b$-matching~$z \colon E(G) \to \mathbb{N}_0$ is said to be \emph{maximum} if there does not exist a $b$-matching~$z^*$ satisfying~$z^*(E(G))>z(E(G))$.

\begin{theorem}
\label{theorem:b-matching_computation}
Let~$(G, b)$ be a node-weighted bipartite graph with~$b \colon V(G) \to  \mathbb{N}_+$. Then a maximum $b$-matching can be computed in $\mathcal{O}(m n)$ time, where $m = |E(G)|$ and $n = |V(G)|$.
\end{theorem}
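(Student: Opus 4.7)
The plan is to reduce computing a maximum $b$-matching to a single maximum $s$-$t$ flow computation in an auxiliary directed network. Let $L \cup R$ be a bipartition of $V(G)$. I would construct a directed network $H$ on vertex set $V(G) \cup \{s,t\}$ as follows: for every $u \in L$, add an arc $(s,u)$ of capacity $b(u)$; for every $v \in R$, add an arc $(v,t)$ of capacity $b(v)$; and for every edge $\{u,v\} \in E(G)$ with $u \in L$ and $v \in R$, add an arc $(u,v)$ of capacity $\min(b(u), b(v))$ (infinite capacity also works here, since the capacities on arcs leaving $s$ and entering $t$ already impose the relevant bound). Note that $H$ has $|V(G)| + 2$ vertices and $|E(G)| + |V(G)|$ arcs, so its size is $\mathcal{O}(n + m)$.

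Next, I would verify the two-way correspondence between integral $s$-$t$ flows in $H$ and $b$-matchings in $G$. Given an integral flow $f$, define $z(\{u,v\}) := f(u,v)$ for each edge $\{u,v\} \in E(G)$. Flow conservation at $u \in L$ gives $z(E_u) = f(s,u) \leq b(u)$, and conservation at $v \in R$ gives $z(E_v) = f(v,t) \leq b(v)$, so $z$ is a $b$-matching of total value $z(E(G)) = |f|$. Conversely, any $b$-matching $z$ extends to a feasible integral flow of value $z(E(G))$ by setting $f(s,u) := z(E_u)$, $f(v,t) := z(E_v)$, and $f(u,v) := z(\{u,v\})$. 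Since all arc capacities are integers, the classical integrality theorem for max-flow guarantees an integral maximum flow, whose preimage under this correspondence is a maximum $b$-matching.

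Finally, I would invoke a strongly polynomial maximum-flow algorithm on $H$, such as Orlin's $\mathcal{O}(|V(H)| \cdot |E(H)|)$ procedure, which yields a running time of $\mathcal{O}((n+2)(n+m)) = \mathcal{O}(nm)$ on the auxiliary network (using that we may assume $m \geq n-1$ after discarding isolated vertices, which cannot participate in any matching edge anyway). Extracting the $b$-matching from the integral flow is immediate.

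The main obstacle to watch for is \emph{strong} polynomiality: a naive augmenting-path argument à la Ford--Fulkerson would yield a running time depending on $\sum_v b(v)$, which could be arbitrarily large since the weight function is unbounded. Using Orlin's algorithm (or, alternatively, a capacity-scaling or blocking-flow variant exploiting the layered structure of $H$) sidesteps this issue and delivers a running time that depends only on $n$ and $m$, as claimed. Since this is a textbook reduction, I would likely simply cite the result from Schrijver~\cite[Chapter~21]{Schrijver03} rather than reprove it in detail.
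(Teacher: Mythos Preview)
Your proposal is correct and follows essentially the same approach as the paper: reduce bipartite $b$-matching to a single max-flow instance of size $\mathcal{O}(n+m)$ and then invoke Orlin's $\mathcal{O}(nm)$ algorithm. The paper's own proof is terser---it simply cites Schrijver for the reduction and Orlin for the flow bound---but your write-up spells out exactly the construction and correspondence that those citations encapsulate.
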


\begin{proof}
For bipartite graphs, finding a maximum $b$-matching can be reduced (in linear time) to the problem of finding maximum flow, where the size of the flow network remains asymptotically the same {\cite[Section 21.13a, page 358]{Schrijver03}}. A maximum flow can be found in  $\mathcal{O}(m n)$ time due to Orlin \cite{Orlin13}. Hence a maximum $b$-matching can be computed in $\mathcal{O}(m n)$ time. 
\end{proof}

\bmp{In the literature, the function giving vertex capacities for a matching is typically called~$b$. As we will apply this concept when using the vertex weight function~$w$ to prescribe capacities, from now on we will refer to such matchings as~$w$-matchings for a given node-weight function~$w\colon V(G) \to \mathbb{N}_+$.}

The following is a weighted version of K\H{o}nig's theorem.

\begin{theorem}[{\cite[Corollary 21.1a]{Schrijver03}}]
\label{theorem:Weighted_version_of_Konig's_theorem}
Let~$(G,w)$ be a node-weighted bipartite graph. The minimum weight of a vertex cover of~$G$ is equal to the value of a maximum $w$-matching in~$G$.
\end{theorem}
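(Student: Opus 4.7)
The plan is to reduce the weighted statement to the classical (unweighted) K\H{o}nig theorem via a vertex-splitting gadget. Construct a bipartite graph $G^\dagger$ from $(G,w)$ by replacing each vertex $v \in V(G)$ with an independent set $V_v$ of $w(v)$ copies, and replacing each edge $\{u,v\} \in E(G)$ with the complete bipartite graph between $V_u$ and $V_v$. The bipartition of $G$ lifts to a bipartition of $G^\dagger$, so $G^\dagger$ is bipartite and the classical theorem (min vertex cover $=$ max matching) applies. The proof then has two parts: (i) a maximum matching of $G^\dagger$ has the same size as a maximum $w$-matching of $G$, and (ii) a minimum vertex cover of $G^\dagger$ has the same size as a minimum $w$-weighted vertex cover of $G$. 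Chaining the two identities through K\H{o}nig on $G^\dagger$ gives the claim.

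For part (i), a $w$-matching $z$ in $G$ lifts to a matching in $G^\dagger$ of size $z(E(G))$: for each edge $\{u,v\}$ with $z(\{u,v\}) = k$, pick $k$ previously unused copies in $V_u$ and $k$ previously unused copies in $V_v$ and match them arbitrarily inside the complete bipartite gadget; this is feasible exactly because $z(E_v) \leq w(v) = |V_v|$ for every $v$. Conversely, a matching $M$ in $G^\dagger$ projects back to a function $z$ on $E(G)$ by setting $z(\{u,v\})$ to the number of $M$-edges between $V_u$ and $V_v$; disjointness of $M$ at each copy yields $z(E_v) \leq |V_v| = w(v)$, so $z$ is a $w$-matching with $z(E(G)) = |M|$.

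For part (ii), any vertex cover $S$ of $G$ lifts to the cover $\bigcup_{v \in S} V_v$ of $G^\dagger$, with size $w(S)$. The harder direction is a canonisation step: any minimum vertex cover $C$ of $G^\dagger$ can be assumed to satisfy $V_v \subseteq C$ or $V_v \cap C = \emptyset$ for every $v$. Indeed, if $V_v \cap C$ were a nonempty proper subset, pick any $x \in V_v \setminus C$; then for every edge $\{u,v\} \in E(G)$ the edges of $G^\dagger$ from $x$ into $V_u$ must be covered from the $V_u$ side, forcing $V_u \subseteq C$. But then the set $C \setminus (V_v \cap C)$ is still a vertex cover of $G^\dagger$ and is strictly smaller, contradicting minimality. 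Given such a canonical $C$, the set $S = \{v : V_v \subseteq C\}$ is a vertex cover of $G$ with $w(S) = |C|$.

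Putting the two equalities together with classical K\H{o}nig on $G^\dagger$ yields $\min\{w(S) : S \text{ vertex cover of } G\} = \min\{|C| : C \text{ vertex cover of } G^\dagger\} = \max\{|M| : M \text{ matching in } G^\dagger\} = \max\{z(E(G)) : z \text{ a } w\text{-matching in } G\}$, proving the theorem. The main obstacle is the canonisation argument in part (ii); the matching correspondence in (i) is essentially bookkeeping. An alternative route, which avoids the combinatorial canonisation, would be to write the min weight vertex cover problem as an integer program, dualise the LP relaxation, and invoke total unimodularity of the edge-vertex incidence matrix of a bipartite graph to conclude that both sides of the LP duality are attained by integer optima; the construction above is essentially the combinatorial shadow of that LP argument.
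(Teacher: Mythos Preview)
Your proof is correct. The vertex-splitting reduction to the unweighted K\H{o}nig theorem is a standard and valid argument: part~(i) is straightforward bookkeeping as you say, and the canonisation step in part~(ii) is sound (once some copy of $v$ is uncovered, all neighbouring blocks $V_u$ must lie entirely in $C$, so the partial inclusion of $V_v$ in $C$ is redundant). The alternative LP/total-unimodularity route you sketch at the end is also correct and is in fact closer to how Schrijver derives it.

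There is nothing to compare against in the paper itself: the theorem is quoted as a black box from Schrijver~\cite[Corollary~21.1a]{Schrijver03} with no proof given. So you have supplied a proof where the paper supplies only a citation. One minor remark: your construction blows up the number of vertices to $\sum_{v} w(v)$, so it is a purely existential argument and not algorithmic, but that is all the statement requires.
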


\bmp{We can use Theorem~\ref{theorem:Weighted_version_of_Konig's_theorem} to infer some properties of minimum-weight vertex covers.}

\allowdisplaybreaks

\begin{lemma}
\label{lemma:weight_reduces_by_1}
If $z \colon E(G) \to \mathbb{N}_0$ is a maximum $w$-matching of~$(G,w)$, with~$z(\{u,v\}) > 0$ for some edge~$\{u,v\} \in E(G)$, then any minimum-weight vertex cover of~$(G,w)$ contains \emph{exactly one} of~$\{u,v\}$.
\end{lemma}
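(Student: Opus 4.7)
The plan is to prove Lemma~\ref{lemma:weight_reduces_by_1} via a complementary-slackness style double-counting argument, using the weighted K\H{o}nig theorem (Theorem~\ref{theorem:Weighted_version_of_Konig's_theorem}) as the key equality. Fix a minimum-weight vertex cover $C$ of $(G,w)$ and the given maximum $w$-matching $z$. The goal will be to show that every edge $e$ with $z(e) > 0$ is covered by \emph{exactly} one endpoint of $C$; specializing this to the edge $\{u,v\}$ from the hypothesis then gives the lemma (since a vertex cover must of course contain at least one endpoint of $\{u,v\}$).

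First I would record three ingredients. (a) By Theorem~\ref{theorem:Weighted_version_of_Konig's_theorem}, $w(C) = z(E(G))$. (b) For each vertex $x \in V(G)$ the matching constraint gives $z(E_x) \le w(x)$, and summing over $x \in C$ yields
\[
\sum_{x \in C} z(E_x) \;\le\; \sum_{x \in C} w(x) \;=\; w(C).
\]
(c) Rewriting the same sum by swapping the order of summation,
\[
\sum_{x \in C} z(E_x) \;=\; \sum_{e \in E(G)} z(e)\cdot |e \cap C|,
\]
and since $C$ is a vertex cover, $|e \cap C| \ge 1$ for every edge $e$, so the right-hand side is at least $\sum_e z(e) = z(E(G))$.

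Combining (a), (b), (c) gives the chain of inequalities
\[
z(E(G)) \;\le\; \sum_{e \in E(G)} z(e)\cdot |e \cap C| \;=\; \sum_{x \in C} z(E_x) \;\le\; w(C) \;=\; z(E(G)),
\]
so equality must hold throughout. The leftmost equality says $\sum_{e} z(e)\bigl(|e \cap C| - 1\bigr) = 0$, a sum of nonnegative terms; hence every edge $e$ with $z(e) > 0$ satisfies $|e \cap C| = 1$. Applied to the edge $\{u,v\}$ with $z(\{u,v\}) > 0$, this yields exactly the conclusion that $C$ contains exactly one of $u$ and $v$.

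There is no real obstacle here: the whole proof is a two-line duality argument once Theorem~\ref{theorem:Weighted_version_of_Konig's_theorem} is invoked. The only subtlety to articulate cleanly is the direction of the inequality in step (c), namely that each edge contributes $|e\cap C|\ge 1$ rather than $|e\cap C|\le 2$; this is precisely the feature of $C$ being a vertex cover that drives the argument, and it is what makes strict containment of both $u$ and $v$ impossible when the corresponding matching value is strictly positive.
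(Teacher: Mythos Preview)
Your proof is correct and uses essentially the same approach as the paper: both combine the equality $w(C)=z(E(G))$ from the weighted K\H{o}nig theorem with the inequality $\sum_{x\in C}z(E_x)\le w(C)$ and the double-count $\sum_{x\in C}z(E_x)=\sum_e z(e)\,|e\cap C|\ge z(E(G))$. The only cosmetic difference is that the paper argues by contradiction for the single edge $\{u,v\}$, while you phrase it as a direct complementary-slackness equality chain and obtain the slightly stronger conclusion that \emph{every} edge $e$ with $z(e)>0$ satisfies $|e\cap C|=1$ before specializing to $\{u,v\}$.
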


\begin{claimproof}
Note that to cover edge~$\{u, v\}$ at least one of~$\{u, v\}$ must be in any vertex cover. Assume for a contradiction that there exists a set~$C \subseteq V(G)$ such that~$C$ is a minimum-weight vertex cover of~$(G, w)$ containing both~$u$ and~$v$. Observe that the sum~$\sum _{c \in C} z(E_c)$ contains each edge value~$z(\{x,y\})$ for~$\{x,y\} \in E(G)$ at least once since~$C$ is a vertex cover, and this sum contains the value~$z(\{u,v\})$ twice since the edge~$\{u,v\}$ is contained in \emph{both}~$E_v$ and~$E_u$ with~$u,v \in C$. Thus we have:
\begin{align*} 
    w(C) &= \sum _{v \in C} w(v) & \mbox{By definition of~$w(C)$} \\
    &\geq \sum _{v \in C} z(E_v) & \mbox{By definition of $w$-matching} \\
    &\geq 2 z(\{u,v\}) + \sum _{e \in E(G) \setminus \{\{u,v\}\}} z(e) & \mbox{$C$ is a vertex cover and~$u,v \in C$} \\
    &\geq z(\{u,v\}) + \sum _{e \in E(G)} z(e) & \mbox{Collecting terms} \\
    &> \sum _{e \in E(G)} z(e) = z(E(G)). & \mbox{Since~$z(\{u,v\}) > 0$}
\end{align*}
\bmp{This contradicts~Theorem~\ref{theorem:Weighted_version_of_Konig's_theorem}.}
\end{claimproof}

\bmp{The next lemma gives a condition under which no optimal solution contains a given vertex~$v$. Recall that~$E_v$ is the set of edges incident on~$v$.}

\begin{lemma}
\label{lemma:weight_stays_the_same}
If $z$ is a maximum $w$-matching of~$(G,w)$ with~$w(v) > z(E_v)$ for a vertex~$v$, then no minimum-weight vertex cover of~$(G,w)$ contains~$v$. \end{lemma}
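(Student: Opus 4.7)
The plan is to argue by contradiction, mirroring the chain-of-inequalities pattern used in the proof of Lemma~\ref{lemma:weight_reduces_by_1}. Suppose, for the sake of contradiction, that some minimum-weight vertex cover $C \subseteq V(G)$ contains $v$. By Theorem~\ref{theorem:Weighted_version_of_Konig's_theorem}, $w(C) = z(E(G))$, so it suffices to show $w(C) > z(E(G))$.

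First I would split the sum defining $w(C)$ to isolate the contribution of $v$:
\[
w(C) = w(v) + \sum_{u \in C \setminus \{v\}} w(u).
\]
Since $z$ is a $w$-matching, $w(u) \geq z(E_u)$ holds for every vertex $u$, while by the hypothesis $w(v) > z(E_v)$ strictly. Adding these bounds yields
\[
w(C) > z(E_v) + \sum_{u \in C \setminus \{v\}} z(E_u) \;=\; \sum_{u \in C} z(E_u).
\]

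Next I would use that $C$ is a vertex cover to lower-bound the right-hand side by $z(E(G))$: every edge $e \in E(G)$ has at least one endpoint in $C$, so $e$ appears in $E_u$ for at least one $u \in C$, giving $\sum_{u \in C} z(E_u) \geq \sum_{e \in E(G)} z(e) = z(E(G))$. Chaining this with the previous strict inequality produces $w(C) > z(E(G))$, contradicting the assumption that $C$ is a minimum-weight vertex cover (whose weight equals $z(E(G))$ by Theorem~\ref{theorem:Weighted_version_of_Konig's_theorem}).

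There is no substantial obstacle here; the argument is a short adaptation of Lemma~\ref{lemma:weight_reduces_by_1}, and the only subtlety is ensuring that the strict inequality $w(v) > z(E_v)$ survives the summation (which it does, since all other contributions are only non-strict lower bounds that can be added without affecting strictness).
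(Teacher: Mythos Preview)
Your proposal is correct and follows essentially the same approach as the paper: both argue by contradiction, isolate the contribution of~$v$, use~$w(u)\ge z(E_u)$ for the remaining vertices together with the strict bound~$w(v)>z(E_v)$, and then invoke the vertex-cover property to conclude~$w(C)>z(E(G))$, contradicting Theorem~\ref{theorem:Weighted_version_of_Konig's_theorem}. Your write-up is in fact slightly more careful than the paper's in separating the strict step from the non-strict covering step.
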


\begin{claimproof}
Assume for a contradiction that there exists a minimum-weight vertex cover $C \subseteq V(G)$ such that $v \in C$. Then we have:
\begin{align*}
    w(C) &= \sum_{c \in C} w(c) &\mbox{By definition of w(C)}\\
    &= w(v) + \sum_{c \in C \setminus\{v\}} w(c) &\mbox{Distributing terms}\\
    &\geq w(v) + \sum_{c \in C \setminus \{v\}} z(E_c) &\mbox{By definition of $w$-matching}\\
    &> z(E_v) + \sum_{c \in C \setminus \{v\}} z(E_c) &\mbox{Since $w(v) > z(E_v)$}\\
    &> \sum_{e \in E(G)} z(e) = z(E(G)).
\end{align*}
This contradicts Theorem~\ref{theorem:Weighted_version_of_Konig's_theorem}.
\end{claimproof}

Now we present reduction rules which can reduce the weights of a node-weighted bipartite graph~$(G, w)$ without changing the set of optimal vertex covers.

\begin{reductionrule}
\label{rr:twoedges} 
 If $z \colon E(G) \to \mathbb{N}_0$ is a maximum $w$-matching of~$(G,w)$ with~$z(\{x,y\}) > 1$ for some edge~$\{x,y\} \in E(G)$, then obtain a new weight function $w^*\colon V(G) \to \mathbb{N}_+$ from $w$ as follows.
\begin{equation}
\label{equation:rr_1:reduced_node_weight_function}
    w^*(v) = \begin{cases}
    w(v) - (z(x, y) - 1)  & \mbox{if $v \in \{x, y\}$}\\
    w(v) & \mbox{otherwise}
    \end{cases}
\end{equation}
\end{reductionrule}

\bmp{We now prove that this reduction rule is correct. Recall that~$\mathcal{C}(G, w)$ is the set containing all minimum~$w$-weighted vertex covers of graph $G$.}

\begin{theorem}
\label{theorem:MWVC_1_remains_the_same}
If $(G,w)$ is reduced to $(G,w^*)$ by Reduction Rule~\ref{rr:twoedges}, then $\mathcal{C}(G,w) = \mathcal{C}(G,w^*)$.
\end{theorem}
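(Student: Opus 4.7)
\textbf{Proof proposal for Theorem~\ref{theorem:MWVC_1_remains_the_same}.}
The plan is to exploit the LP duality between vertex covers and $w$-matchings (Theorem~\ref{theorem:Weighted_version_of_Konig's_theorem}) to relate the minimum vertex cover weights of $(G,w)$ and $(G,w^*)$, and then use Lemma~\ref{lemma:weight_reduces_by_1} to pin down the structure of the optimal solutions on both sides. Throughout, let $\delta := z(\{x,y\})-1 \geq 1$, so that $w^*$ is obtained from $w$ by subtracting $\delta$ from the weights of $x$ and $y$.

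First I would define a candidate maximum $w^*$-matching $z^*$ by setting $z^*(\{x,y\}) := 1$ and $z^*(e) := z(e)$ for every other edge $e$. A direct verification shows $z^*$ is a valid $w^*$-matching: at vertices $v \notin \{x,y\}$ nothing changes, while at $v \in \{x,y\}$ the degree-sum drops by exactly $\delta$, matching the decrease in the capacity $w^*(v) = w(v) - \delta$. Hence $z^*(E(G)) = z(E(G)) - \delta$, and by Theorem~\ref{theorem:Weighted_version_of_Konig's_theorem} the minimum $w^*$-weight of a vertex cover is at least $z(E(G)) - \delta$.

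Next I would obtain the matching upper bound on the minimum $w^*$-weight by reusing an optimal cover for $w$. Pick any $C \in \mathcal{C}(G,w)$; since $z(\{x,y\}) > 0$, Lemma~\ref{lemma:weight_reduces_by_1} forces $|C \cap \{x,y\}| = 1$, so
\[
 w^*(C) \;=\; w(C) - \delta \;=\; z(E(G)) - \delta,
\]
by Theorem~\ref{theorem:Weighted_version_of_Konig's_theorem} applied to $(G,w)$. Combined with the previous paragraph, this proves that the minimum $w^*$-weight of a vertex cover is exactly $z(E(G))-\delta$, that $z^*$ is a maximum $w^*$-matching, and that every $C \in \mathcal{C}(G,w)$ lies in $\mathcal{C}(G,w^*)$.

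Finally I would establish the reverse inclusion $\mathcal{C}(G,w^*) \subseteq \mathcal{C}(G,w)$ by applying Lemma~\ref{lemma:weight_reduces_by_1} a second time, now to $(G,w^*)$ with the maximum $w^*$-matching $z^*$ and the edge $\{x,y\}$, for which $z^*(\{x,y\}) = 1 > 0$. For any $C^* \in \mathcal{C}(G,w^*)$ this forces $|C^* \cap \{x,y\}| = 1$, hence $w(C^*) = w^*(C^*) + \delta = z(E(G))$, which equals the minimum $w$-weight by Theorem~\ref{theorem:Weighted_version_of_Konig's_theorem}; thus $C^* \in \mathcal{C}(G,w)$. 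The main technical point to get right is the matching $z^*$: once one verifies it is both feasible and optimal under $w^*$, the two inclusions follow symmetrically from Lemma~\ref{lemma:weight_reduces_by_1} and the cardinality-one behaviour it enforces on $C \cap \{x,y\}$.
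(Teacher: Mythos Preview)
Your proposal is correct and follows essentially the same approach as the paper: construct the reduced matching $z^*$ with $z^*(\{x,y\})=1$, establish that it is a maximum $w^*$-matching, and then invoke Lemma~\ref{lemma:weight_reduces_by_1} once for $(G,w,z)$ and once for $(G,w^*,z^*)$ to get the two inclusions. Your treatment is slightly more streamlined---you certify optimality of $z^*$ by exhibiting a cover of matching weight via Theorem~\ref{theorem:Weighted_version_of_Konig's_theorem} rather than by the paper's lifting-to-$\hat z$ contradiction---but the argument is the same in substance.
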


\begin{proof}
First we show how~$z$ can be transformed into an optimal~$w^*$-matching of~$(G,w^*)$.

\begin{claim}
\label{claim:rr_1:new_small_weight_reduction}
Let~$z$ be a maximum $w$-matching of~$(G,w)$ and~$z(\{x,y\}) > 1$ for edge~$\{x,y\} \in E(G)$, then the function $z^* \colon E(G) \to \mathbb{N}_0$ defined as:
    \begin{equation}
    \label{equation:new_matching_1}
        z^*(e) = \begin{cases}
             z(e) - (z(e) - 1) = 1 & \mbox{if~$e = \{x,y\}$}\\
            z(e) & \mbox{otherwise}
        \end{cases}
    \end{equation}
    is a maximum $w^*$-matching of~$(G, w^*)$, where $w^*$ is defined by~Equation~$(\ref{equation:rr_1:reduced_node_weight_function})$.
\end{claim}

\begin{claimproof}
Assume for a contradiction that $z^*$ is not a maximum $w^*$-matching of $(G, w^*)$ and let $z'\colon E(G) \to \mathbb{N}_0$ be a function such that $z'$ is a maximum $w^*$-matching. Consequently, $z'(E(G)) > z^*(E(G))$.
We construct a new weight function $\hat{z} \colon E(G) \to \mathbb{N}_0$ from $z'$ as follows.
\begin{equation}
\label{equation:new_matching_2}
    \hat{z}(e) = \begin{cases}
    z'(e) + (z(e) -1)  & \mbox{if $e = \{x, y\}$}\\
    z'(e) & \mbox{otherwise}
    \end{cases}
\end{equation}
Observe that $\hat{z}$ is a $w$-matching of $(G, w)$. Moreover, $z^*(E(G)) = z(E(G)) - (z(e) - 1)$ and $\hat{z}(E(G)) = z'(E(G)) + (z(e) - 1)$ due to Equation~$(\ref{equation:new_matching_1})$ and Equation~$(\ref{equation:new_matching_2})$, respectively. But by assumption~$z'(E(G)) > z^*(E(G))$, and thus we have $\hat{z}(E(G)) > z(E(G))$, a contradiction to the fact that $z$ is a maximum~$w$-matching of $(G, w)$.
\end{claimproof}

Next we prove the following two claims which we will use later in order to prove the main theorem.
\begin{claim}
\label{claim:MWVC_1}
For each $C \in \mathcal{C}(G, w)$ we have $w^*(C) = w(C) - (z(x, y) - 1)$.
\end{claim}

\begin{claimproof}
Since $z(\{x, y\}) > 1$, for each $C \in \mathcal{C}(G, w)$ we have $|C \cap \{x, y\}| = 1$ due to Lemma~\ref{lemma:weight_reduces_by_1}. Thus by Equation~$(\ref{equation:rr_1:reduced_node_weight_function})$, we have $w^*(C) = w(C) - (z(x, y) - 1)$.
\end{claimproof}

\begin{claim}
\label{claim:MWVC_2}
For each $C \in \mathcal{C}(G, w^*)$ we have $w(C) = w^*(C) + (z(x, y) -1)$.
\end{claim}
\begin{claimproof}
The function~$z^* \colon E(G) \to \mathbb{N}$ is a maximum~$w^*$-matching of~$(G, w^*)$ with~$z^*(\{x, y\}) > 0$ due to Claim~\ref{claim:rr_1:new_small_weight_reduction}. Therefore by~Lemma~\ref{lemma:weight_reduces_by_1}  we have~$|C \cap \{x, y\}| = 1$, implying~$w(C) = w^*(C) + (z(x, y) - 1)$ due to  Equation~$(\ref{equation:rr_1:reduced_node_weight_function})$.
\end{claimproof}

Using these claims, we prove Theorem~\ref{theorem:MWVC_1_remains_the_same}. Note that $C$ is a vertex cover of $(G, w)$ if and only if $C$ is a vertex cover of $(G, w^*)$ because the underlying bipartite graph $G$ remains the same before and after applying Reduction Rule~\ref{rr:twoedges}; only the weights change during the reduction rule.

To show $\mathcal{C}(G, w) \subseteq \mathcal{C}(G, w^*)$, choose any $C \in \mathcal{C}(G, w)$. Assume for a contradiction that $C \notin \mathcal{C}(G, w^*)$, and pick an arbitrary $C' \in \mathcal{C}(G, w^*)$. Consequently, $w^*(C') < w^*(C)$. Also note that $w^*(C) = w(C) - (z(x, y) - 1)$, and $w^*(C') = w(C') - (z(x, y) - 1)$ due to Claim~\ref{claim:MWVC_1} and  Claim~\ref{claim:MWVC_2}, respectively. These  collectively imply that $w(C') < w(C)$, a contradiction to the fact that $C \in \mathcal{C}(G, w)$.

To show that $ \mathcal{C}(G, w^*) \subseteq  \mathcal{C}(G, w)$, choose any $C \in  \mathcal{C}(G, w^*)$. Assume for a contradiction that $C \notin  \mathcal{C}(G, w)$, and pick an arbitrary $C' \in  \mathcal{C}(G, w)$. Consequently, $w(C') < w(C)$. Moreover, $w(C') = w^*(C') + (z(x, y) - 1)$ and $w (C) = w^*(C) + (z(x, y) - 1)$ by Claim~\ref{claim:MWVC_1} and Claim~\ref{claim:MWVC_2}, respectively. These  collectively imply that $w^*(C') < w^*(C)$, a contradiction to the fact that $C \in \mathcal{C}(G, w^*)$.

This concludes the proof of Theorem~\ref{theorem:MWVC_1_remains_the_same}.
\end{proof}

Next we \mic{present} the following lemma which gives a characterization of vertices which have large weights after exhaustive application of  Reduction Rule~\ref{rr:twoedges}. 
\begin{lemma}
\label{lemma:high_node_weighted_vertex_is_not_saturated}
Let~$(G, w)$ be a node-weighted bipartite graph with a maximum~$w$-matching $z \colon E(G) \to \mathbb{N}$. If Reduction Rule~\ref{rr:twoedges} is not applicable on $(G, w)$ with respect to $z$, and there is a vertex $v \in V(G)$ with $w(v) > |V(G)| - 1$, then $w(v) > z(E_v)$.
\end{lemma}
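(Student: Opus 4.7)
The plan is to observe that the hypothesis ``Reduction Rule~\ref{rr:twoedges} is not applicable on $(G,w)$ with respect to $z$'' forces the maximum $w$-matching $z$ to be $\{0,1\}$-valued, and then to combine this with a trivial degree bound to derive the desired inequality.

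First I would unpack the condition. Reduction Rule~\ref{rr:twoedges} triggers precisely when there exists some edge $e \in E(G)$ with $z(e) > 1$. So if the rule is not applicable with respect to $z$, then $z(e) \le 1$ for every edge $e \in E(G)$. Second, I would bound $z(E_v)$ from above by the number of edges incident on $v$: since each summand is at most $1$, we have
\[
z(E_v) \;=\; \sum_{e \in E_v} z(e) \;\le\; |E_v|.
\]
Because $G$ is a simple graph, $v$ has at most $|V(G)|-1$ neighbors, hence $|E_v| \le |V(G)|-1$.

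Chaining these inequalities with the hypothesis $w(v) > |V(G)|-1$ yields $w(v) > |V(G)|-1 \ge |E_v| \ge z(E_v)$, which is exactly the conclusion. There is no hard obstacle: the statement is essentially a bookkeeping consequence of ``no edge carries more than one unit of matching.'' The only thing to be a little careful about is that the bound $|E_v| \le |V(G)|-1$ uses simplicity of $G$, which is part of the standing assumption in the section. No appeal to Theorem~\ref{theorem:Weighted_version_of_Konig's_theorem} or to the other lemmas in the section is needed for this step; it is a purely local argument about the matching values on edges incident to $v$.
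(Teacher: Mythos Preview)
Your proposal is correct and matches the paper's own proof essentially verbatim: both observe that non-applicability of Reduction Rule~\ref{rr:twoedges} forces $z(e)\le 1$ on every edge, bound $z(E_v)$ by $|E_v|\le |V(G)|-1$ using simplicity, and chain this with the hypothesis $w(v)>|V(G)|-1$.
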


\begin{proof}
Since Reduction Rule~\ref{rr:twoedges} is not applicable with respect to~$z$, we have~$z(\{u, v\}) \leq 1$ for each vertex $u \in N_G(v)$. As~$|N_G(v)| \leq  |V(G)| - 1$, we have~$z(E_v) \leq |V(G)| - 1$. By assumption we have~$w(v) > |V(G)| - 1$, and hence $w(v) > z(E_v)$. 
\end{proof}

Based on the above lemma we give the following reduction rule which will further reduce the weight of vertices having weight larger than $|V(G)|$.

\begin{reductionrule}
\label{rr:D}
If $z \colon E(G) \to \mathbb{N}$ is a maximum $w$-matching of~$(G,w)$ with~$w(v) > |V(G)|$ for a vertex~$v$, and Reduction Rule~\ref{rr:twoedges} is not applicable with respect to~$z$, then obtain a new weight function $w^*\colon V(G) \to \mathbb{N}_+$ from $w$ as follows.
\begin{equation}
\label{equation:rr_2:reduced_node_weight_function}
    w^*(x) = \begin{cases}
    |V(G)| & \mbox{if $x = v$}\\
    w(x) & \mbox{otherwise.}
    \end{cases}
    \end{equation}
\end{reductionrule}

We continue by proving the correctness of Reduction Rule~\ref{rr:D}.

\begin{theorem}
\label{theorem:MWVC_2_remains_the_same}
If $(G,w)$ is reduced to $(G,w^*)$ by Reduction Rule~\ref{rr:D}, then $\mathcal{C}(G,w) = \mathcal{C}(G,w^*)$.
\end{theorem}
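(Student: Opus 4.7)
\textbf{Proof proposal for Theorem~\ref{theorem:MWVC_2_remains_the_same}.} The plan is to show that the designated vertex~$v$ belongs to no minimum-weight vertex cover under either weight function, so the fact that~$w$ and~$w^*$ agree on~$V(G)\setminus\{v\}$ will immediately give~$\mathcal{C}(G,w)=\mathcal{C}(G,w^*)$ via Theorem~\ref{theorem:Weighted_version_of_Konig's_theorem}.

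First I would invoke Lemma~\ref{lemma:high_node_weighted_vertex_is_not_saturated}: since Reduction Rule~\ref{rr:twoedges} is not applicable with respect to~$z$ and $w(v)>|V(G)|>|V(G)|-1$, we obtain $w(v)>z(E_v)$. By Lemma~\ref{lemma:weight_stays_the_same}, no $C\in\mathcal{C}(G,w)$ contains~$v$. This handles one side.

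Next I would argue that~$z$ is also a \emph{maximum} $w^*$-matching of~$(G,w^*)$. The key observation is that $w^*(u)\le w(u)$ for every~$u\in V(G)$ (with equality except at~$v$), so any $w^*$-matching is automatically a $w$-matching, and hence the optimum value cannot exceed~$z(E(G))$. Meanwhile~$z$ itself remains feasible for~$w^*$: for $u\ne v$ nothing changes, and for~$v$ we have $z(E_v)\le|V(G)|-1<|V(G)|=w^*(v)$, using again the inapplicability of Reduction Rule~\ref{rr:twoedges} to bound~$z(E_v)$. Hence~$z$ attains the maximum for the new weight function. Moreover, this same inequality gives $w^*(v)>z(E_v)$, so Lemma~\ref{lemma:weight_stays_the_same} applied to~$(G,w^*)$ yields that no $C\in\mathcal{C}(G,w^*)$ contains~$v$ either.

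With both sides shown to avoid~$v$, the conclusion is immediate: since $w(u)=w^*(u)$ for every $u\ne v$, any vertex cover~$C$ with $v\notin C$ satisfies $w(C)=w^*(C)$. Combined with Theorem~\ref{theorem:Weighted_version_of_Konig's_theorem}, which tells us that the minimum $w$- and $w^*$-weight of a vertex cover both equal $z(E(G))$, we conclude $C\in\mathcal{C}(G,w)\iff C\in\mathcal{C}(G,w^*)$. The only mildly subtle step is verifying that~$z$ stays optimal under~$w^*$, which is why the threshold~$|V(G)|$ in the reduction is chosen to match the bound $z(E_v)\le|V(G)|-1$ guaranteed after exhaustive application of Reduction Rule~\ref{rr:twoedges}.
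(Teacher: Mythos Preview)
Your proof is correct and follows essentially the same approach as the paper: both establish that $v$ lies in no optimal cover under either weight function (via Lemmas~\ref{lemma:high_node_weighted_vertex_is_not_saturated} and~\ref{lemma:weight_stays_the_same}) after verifying that $z$ remains a maximum $w^*$-matching. Your concluding step, invoking Theorem~\ref{theorem:Weighted_version_of_Konig's_theorem} to equate the two optimum values with $z(E(G))$ directly, is slightly cleaner than the paper's contradiction argument, and you make explicit that $z$ stays feasible for~$w^*$, a point the paper leaves implicit.
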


\begin{proof}
First we see the following claim which we will use later in order to prove the theorem.
\begin{claim}
\label{claim:rr_2:new_small_node_weight_reduction}
If~$z$ is a maximum $w$-matching of~$(G,w)$ and~$w(v) > z(E_v)$, then~$z$ is also a maximum $w^*$-matching.
\end{claim}

\begin{claimproof}
Assume for a contradiction that $z$ is not a maximum $w^*$-matching, and let there be a function $\Tilde{z} \colon E(G) \to \mathbb{N}$ such that $\Tilde{z}$ is a maximum $w^*$-matching of $(G, w)$. Consequently, $\Tilde{z}(E(G)) > z(E(G))$. Observe that every $w^*$-matching is also a $w$-matching because any matching which satisfies all the constraints under a smaller weight function also satisfies all the constraints under a larger weight function.
Thus $\Tilde{z}$ is also a maximum $z$-matching of $(G, w)$ with $\Tilde{z}(E(G)) > z(E(G))$, a contradiction to the fact that $z$ is a maximum $w$-matching of $(G, w)$.   
\end{claimproof}

Next we prove the following two claims which will be used later in order to prove the theorem.
\begin{claim}
\label{claim:MWVC_3}
For each $C \in \mathcal{C}(G, w)$ we have $w^*(C) = w(C)$.
\end{claim}

\begin{claimproof}
Directly from Lemma~\ref{lemma:weight_stays_the_same} and Equation~$(\ref{equation:rr_2:reduced_node_weight_function})$.
\end{claimproof}

\begin{claim}
\label{claim:MWVC_4}
For each $C \in \mathcal{C}(G, w^*)$ we have $w(C) = w^*(C)$.
\end{claim}

\begin{claimproof}
Observe that in $(G, w^*)$ we have $w^*(v) = |V(G)| > |V(G)| - 1$, implying $w^*(v) > z(E_v)$ due to Lemma~\ref{lemma:high_node_weighted_vertex_is_not_saturated}. Which implies that $v \notin C$ due to Lemma~\ref{lemma:weight_stays_the_same}, and hence by Equation~$(\ref{equation:rr_2:reduced_node_weight_function})$, we have $w(C) = w^*(C)$.
\end{claimproof}

Now we proceed to prove Theorem~\ref{theorem:MWVC_2_remains_the_same}. First note that $C$ is a vertex cover of $(G, w)$ if and only if $C$ is a vertex cover of $(G, w^*)$ because the underlying graph $G$ does not change when the Reduction Rule~\ref{rr:D} is applied, only the weight of vertex $x$ changes. 

To show $\mathcal{C}(G, w)\subseteq \mathcal{C} (G, w^*)$, let $C \in \mathcal{C}(G, w)$. Assume for a contradiction that $C \notin \mathcal{C}(G, w^*)$, and consider an arbitrary $C' \in \mathcal{C}(G, w^*)$. Consequently, $w^*(C') < w^*(C)$. Furthermore, we have $w^*(C) = w(C)$ and $w(C') = w^*(C')$ by Claim~\ref{claim:MWVC_3} and \ref{claim:MWVC_4}, respectively. Which collectively imply that $w(C') < w(C)$, a contradiction to that fact that $C$ is a minimum-weight vertex cover of $(G, w)$.

To show $\mathcal{C}(G, w^*)\subseteq \mathcal{C} (G, w)$, let $C \in \mathcal{C}(G, w^*)$. Assume for a contradiction that $C \notin \mathcal{C}(G, w)$, and consider an arbitrary $C' \in \mathcal{C}(G, w)$. Consequently, $w(C') < w(C)$, furthermore, we have $w^*(C') = w(C')$ and $w(C) = w^*(C)$ due to Claim~\ref{claim:MWVC_3} and \ref{claim:MWVC_4}, respectively. Which collectively imply that~$w^*(C') < w^*(C)$, a contradiction to that fact that $C$ is a minimum-weight vertex cover of $(G, w^*)$.

This concludes the proof of Theorem~\ref{theorem:MWVC_2_remains_the_same}.
\end{proof}

Combining Reduction Rule~\ref{rr:twoedges} and Reduction Rule~\ref{rr:D}, we have the main result of this section.

\bipartiteWeightStatement*

\begin{proof}
\bmp{The algorithm starts by computing a maximum $w$-matching~$z$ in strongly polynomial time, using Theorem~\ref{theorem:b-matching_computation}. 

As long as Reduction Rule~\ref{rr:twoedges} can be applied with respect to~$z$ to some edge~$\{x,y\}$, we update the weight function~$w$ as indicated by Equation~\eqref{equation:rr_1:reduced_node_weight_function} and update~$z$ as indicated by Claim~\ref{claim:rr_1:new_small_weight_reduction}, so that~$z$ again becomes a maximum $w$-matching for the updated weight function~$w$. As an application of the rule to edge~$\{u,v\}$ reduces the value of the edge to~$1$, the rule can be applied at most~$|E(G)|$ times. By Theorem~\ref{theorem:MWVC_1_remains_the_same}, the set of minimum-weight vertex covers is preserved by this process. This phase terminates at the point that we can no longer apply Reduction Rule~\ref{rr:twoedges} for the current maximum $w$-matching~$z$.

Next, we exhaustively apply Reduction Rule~\ref{rr:D} with respect to~$z$. As a consequence, we bound the weight of each vertex by at most $|V(G)| = n$. In the worst case, Reduction Rule~\ref{rr:D} is applied at most $n$ times, once for each vertex. By Theorem~\ref{theorem:MWVC_2_remains_the_same}, the set of minimum-weight vertex covers is preserved by this phase. The resulting weighted graph is given as the output of the algorithm.}
\end{proof}

The following lemma shows that the weight-reduction to a value in the range~$\{1, \ldots, n\}$ is best-possible.

\begin{lemma}
\label{lemma:weight_reduction_is_tight}
For each $n \geq 1$, there exists a node-weighted bipartite graph~$G_n$ on~$n$ vertices with weight function~$w \colon V(G_n) \to \mathbb{N}_+$ such that for all weight functions $w' \colon V(G_n) \to \mathbb{N}_+$ with~$\mathcal{C}(G_n, w) = \mathcal{C}(G_n, w')$ we have:
\[
\max\limits_{v \in V(G_n)} (w'(v)) \geq |V(G_n)| = n.
\]
\end{lemma}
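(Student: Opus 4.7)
I would take $G_n$ to be the star $K_{1,n-1}$, which is bipartite, has exactly $n$ vertices, and splits into a center $c$ and leaves $\ell_1,\ldots,\ell_{n-1}$. I would set the weight function $w$ by $w(c)=n$ and $w(\ell_i)=1$ for every $i\in[n-1]$. Since every edge is incident to $c$, the inclusion-minimal vertex covers of $G_n$ are exactly $\{c\}$ and $L:=\{\ell_1,\ldots,\ell_{n-1}\}$; any other vertex cover properly contains one of these. Hence a minimum-weight vertex cover is whichever of $\{c\}$ and $L$ has smaller total weight (or both, in case of a tie).

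Under $w$ we have $w(\{c\})=n$ and $w(L)=n-1$, so the unique minimum-weight vertex cover is $L$, i.e.\ $\mathcal{C}(G_n,w)=\{L\}$. Now suppose $w'\colon V(G_n)\to\mathbb{N}_+$ is any weight function with $\mathcal{C}(G_n,w')=\mathcal{C}(G_n,w)=\{L\}$. Then $L$ is the strictly best choice among the two candidate vertex covers, so
\[
  w'(c)=w'(\{c\})>w'(L)=\sum_{i=1}^{n-1}w'(\ell_i).
\]
Because $w'$ takes values in $\mathbb{N}_+$, each $w'(\ell_i)\ge 1$, so $w'(L)\ge n-1$ and therefore $w'(c)\ge n$. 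This gives $\max_{v\in V(G_n)}w'(v)\ge n=|V(G_n)|$, as desired.

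The degenerate cases $n=1$ and $n=2$ (where $K_{1,n-1}$ is a single vertex or a single edge) should be checked separately, but both are immediate: for $n=1$, any positive integer weight is $\ge 1$; for $n=2$, the graph is a single edge $\{c,\ell_1\}$ and $\mathcal{C}(G_2,w)=\{\{\ell_1\}\}$ forces $w'(c)>w'(\ell_1)\ge 1$, i.e.\ $w'(c)\ge 2$. There is no real obstacle in the argument — the only place where care is needed is ensuring that the ``$+1$'' gap between strict inequality and equality of integers propagates correctly to yield the bound $n$ rather than $n-1$; this is precisely why $L$ is chosen as the \emph{unique} minimum (rather than having $\{c\}$ and $L$ tie), which is what lets us conclude $w'(c)\ge w'(L)+1\ge n$.
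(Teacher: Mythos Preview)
Your proposal is correct and follows essentially the same approach as the paper: take $G_n$ to be the star $K_{1,n-1}$ with center weight $n$ and leaf weights $1$, observe that the unique minimum-weight vertex cover is the set of leaves, and deduce that any equivalent weight function must give the center weight at least $n$. The only cosmetic difference is that you argue directly via $w'(c)>w'(L)\ge n-1$, whereas the paper frames it as a contradiction from assuming $w'$ has range $[n-1]$.
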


\begin{claimproof}
For given $n \geq 1$, let $G_n$ be the star graph on $n$ vertices with~$n-1$ leaves. The weight function $w\colon V(G_n) \to \mathbb{N}_+$ is defined as follows.
\begin{equation}
    w(v) = \begin{cases}
    n  & \mbox{if $v$ is the center vertex}\\
    1 & \mbox{otherwise}
    \end{cases}
\end{equation}
As a vertex cover for~$G_n$ has to contain the center vertex (total weight~$n$) or all the~$n-1$ leaves (total weight~$n-1$), the collection $\mathcal{C}(G_n, w)$ contains a single set containing all $n-1$ leaf nodes of $(G_n , w)$. The unique minimum-weight vertex cover of~$G_n$ has weight~$n-1$.

Now, assume for a contradiction that there is a weight function $w'\colon V(G_n) \to [n-1]$ with $\mathcal{C}(G_n, w) = \mathcal{C}(G_n, w')$. Since $w'$ assigns an integer weight between 1 and $n-1$ to each vertex of $G_n$, the weight of center node is at most $n-1$, whereas the weight assigned to each leaf node must be at least 1. Thus, the singleton set containing the center node is a minimum-weight vertex cover of $(G_n, w')$, whereas it is not for $(G_n, w)$. This contradicts the fact that $\mathcal{C}(G_n, w) = \mathcal{C}(G_n, w')$.
\end{claimproof}

Lemma~\ref{lemma:weight_reduction_is_tight} implies that the bound achieved by Theorem~\ref{theorem:bipartiteweightreduction} is essentially tight for infinitely many node-weighted bipartite graphs.\par

\subsection{Preserving the relative weight of solutions}

Suppose that we want a stronger guarantee for weight compression, so that we
 keep the information for each pair of inclusion-minimal vertex covers which one of them is lighter.
It turns out that with this requirement,
one cannot compress the weights to such a small range as before.
We provide a lower bound on the number of non-equivalent weight functions, which implies
that cannot encode weights with a small number of bits.
We begin with introducing the notion of a~threshold function.

\begin{definition}[Threshold function]
An~$n$-bit threshold function is a Boolean function $f \colon \{0, 1\}^n \to \{0, 1\}$ \mic{for which there exist a} weight vector~$w_f \in {\mathbb{Z}}^n$ and a threshold~$t_f \in \mathbb{Z}$ such that for any given input~$X \in \{0, 1\}^n$, we have:
\begin{equation}
    f(X) = \begin{cases}
    1  & \mbox{if $\sum_{i=1}^{n} w_f[i] X[i] \geq t_f$}\\
    0 & \mbox{otherwise.}
    \end{cases}
\end{equation}
\mic{We say that such a threshold function $f$ is induced by the pair $(w_f, t_f)$.}
Two threshold functions~$f$ and~$g$ are \emph{distinct} if there exists~$X \in \{0, 1\}^n$ such that~$f(X) \neq g(X)$.
\end{definition}
Next, we look at the following theorem which gives a bound on the number of threshold functions. 
\begin{theorem}[\cite{Irmatov20}]
\label{theorem:BoundOnThresholdFunctions}
There exists a constant $\alpha > 0$ such that for each~$n \geq 1$ the number of distinct~$n$-bit threshold functions is at least~$2^{\alpha n^2}$.
\end{theorem}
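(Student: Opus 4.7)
The plan is to reduce the counting of $n$-bit threshold functions to counting full-dimensional cells of a specific hyperplane arrangement, and then lower bound the cell count by an explicit construction. Specifically, parameterize threshold functions by their weight-threshold pairs $(w, t) \in \mathbb{R}^{n+1}$ (allowing real coefficients, which can be rationalized without changing the induced function). For each Boolean input $x \in \{0,1\}^n$, define the hyperplane $H_x = \{(w,t) \in \mathbb{R}^{n+1} : \sum_i w_i x_i - t = 0\}$. Two parameter points induce the same threshold function exactly when they lie in the same open cell of the arrangement $\mathcal{H} = \{H_x : x \in \{0,1\}^n\}$, so the quantity we want to lower bound is the number of full-dimensional cells of $\mathcal{H}$.

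As a first sanity check I would establish the matching upper bound $2^{O(n^2)}$ via Zaslavsky's formula: an arrangement of $N$ hyperplanes in $\mathbb{R}^d$ has at most $\sum_{i=0}^d \binom{N}{i}$ cells, and with $N = 2^n$ and $d = n+1$ this gives $2^{O(n^2)}$. The harder direction, the lower bound $2^{\Omega(n^2)}$, is exactly what the theorem asks. The main obstacle is that our $2^n$ hyperplanes are far from generic: all of them pass through the origin after homogenization, and many sub-families lie on common lower-dimensional flats. Consequently Zaslavsky's formula cannot be run in reverse, and we cannot simply count sign patterns on low-Hamming-weight inputs either, since the count of distinct orderings of the pair-sums $w_i+w_j$ is only $2^{O(n \log n)}$.

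To overcome this I would exhibit $2^{\Omega(n^2)}$ pairwise distinct threshold functions explicitly, using inputs of Hamming weight close to $n/2$. A probabilistic strategy is to sample a random weight vector $w \in \mathbb{R}^n$ from a suitable distribution (e.g., coordinates chosen uniformly from a sufficiently long arithmetic progression, with an added tiny generic perturbation to break ties) and study the sign pattern of $(w \cdot x - t)_{x \in S}$, where $S$ is the middle slice $\{x : |x| = \lfloor n/2 \rfloor\}$. Via an anti-concentration or entropy argument, one shows that as $w$ ranges over the sampling support, the induced sign patterns on $S$ attain $2^{\Omega(n^2)}$ different values. Since distinct sign patterns on $S$ automatically correspond to distinct threshold functions on all of $\{0,1\}^n$, this yields the desired lower bound.

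The step I expect to be the main obstacle is precisely that entropy/anti-concentration argument: rigorously showing that strongly correlated linear forms $w \cdot x$ for $x \in S$ nevertheless give rise to quadratically many bits of information in their joint sign pattern. This typically requires a delicate combinatorial analysis of hyperplane incidences, or tools from the analysis of Boolean functions such as bounds on the number of linear threshold functions restricted to slices of the cube. I would aim to adapt the argument of Irmatov directly, treating his precise $2^{\alpha n^2}$ bound as a near-black-box once the reduction to counting cells of $\mathcal{H}$ on the middle slice is established; the role of our proof sketch is to fix the right reduction and leave the fine cell-count to the cited combinatorial machinery.
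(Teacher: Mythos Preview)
The paper does not prove this theorem at all: it is quoted from~\cite{Irmatov20} and used as a black box in the proof of Theorem~\ref{theorem:vclowerbound2}. There is therefore no ``paper's own proof'' to compare your proposal against.

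Your sketch is a reasonable outline of the standard cell-counting viewpoint on threshold functions, and your identification of the main obstacle is accurate: the $2^n$ hyperplanes are highly non-generic, so the easy direction of Zaslavsky gives only the upper bound, and the lower bound genuinely requires the delicate combinatorics of Irmatov (or the earlier partial results of Zuev and others). However, your proposal does not actually supply that argument; by the end you explicitly state that you would ``treat his precise $2^{\alpha n^2}$ bound as a near-black-box.'' In effect, your proof reduces to citing the same result the paper cites. The middle-slice probabilistic approach you gesture at is plausible in spirit but, as stated, is not a proof: you have not specified a distribution, an entropy bound, or how to pass from anti-concentration of individual linear forms to $\Omega(n^2)$ bits of joint entropy in the sign pattern, and the naive counts you mention (orderings of pair-sums, etc.) only give $2^{O(n\log n)}$.

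So: there is no discrepancy with the paper, because the paper offers no proof. Your write-up is an honest framing of the problem but stops short of an independent argument; if the intent is simply to invoke the result, a one-line citation suffices, and if the intent is to reprove it, the hard step remains entirely open in your sketch.
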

Before we proceed further we remark that the above bound is originally given for threshold functions with \emph{real valued} weights and threshold, with~$-1/+1$ coefficients, but without loss of generality the same bound also holds if we restrict weights and threshold to only take \emph{integral values} and use~$0/1$-coefficients~\cite[page 67]{BabaiHPS10}.
\bmp{
\begin{definition}[\sk{Vertex-cover equivalent functions}]
Let~$G$ be an undirected graph and let~$w, w' \colon V(G) \to \mathbb{N}_+$ be weight functions. The weight functions~$w$ and~$w'$ are \emph{vertex-cover equivalent} if for all pairs of inclusion-minimal vertex covers~$S_1, S_2 \subseteq V(G)$ of~$G$, the following equivalence holds:
\begin{equation*}
w(S_1) \leq w(S_2) \Leftrightarrow w'(S_1) \leq w'(S_2).    
\end{equation*}
\end{definition}
}
Now we are all set to present the theorem which gives a lower bound on how much can we compress the weights of node-weighted bipartite graphs, if we want to preserve the order among \mic{all the} vertex covers. 
\vcThresholdCountState*
\begin{proof}
\sk{Assume for a contradiction that for each~$n\geq 1$ and for each node-weighted bipartite graph~$(G_n, w)$ with weight function~$w: V(G_n) \to \mathbb{N}_+$, there exists a small weight function $w'$
which assign weights from the range~$\{1, 2 , \cdots, 2^{o(n)}\}$ while maintaining the property that~$w$ and~$w'$ are vertex-equivalent.}

At a high level, we give an injective (one-to-one) function from the set of threshold functions to the set of equivalence classes (with respect to vertex-cover equivalence) of  weight functions on bipartite graphs.
We further show that the number of different node-weighted bipartite graphs with the small weight function is strictly smaller compared to the number of threshold functions, which gives a contradiction. 

More formally, \mic{let $G_n$ be a $2(n+1)$-vertex bipartite graph with the set of vertices $v_1, \dots v_{n+1}, v'_1, \dots, v'_{n+1}$ and $n+1$ edges of the form $v_iv'_i$.
Given a weight vector~$w \in {\mathbb{Z}}^n$ and threshold~$t \in \mathbb{Z}$, we construct a weight function $h_{w,t} \colon V(G_n) \to \mathbb{N}_+$ as follows.}
\begin{enumerate}
    \item For each~$i \in [n]$, the weight of vertex~$v_i$ is~$w[i] + c$, where~$c = \left\lvert {\min\limits_{i \in [n]}(w[i], t)} \right\lvert + 1 $. 
    
    \item The weight of vertex~$v_{n+1}$ is set to~$t + c$.

    \item For each~$i \in [n+1]$, the weight of vertex~$v'_i$ is set to~$c$.
\end{enumerate}
We have added the extra term~$c$ to the weight of each vertex to ensure that each vertex has positive weight.

In the following claim we show how an input~$X \in \{0, 1\}^n$ to~$n$-bit threshold function is used to read out a vertex cover of~$G_n$.
\begin{claim}
\label{claim:thresholdtovcs}
For a given~$X \in \{0, 1\}^n$ 
there exist inclusion-minimal vertex covers~$S_1$ and~$S_2$ of~$G_n$ such that for any
weight vector $w \in \zz^n$ and threshold $t \in \zz$
we have~$h_{w,t}(S_1) =  \sum_{i=1}^{n} w[i]  X[i] + (n + 1) c$ and~$h_{w,t}(S_2) = t + (n + 1)  c$, where~$c = \left\lvert {\min\limits_{i \in [n]}(w[i], t)} \right\lvert + 1 $.
\end{claim}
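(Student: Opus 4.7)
The plan is to exploit the fact that $G_n$ is a disjoint union of $n+1$ edges $\{v_i, v_i'\}$ for $i \in [n+1]$ (a perfect matching). Consequently, every inclusion-minimal vertex cover of $G_n$ is obtained by choosing exactly one endpoint from each of the $n+1$ edges, which gives us $2^{n+1}$ inclusion-minimal vertex covers to choose $S_1$ and $S_2$ from. We will define $S_1$ and $S_2$ by specifying this choice directly, and then verify the claimed weights by straightforward arithmetic.

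First, I would set
\[
S_1 := \{v_i \mid i \in [n], X[i] = 1\} \cup \{v_i' \mid i \in [n], X[i] = 0\} \cup \{v_{n+1}'\},
\]
which picks exactly one endpoint from each edge of $G_n$ (using $X[i]$ to choose $v_i$ versus $v_i'$ for the first $n$ edges, and $v_{n+1}'$ for the last edge). Hence $S_1$ is an inclusion-minimal vertex cover. Its weight under $h_{w,t}$ is
\[
\sum_{i \in [n],\, X[i]=1}(w[i]+c) \;+\; \sum_{i \in [n],\, X[i]=0} c \;+\; c \;=\; \sum_{i=1}^{n} w[i]\,X[i] + (n+1)c,
\]
as required.

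Next, I would set
\[
S_2 := \{v_i' \mid i \in [n]\} \cup \{v_{n+1}\},
\]
which again picks exactly one endpoint from each edge, so it is an inclusion-minimal vertex cover. Its weight under $h_{w,t}$ is
\[
\sum_{i \in [n]} c \;+\; (t+c) \;=\; t + (n+1)c,
\]
as required. Both computations are independent of the specific values of $w$ and $t$ beyond the definition of $h_{w,t}$, which matches the ``for any weight vector $w$ and threshold $t$'' quantification in the claim.

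There is no real obstacle here: the only thing to be careful about is that the shift $c$ contributes from exactly $n+1$ vertices in each of $S_1$ and $S_2$ (one per edge of the matching), so the $(n+1)c$ term appears identically in both weights, ensuring that the comparison $h_{w,t}(S_1) \le h_{w,t}(S_2)$ reduces precisely to $\sum_{i=1}^n w[i]X[i] \le t$. This last observation is what makes the claim useful for relating threshold functions to the vertex-cover ordering in the outer proof of Theorem~\ref{theorem:vclowerbound2}.
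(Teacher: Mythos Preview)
Your proposal is correct and essentially identical to the paper's own proof: the paper defines $S_1$ and $S_2$ by exactly the same endpoint-selection rule and verifies the weights in the same way. Your write-up is arguably cleaner, since you spell out the arithmetic for the $(n+1)c$ term explicitly.
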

\begin{claimproof}
Given an input~$X \in \{0, 1\}^n$, we construct~$S_1$ and~$S_2$  as follows.\\
\textbf{Construction of the vertex cover~$S_1$:}
\begin{enumerate}[i.]
    \item For each~$i \in [n]$, if~$X_i = 1$ then add vertex~$v_i$ to~$S_1$ to cover the edge~$e_i$ in the graph~$G_n$.
    
    \item For each~$i \in [n]$, if~$X_i = 0$ then add vertex~$v'_i$ to~$S_1$ to cover the edge~$e_i$ in the graph~$G_n$.
    
    \item Add vertex~$v'_{n+1}$ to~$S_1$ to cover the edge~$e_{n+1}$. 
\end{enumerate}
\textbf{Construction of the vertex cover~$S_2$:}
\begin{enumerate}[i.]
    \item For each~$i= [n]$, add vertex~$v'_i$ to $S_2$ to cover the edge~$e_i$.
    \item Add vertex~$v_{n+1}$ to $S_2$ to cover the edge~$e_{n+1}$.
\end{enumerate}
Note that in both~$S_1$ and~$S_2$, we select exactly one endpoint of each edge (and this selection only depends on the value of~$X$), therefore both~$S_1$ and~$S_2$ are minimal vertex covers of~$G_n$. Moreover, for each threshold function~$f$, it is easy to observe by the construction of~$(G_n, w)$ that~$w(S_1) = \sum_{i=1}^n w[i]  X[i] + (n + 1)  c$ and~$w(S_2) = t + (n + 1)  c$. Recall that the extra~$(n+1)c$ term in the addition comes from the fact that we have increased the node-weight of each vertex by~$c$ during the construction of~$(G_n, w)$ to ensure that each vertex has a positive node-weight.
\end{claimproof}

\mic{It remains to show that the given construction maps distinct threshold functions into vertex-cover non-equivalent weight functions, which provides a lower bound on the number of equivalence classes.}

\begin{claim}\label{lem:injection}
Suppose that $(w_f, t_f)$ and $(w_g, t_g)$ induce distinct threshold functions $f, g$.
Then the functions~$h_{w_f, t_f}$ and~$h_{w_g, t_g}$
are not vertex-cover equivalent with respect to $G_n$.
\end{claim}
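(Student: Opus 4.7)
The plan is to exploit a distinguishing input of $f$ and $g$ to produce a pair of inclusion-minimal vertex covers whose weight ordering is reversed between $h_{w_f,t_f}$ and $h_{w_g,t_g}$.

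Since $f \neq g$, first I would pick $X \in \{0,1\}^n$ with $f(X) \neq g(X)$; without loss of generality $f(X) = 1$ and $g(X) = 0$. By the defining property of a threshold function, this amounts to $\sum_{i} w_f[i]\, X[i] \ge t_f$ and $\sum_{i} w_g[i]\, X[i] < t_g$. I would then feed this very $X$ into Claim~\ref{claim:thresholdtovcs} to obtain two inclusion-minimal vertex covers $S_1, S_2$ of $G_n$; crucially, the construction of $S_1, S_2$ in that claim depends only on $X$ and on $G_n$ itself, not on which weight parameters are plugged in, so the same pair of vertex covers can be evaluated under both $h_{w_f,t_f}$ and $h_{w_g,t_g}$.

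The central computation is to subtract the weights of $S_1$ and $S_2$ using the two formulas given by Claim~\ref{claim:thresholdtovcs}. The additive $(n+1)c$ terms cancel (each function has its own constant $c$, but it appears in both $S_1$ and $S_2$), so
\[
h_{w_f,t_f}(S_1) - h_{w_f,t_f}(S_2) \;=\; \sum_{i} w_f[i]\, X[i] - t_f \;\ge\; 0,
\]
\[
h_{w_g,t_g}(S_1) - h_{w_g,t_g}(S_2) \;=\; \sum_{i} w_g[i]\, X[i] - t_g \;<\; 0.
\]
Hence $h_{w_f,t_f}(S_2) \le h_{w_f,t_f}(S_1)$ whereas $h_{w_g,t_g}(S_2) > h_{w_g,t_g}(S_1)$, which directly contradicts the equivalence condition applied to the pair $(S_2, S_1)$ of inclusion-minimal vertex covers.

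The only subtlety I anticipate is the borderline case $\sum_{i} w_f[i]\, X[i] = t_f$, in which $h_{w_f,t_f}(S_1) = h_{w_f,t_f}(S_2)$: comparing in the $\le$ direction from $S_1$ to $S_2$ would then fail to expose the flip, since equality on the $f$-side is compatible with the strict inequality on the $g$-side. Orienting the final contradiction at the pair $(S_2, S_1)$ as above resolves this uniformly, because $h(S_2) \le h(S_1)$ still holds for $h_{w_f,t_f}$ (with equality) but fails strictly for $h_{w_g,t_g}$. Apart from being careful about this orientation, no deeper obstacle arises; the real work was already done upstream in the construction of $h_{w,t}$ and in Claim~\ref{claim:thresholdtovcs}.
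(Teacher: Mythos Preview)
Your proposal is correct and follows essentially the same approach as the paper: pick a distinguishing input $X$, invoke Claim~\ref{claim:thresholdtovcs} to obtain the pair $S_1,S_2$, and compare $h_{w_f,t_f}$ and $h_{w_g,t_g}$ on that pair to witness the violation of vertex-cover equivalence. Your treatment of the orientation of the comparison (using the pair $(S_2,S_1)$ to handle the equality case uniformly) is in fact slightly more careful than the paper's, which leaves that point implicit.
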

\begin{claimproof}
Since~$f$ and~$g$ are distinct threshold functions, by definition there exists an input~$X \in \{0, 1\}^n$ such that~$f(X) \neq g(X)$, thereby one of the \bmp{following} must hold.
\begin{enumerate}[(a)]
    \item $\sum_{i=1}^n w_f[i]  X[i] \geq t_f$ and $\sum_{i=1}^n w_g[i]  X[i] < t_g$, or 
    \label{condition:one}
    
    \item $\sum_{i=1}^n w_f[i]  X[i] < t_f$ and $\sum_{i=1}^n w_g[i]  X[i] \geq t_g$.
    \label{condition:two}
\end{enumerate}
Now given such~$X$, we construct the vertex covers~$S_1, S_2 \subseteq V(G_n)$ of~$G_n$ using Claim~\ref{claim:thresholdtovcs}, such that~$h_{w_f, t_f}(S_1) = \sum_{i=1}^n w_f[i]  X[i] + (n + 1)  c_f$, $h_{w_f, t_f}(S_2) = t_f + (n + 1)  c_f$, $h_{w_g, t_g}(S_1) = \sum_{i=1}^n w_g[i]  X[i] + (n + 1)  c_g$, and~$h_{w_g, t_g}(S_2) = t_g  + (n + 1) c_g$.\\
Suppose w.l.o.g. that condition~$(\ref{condition:one})$ holds. Thus we have~$h_{w_f, t_f}(S_1) \geq h_{w_f, t_f}(S_2)$ whereas $h_{w_g, t_g}(S_1) < h_{w_g, t_g}(S_2)$, implying~$h_{w_f, t_f}$ and~$h_{w_g, t_g}$ are \sk{not vertex-cover equivalent}.
\end{claimproof}
\mic{
Let $\alpha$ be the constant from Theorem~\ref{theorem:BoundOnThresholdFunctions}.
Suppose that for each bipartite graph $G_n$ and weight function $h \colon V(G_n) \to \mathbb{N}_+$ there exists a vertex-cover equivalent function $h'\colon V(G_n) \to [2^{\frac{\alpha n}{3}}]$.
Then the number of distinct~vertex-cover equivalent functions on $G_n$ is~$2^{\frac{2\alpha n(n+1)}{3}}$, since there are~$2(n+1)$ vertices in~$G_n$ and each vertex can have~$2^{\frac{\alpha n}{3}}$ different weights.
By Claim~\ref{lem:injection} there is an injective mapping from $n$-bit threshold functions
to the vertex-cover equivalence classes on $G_n$ and
from Theorem~\ref{theorem:BoundOnThresholdFunctions} we know that the size of the first set is at least $2^{{\alpha n^2}}$. This leads to a contradiction with the assumption and
finishes the proof of Theorem~\ref{theorem:vclowerbound2}.}
\end{proof}
\end{document}